\newif\ifirmk\irmkfalse
\newcommand{\wasisdas}{{\sc Integrating Polchinski's equation with binary trees}}

\documentclass[11pt,reqno]{amsart} 
\usepackage[papersize={210mm,297mm},left=30mm,right=30mm,top=20mm,bottom=20mm]{geometry}
\newcommand{\randbem}[1]{\marginpar{\tiny\raggedright\wichtigst{\bf #1}}}
\usepackage{fancyhdr}
\usepackage{amssymb}
\usepackage{amsthm}
\usepackage{mathrsfs}
\usepackage{graphicx}

\usepackage{xcolor}
\definecolor{slategray}{RGB}{112,138,144}
\definecolor{hintergrundblau}{RGB}{0 ,43 ,54}
\definecolor{wichtigrot}{RGB}{220,50,47}
\definecolor{textgrau}{RGB}{131,148,150}
\definecolor{ultramarinblau}{RGB}{32,33,79}
\definecolor{kobaltblau}{RGB}{35,45,83}
\definecolor{navy}{RGB}{0 ,0,128}
\definecolor{leuchtgruen}{RGB}{0,181,26}
\definecolor{leuchtrot}{RGB}{255,77,6} 

\definecolor{firebrick}{RGB}{178,34,34}
\definecolor{darkorange}{RGB}{255,140,0}
\definecolor{darkgreen}{RGB}{0,100,0}
\definecolor{seagreen}{RGB}{46,139,87}
\definecolor{lightseagreen}{RGB}{32,178,170}
\definecolor{forestgreen}{RGB}{34,139,34}
\definecolor{midnightblue}{RGB}{25,25,112}
\definecolor{navyblue}{RGB}{0,0,128}
\definecolor{cornflowerblue}{RGB}{100,149,237}
\definecolor{mediumblue}{RGB}{0,0,205}
\definecolor{dimgray}{RGB}{105,105,105}
\definecolor{slategray}{RGB}{112,138,144}
\definecolor{Korrektur}{RGB}{0,0,205}

\usepackage[colorlinks=true]{hyperref}
\hypersetup{linkcolor=slategray,urlcolor=slategray,citecolor=slategray}

\newcommand{\E}{{\rm e}}

\newcommand{\dd}{{\rm d}}

\newcommand{\beq}{\begin{equation}}
\newcommand{\eeq}{\end{equation}}
\newcommand{\neeq}{\nonumber\end{equation}}
\newcommand{\bea}{\begin{eqnarray}}
\newcommand{\eea}{\end{eqnarray}}
\newcommand{\beast}{\begin{eqnarray*}}
\newcommand{\eeast}{\end{eqnarray*}}
\newcommand{\beal}{\begin{align}}
\newcommand{\eeal}{\end{align}}

\newcommand{\bbbone}{{\mathchoice {\rm 1\mskip-4mu l} {\rm 1\mskip-4mu l}    {\rm 1\mskip-4.5mu l} {\rm 1\mskip-5mu l}}}

\newcommand{\bN}{{\mathbb N}}

\newcommand{\de}{\delta}

\newcommand{\bGa}{{{\rm I}\kern-.16em \Gamma}}
\newcommand{\cV}{{\mathcal V}}

\newcommand{\abs}[1]{{\left\vert #1 \right\vert}}
\newcommand{\norm}[1]{{\left\Vert #1 \right\Vert}}

\newcommand{\sfrac}[2]{{\textstyle \frac{#1}{#2}}}

\newcommand{\wichtigst}[1]{{\color{red}#1}}

\usepackage{stmaryrd}   
\usepackage{enumitem}   
\usepackage{tikz}       
\usepackage{subcaption}
\usepackage{graphbox}

\usetikzlibrary{shapes.arrows}

\newcommand{\Z}{\mathbb{Z}}
\newcommand{\N}{\mathbb{N}}
\newcommand{\R}{\mathbb{R}}
\newcommand{\C}{\mathbb{C}}
\newcommand{\DD}{\mathcal{D}}

\newcommand{\set}[1]{\left\{#1\right\}}

\newcommand{\sgn}{\operatorname{sgn}}

\newcommand{\GL}{\operatorname{GL}}
\newcommand{\card}[1]{\##1}
\newcommand{\ordprod}[1]{\overset{\rightarrow}{\prod_{#1}}}
\newcommand{\discup}{\mathbin{\dot\cup}}

\newcommand{\vfork}{V_{\mathrm{fork}}}
\newcommand{\vleaf}{V_{\mathrm{leaf}}}
\newcommand{\leavesof}[2]{\mathcal{L}_{#1}\left(#2\right)}
\newcommand{\innerprod}[2]{\left(#1,#2\right)}

\newcommand{\G}{\mathcal{G}}
\newcommand{\Gr}{\mathfrak{G}}
\newcommand{\B}{\mathfrak{B}}
\newcommand{\T}{\mathfrak{T}}
\renewcommand\L{\mathfrak{L}}
\newcommand{\X}{\mathcal{X}}

\newcommand{\V}{\mathcal{V}}

\newcommand{\pdv}[2]{\frac{\partial #1}{\partial #2}}
\newcommand{\fdv}[2]{\frac{\delta #1}{\delta #2}}

\newcommand{\Diff}{\mathfrak{Diff}}

\theoremstyle{plain}
\newtheorem{theorem}{Theorem}
\newtheorem{lemma}[theorem]{Lemma}

\newtheorem{corollary}[theorem]{Corollary}

\theoremstyle{definition}
\newtheorem{definition}[theorem]{Definition}

\theoremstyle{remark}
\newtheorem{remark}[theorem]{Remark}
\newtheorem{notation}[theorem]{Notation}

\renewcommand{\subseteq}{\subset}
\renewcommand{\card}[1]{\abs{#1}}
\newcommand{\Mcov}{\mathbb{M}}

\begin{document}

\title[]{Integrating Polchinski's equation \\ by convergent binary tree expansions}
\author{Paul Obernolte, Manfred Salmhofer} 
\address{ 
Institut f\" ur Theoretische Physik, Universit\" at Heidelberg,
Philosophenweg~19, 69120 Heidelberg, Germany}

\email{P.Obernolte@thphys.uni-heidelberg.de,salmhofer@uni-heidelberg.de}

\date{\today}

\begin{abstract}
\noindent
We give a solution to Polchinski's equation for the Wilsonian effective action in terms of an expansion in binary trees, and prove that this expansion converges in fermionic field theories, provided the fermionic covariance has finite determinant and decay constants. A novel element of the proof are detailed combinatorial estimates for the number of leaf trees associated to binary trees. The method can be used on standard models of fermionic quantum field theory and quantum statistical mechanics.  
\end{abstract}

\maketitle

%

\section{Introduction}
\label{sec:intro}

Introduced in statistical mechanics in the study of critical phenomena, the Kadanoff-Wilson-Wegner renormalization group (RG) \cite{Kadanoff,Wilson1,Wilson2,Wegner} has become a standard in the theory of quantum fields and quantum many-body systems, both in theoretical physics and mathematics. It sets up a flow in a space of action functionals or Hamiltonians, which is parametrized by a scale, such as a length or an energy. The basic idea is that averaging over local fluctuations at a certain length scale defines a scale-dependent {\em effective action}, which contains all information about correlations at larger length scales, and that by varying the scale parameter one can understand how fundamental or `microscopic' models and emergent or `macroscopic' models are related. A specific model to be studied then corresponds to an initial condition, starting from which the flow tends to effective actions that characterize the long-distance correlations of the model. If, as is the case in many examples, the flow tends to an attractor, typically a fixed point, the corresponding fixed-point effective action determines the emergent model and its properties. 

Mathematically, the local averaging corresponds to a smoothing operation which also gives a natural way of regularizing quantum field theoretical models, and field-theoretic renormalization can be understood as a way of adjusting the initial condition of a Wilsonian RG flow in order to get a finite limit at the end of the flow. A particularly nice way of doing this in perturbative QFT was given by Polchinski \cite{polchinskiRenormalizationEffectiveLagrangians1984}. Following his seminal work, the method was simplified and extended to general renormalization conditions \cite{KKS}, and it has subsequently been applied to a large variety of models of QFT, both related to high-energy physics \cite{KeKoQED,KeKoZi,KoMue,HoStM} and to condensed-matter physics \cite{crg}. 

The averaging operation has been implemented in technically diverse ways, from the block spin transformation originally introduced by Kadanoff \cite{Kadanoff} to momentum and phase space variants. Mathematical proofs have mostly been done using discrete scale sequences, where the scale parameter (such as the above-mentioned length) is increased in steps, e.g.\ in a geometric progression. Formal theory and approximate calculations have very often been done using a continuous variation of the scale parameter. The mathematical proofs with discrete RG sequences include pioneering results like the construction of infrared $\phi^4$ theory in four dimensions \cite{GK1,FMRS1}, the first constructions of the two-dimensional Gross-Neveu model \cite{GK2,FMRS2}, and the construction of a two-dimensional Fermi liquid at zero temperature \cite{FKT}. 

In particular, Polchinski's equation has been studied in many cases and in different representations. Given a continuous flow, one can of course also derive from it a discrete ``flow'', i.e.\ a sequence of actions, simply as the values of the flowing effective action at a sequence of scales. In their seminal work, Brydges and Kennedy analyzed Polchinski's equation by PDE methods and derived an explicit solution as an expansion labelled by trees. They used it to give a new proof of convergence of Mayer expansions \cite{BryKe}. This was subsequently applied to fermions by Brydges and Wright \cite{BryWri}. Abdesselam and Rivasseau gave important generalizations and combinatorial perspectives in \cite{AR}. Disertori and Rivasseau used the integrated flow equation to construct the Gross-Neveu model \cite{DRGN} and the rotationally symmetric two-dimensional many-fermion system \cite{DR1,DR2}. 

In this paper we integrate Polchinski's equation by a binary tree expansion, which arises naturally because the nonlinearity in the differential equation is quadratic. The form it takes is a bit different from the Brydges-Kennedy formula, and indeed one can think of it as a continuous version of the Gallavotti-Nicol\`o expansion, which was introduced and used in the context of discrete scale decompositions and applied to proofs of perturbative renormalizability of four-dimensional scalar field theories \cite{GN}, and further developed and applied to quantum electrodynamics in \cite{FHRW}. 

For the case of fermionic field theories with regular covariances, we prove that our binary-tree expansion converges. Regularity of the covariance means that both its determinant bound and its decay constant are finite, and bounded independently of the volume. Tree expansions have been used to analyze fermionic theories in many works, in particular in the context of many-fermion systems \cite{BGPS,Pedra,BGM}. For a prototypical development of the technique see, e.g.\ \cite{SW}; bounds on determinant constants especially suitable for time-ordered covariances were given in \cite{PSUV}.  One novel aspect of our proof are combinatorial estimates for ``leaf trees'' that are induced by the binary trees arising in the iterative solution of the Polchinski equation, and by the action of Laplacians in field space in that context.

\section{Setup and main results}
\label{sec:setup-and-main-results}

\subsection{Grassmann algebra setup}

We will consider a fermionic QFT on a finite lattice. Take the lattice to be the torus $\Gamma = \epsilon \Z^d / L \Z^d$, where $d$ is the dimension, $\epsilon > 0$, and $L \gg 1$ such that $L / (2 \epsilon) \in \N$. After doing all computations, we want to take the continuum limit ($\epsilon \to 0$), as well as the thermodynamic limit ($L \to \infty$). In addition, let there be a finite (but non-empty) set $I$ of internal indices; these can be spin indices or other indices. Our field variables $\theta_X$ will then be indexed by $X \in \mathcal{X} = \Gamma \times I$ with $N=\#\mathcal{X}$. We consider the vector space $V = \R^\mathcal{X}$ and its Grassmann algebra
\begin{equation}
\bigwedge (V) = \bigoplus_{i=0}^N {\bigwedge}^i(V),
\end{equation}
which is generated by
\begin{align*}
\theta_{X_1} \wedge \dots \wedge \theta_{X_k}
\end{align*}
with $k \in \N$ and $X_1, \dots, X_k \in \mathcal{X}$. For the sake of simplicity, we will write $\theta_{X_1} \dots \theta_{X_k}$ for $\theta_{X_1} \wedge \dots \wedge \theta_{X_k}$ in the following. We have the commutation relation $\theta_X \theta_Y = -\theta_Y \theta_X$. 
We shall also consider Grassmann algebras that include source terms, i.e.\ algebras of the form $\bigwedge (V) \wedge \bigwedge (V)$, and denote the corresponding Grassmann generators by $\psi$.

We define a norm on $\bigwedge (V)$. 

\begin{definition}[$h$-norm on $\bigwedge (V)$]
\label{def:h-norm}
Let $h \in \R_{>0}$. A general element $\V$ of $\bigwedge (V)$ can be written as
\begin{equation}
\V(\Psi) = \sum_{m \geq 0} \int \dd^m{\underline{X}}\ v_m(\underline{X}) \psi^m(\underline{X}),
\end{equation}
where $\int \dd^m{\underline{X}} = \epsilon^{md} \sum_{\underline{X} \in \mathcal{X}^m}$, $\psi^m(\underline{X}) = \psi(X_1) \dots \psi(X_m)$, and every $v_m$ is an antisymmetric $m$-tensor. Then, $\norm{\V}_h$ shall be defined as
\begin{equation}
\norm{\V}_h = \sum_{m \geq 1} \abs{v_m} h^m,
\end{equation}
where $\abs{v_m}$ is the standard norm\footnote{For $k \in \N$, we use the notation $[k] = \set{1, \dots, k}$ and $[k]-1 = \set{0, \dots, k-1}$.}
\begin{equation}
\abs{v_m} = \max_{i \in [m]} \sup_{X_i \in \mathcal{X}} \int \prod_{j \in [m] \setminus \set{i}} \dd{X_j}\; \abs{v_m(X_1, \dots, X_m)}.
\end{equation}
\end{definition}

\begin{remark}
Note that the ``$h$-norm'' $\norm{.}_h$ is not actually a norm on $\bigwedge (V)$, but only a semi-norm, because $\norm{c}_h = 0$ for all $c \in \C = {\bigwedge}^0(V)$. However, $\norm{.}_h$ is a norm on the subspace
\begin{eqnarray}
\bigoplus_{i=1}^N {\bigwedge}^i(V) \subsetneq \bigwedge (V).
\end{eqnarray}
\end{remark}

We assume the action of our QFT to be of the form
\begin{align}
\label{eq:qft-action}
S(\Theta) = \frac{1}{2} (\Theta, Q \Theta) + \mathcal{V}(\Theta),
\end{align}
where $Q \in \GL_N(\R)$ is a skew-symmetric matrix, $\mathcal{V}$ is a potential, and the bilinear form $(\cdot,\cdot)$ is defined by:
\begin{align}
(\Theta, \Psi) = \int_\mathcal{X}{\dd{X} \; \theta_X \psi_X} = \epsilon^d \sum_{X \in \mathcal{X}}{\theta_X \psi_X}.
\end{align}
Now, we can define the generating function for the connected Green functions:
\begin{align}
W(\Psi) = -\log \int \DD{\Theta}\; e^{-S(\Theta) + (\Theta,\Psi)}.
\label{eq:generating-function}
\end{align}
We call $e^{-W(\Psi)}$ the \textit{partition function} of our theory. The functional integral in the previous expression is defined as
\begin{align}
\int \DD{\Theta} = \prod_{X \in \mathcal{X}} \int \dd{\theta_X},
\end{align}
where $\int \dd{\theta_X}$ is a Grassmann integral.\footnote{Note that the order of the integrals matters for the overall sign; hence, we need an order on $\mathcal{X}$. As introduced above, $\mathcal{X} = \Gamma \times I$, where $I$ is a set of internal indices. Usually, $\# I$ is even; this means that we only need an ordering of $I$ and then, the ordering of $\Gamma$ does not matter.} In our case, rather than studying $W(\Psi)$ it is more natural to study the Wilsonian effective potential defined by
\begin{equation}
\label{eq:wilsonian-effective-potential}
\G(\Psi) = - \log \int \dd{\mu_C}(\Theta)\; e^{-\mathcal{V}(\Theta+\Psi)} = - \log\ ({\mu_C} * e^{-\mathcal{V}})(\Psi),
\end{equation}
where $C = Q^{-1}$ and ${\mu_C}$ denotes the Grassmann Gaussian measure with covariance $C$,  $\dd \mu_C (\Theta) = \mathcal{N} e^{-\frac12 (\Theta, Q \Theta)} \prod_X \dd \Theta_X$ (with $\mathcal{N}$ the normalizing factor). If we want to highlight the dependence of $\mathcal{G}(\Psi)$ on $C$ and $\mathcal{V}$, we might also occasionally write $\mathcal{G}(\Psi,C,\mathcal{V})$ for it. By completing the square in the exponent, one easily proves that studying $W(\Psi)$ and $\G(\Psi)$ is equivalent.

We introduce the notation $\nabla = \frac{\de}{\de \Psi} = \epsilon^{-d} \pdv{}{\Psi}$, with $\pdv{}{\Psi}$ the standard Grassmann derivative,  and define the field space Laplacian as follows. For a skew-symmetric $D$,  
\beq
\Delta_{D} = \innerprod{\nabla}{D \nabla} 
=
\int_\mathcal{X} \dd{X} \int_\mathcal{X} \dd{Y} \; D (X,Y) \nabla_X \nabla_Y \; .
\eeq
With this, we then have \cite{msbook, Hesselberg}
\beq
\E^{-\G(\Psi) }
= 
\E^{\frac12 \Delta_C}  \; e^{-\mathcal{V}(\Psi)} \; .
\eeq
Because the Grassmann derivatives themselves form a Grassmann algebra of finite dimension, the exponential of the Laplacian truncates to polynomial, so this expression is well defined. Moreover, it makes obvious that the limit $C \to 0$ of the convolution exists and becomes the identity operator. Thus $\G(\Psi,0,\cV) = \cV(\Psi)$. 

We note that all the Laplacians we use in this paper are constant coefficient operators so they all commute: $\Delta_D \Delta_C = \Delta_C \Delta_D$ for all $C,D$. 

\subsection{Polchinski's equation}

In the framework of renormalization theory, it is natural to integrate degrees of freedom associated to different length or energy scales successively. Mathematically, this is done by decomposing a covariance (which, in absence of regularizations, typically has an integral kernel with singularities) into many regular pieces. The decomposition can be discrete or continuous; we are interested in the latter and write
\begin{equation}
C(s) = \int_0^s \dd{r}\; \dot{C}(r),
\end{equation}
where $s \in \R$ plays the role of the \textit{flow parameter}. Assume that $C(s)$ is a smooth function of $s$ and that every $\dot{C}(r)$ is skew-symmetric. Note that a dot will always represent a derivative with respect to the flow parameter. We will frequently use the following notation for $s,t \in \R$:
\begin{equation}
C_{t,s} = \int_s^t \dd{r}\; \dot{C}(r).
\end{equation}
Now, we can compute the Wilsonian effective potential based on the covariance matrix $C(s)$. This way, the Wilsonian effective potential also gains a dependence on the flow parameter $s$. We write $\mathcal{G}(s,\Psi) = \mathcal{G}(\Psi, C(s), \mathcal{V})$ for short; sometimes, we even write $\mathcal{G}(s) = \mathcal{G}(s,\Psi)$.
By taking the derivative of $\G(s)$ with respect to $s$, one gets Polchinski's equation \cite{polchinskiRenormalizationEffectiveLagrangians1984}
\begin{equation}
\label{eq:polchinski-equation}
\dot{\mathcal{G}}(s,\Psi) = \frac{1}{2}\Delta_{\dot{C}(s)} \mathcal{G}(s,\Psi) - \frac{1}{2} \left( \nabla \mathcal{G}(s,\Psi), \dot{C}(s) \nabla \mathcal{G}(s,\Psi) \right).
\end{equation}
Polchinski's equation is a first order ordinary differential equation (ODE) in the flow parameter $s$ on the finite-dimensional space $\bigwedge (V)$. Given an initial condition, it has a unique solution because Gaussian convolution is well-defined on $\bigwedge (V)$ whenever $C(s)$ is well-defined.

We define $C:\ [0,1] \to M_N(\R)$ in such a way that $C(0) = 0$. Then $\G(0, \Psi) = \V(\Psi)$ where $\V$ is the initial interaction of our model. Hence, our goal will be to find the Wilsonian effective potential as a solution of Polchinski's equation with this initial condition.

To derive our expansion, we start from the following integral version of Polchinski's initial value problem (see also \cite[Appendix~A]{kroschinskyMajorantMethodFermionic2025}). 

\begin{lemma}
\label{lem:integrate-polchinski}
Polchinski's Equation~\eqref{eq:polchinski-equation} with initial condition $\G(0,\Psi) = \V(\Psi)$ is equivalent to the integrated Polchinski equation
\begin{equation}
\label{eq:integrated-polchinski-equation}
\G(t, \Psi) = \G_0(t,\Psi) - \int_0^t \dd{s}\; e^{\frac{1}{2}\Delta_{C_{t,s}}} \cdot \frac{1}{2} \innerprod{\nabla \G(s,\Psi)}{\dot{C}(s) \nabla \G(s,\Psi)},
\end{equation}
where
\begin{equation}
\G_0(t,\Psi) = e^{\frac{1}{2} \Delta_{C_{t,0}}} \V(\Psi).
\end{equation}
\end{lemma}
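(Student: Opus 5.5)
This is a Duhamel (variation of constants) argument for a linear flow with a quadratic source term. The work is in the finite-dimensional space $\bigwedge(V)$, where all operators $e^{\frac12\Delta_D}$ are polynomials in commuting Grassmann derivatives. No analytic subtleties arise beyond smoothness in the flow parameter.

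The key object is the auxiliary quantity
\[
H(s,\Psi) = e^{-\frac12 \Delta_{C(s)}} \G(s,\Psi), \qquad s\in[0,t].
\]
All Laplacians commute and $\Delta_D$ is linear in $D$. Hence $e^{\frac12\Delta_{C_{t,s}}} = e^{\frac12\Delta_{C(t)}}e^{-\frac12\Delta_{C(s)}}$, because $C_{t,s}=C(t)-C(s)$. Moreover,
\[
\frac{\dd}{\dd s} e^{-\frac12\Delta_{C(s)}} = -\tfrac12 \Delta_{\dot C(s)}\, e^{-\frac12\Delta_{C(s)}}.
\]
This holds because $\Delta_{C(s)}$ and $\Delta_{\dot C(s)}$ commute and $s\mapsto \Delta_{C(s)}$ is smooth. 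Both exponentials are finite sums, so one can differentiate term by term.

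For the forward direction, assume $\G$ solves \eqref{eq:polchinski-equation} with $\G(0)=\V$. Differentiating $H$ gives
\[
\dot H(s) = e^{-\frac12\Delta_{C(s)}}\Bigl(\dot\G(s) - \tfrac12\Delta_{\dot C(s)}\G(s)\Bigr) = -\tfrac12\, e^{-\frac12\Delta_{C(s)}} \innerprod{\nabla \G(s)}{\dot C(s)\nabla\G(s)}.
\]
The linear term cancels exactly. Integrating from $0$ to $t$ and using $H(0)=\V$ (since $C(0)=0$) gives
\[
H(t) = \V - \tfrac12\int_0^t \dd s\, e^{-\frac12\Delta_{C(s)}}\innerprod{\nabla \G(s)}{\dot C(s)\nabla\G(s)}.
\]
Applying $e^{\frac12\Delta_{C(t)}}$ and using the semigroup identity above yields \eqref{eq:integrated-polchinski-equation}, with $\G_0(t)=e^{\frac12\Delta_{C_{t,0}}}\V$.

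For the converse, assume $\G$ is a continuous solution of \eqref{eq:integrated-polchinski-equation}. Setting $t=0$ gives $\G(0)=\V$. Rewrite the equation as
\[
\G(t)=e^{\frac12\Delta_{C(t)}}\Bigl[\V-\tfrac12\int_0^t e^{-\frac12\Delta_{C(s)}}\innerprod{\nabla\G(s)}{\dot C(s)\nabla\G(s)}\,\dd s\Bigr].
\]
The bracket is $C^1$ in $t$ by the fundamental theorem of calculus, since its integrand is continuous. Therefore $\G$ is differentiable. Differentiating with the product rule gives two terms: the derivative of the prefactor produces $\frac12\Delta_{\dot C(t)}\G(t)$, and the derivative of the bracket produces $-\frac12\innerprod{\nabla\G(t)}{\dot C(t)\nabla\G(t)}$. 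This is exactly \eqref{eq:polchinski-equation}.

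The main point needing care is the exponential-derivative formula and the factorization $e^{\frac12\Delta_{C_{t,s}}}=e^{\frac12\Delta_{C(t)}}e^{-\frac12\Delta_{C(s)}}$. Both rest on the commutativity of the constant-coefficient Laplacians noted earlier, and on nilpotency, which makes every exponential a polynomial. Grassmann signs cause no trouble here because $\Delta_D$ is an even operator. The equivalence should also be stated for solutions in the class of $\bigwedge(V)$-valued functions that are continuous in $s$ (respectively $C^1$ for the differential form). Uniqueness of solutions to the ODE, already noted in the paper, then identifies both with $\G(s)$.
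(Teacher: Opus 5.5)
Your proof is correct and is essentially the paper's Duhamel argument: your $H(s)=e^{-\frac12\Delta_{C(s)}}\G(s)$ is the paper's $e^{\frac12\Delta_{C_{t,s}}}\G(s)$ with the $s$-independent factor $e^{\frac12\Delta_{C(t)}}$ removed, and differentiating in $s$ and integrating over $[0,t]$ is exactly what the paper does. Your converse direction, where you differentiate the integrated equation in $t$ to recover Polchinski's equation and $\G(0)=\V$, is a useful addition, since the paper's proof only writes out the implication from the differential form to the integrated form.
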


\begin{proof}
Because $\frac{\dd}{\dd s} \Delta_{C_{t,s}} = -\Delta_{\dot C(s)}$ and $\Delta_{C_{t,s}}$ commutes with $\Delta_{\dot C(s)}$, 
\begin{equation}
\begin{split}
\G(t) - \E^{\frac12 \Delta_{C_{t,0}}} \G(0)
&=
\int_0^t \dd s \; \frac{\dd}{\dd s} 
\left[
\E^{\frac12 \Delta_{C_{t,s}}} \G(s)
\right]
\\
&=
\int_0^t \dd s \; \E^{\frac12 \Delta_{C_{t,s}}}
\left[
-\sfrac12 \Delta_{\dot C(s)} \G(s) + \dot \G(s)
\right]
\\
&=
- \frac12 
\int_0^t \dd s \; \E^{\frac12 \Delta_{C_{t,s}}}\;
(\nabla \G(s), \dot C (s) \nabla \G(s)) \; .
\end{split}
\end{equation}
In the last step we have used (\ref{eq:polchinski-equation}).
\end{proof}

\subsection{Finding a formal solution in binary trees}

We now want to write down a formal solution to the (integrated) Polchinski equation using binary trees. In order to do so, we first need to define binary trees and some notation related to them. We refer to Appendix~\ref{app:graph-theory} for some basic notions and notations of graph theory.

\begin{definition}[(Planar) binary tree]
\label{def:binary-tree}
A \textit{(planar) binary tree} is a connected graph $T=(V,E)$ that fulfils the following properties:
\begin{enumerate}[label=(\roman*)]
\item $V \subseteq \set{(n,k) : n \in \N_0,\ k\in [2^n]-1}$ and $* = (0,0) \in V$,
\item $E \subseteq \set{\set{(n,k),(n+1,k')} : k \equiv k' \pmod{2^n}}$,
\item $d_T(*) \in \set{0,2}$ and for every $v \in V \setminus \set{*}$, $d_T(v) \in \set{1, 3}$.
\end{enumerate}
We call $* \in V$ the \textit{root} of $T$. $d_T (*) = 0$ only if $V(V)=\{*\}$, so that $T$ has no lines.
A sample binary tree including the vertex labels can be seen in Figure~\ref{fig:example-binary-tree-with-vertex-labels}.
We further define:
\begin{enumerate}[label=(\roman*)]
\item The \textit{leaves} of $T$ are the elements of
\begin{equation}
\vleaf(T) = \set{w \in V(T) : d_T(w) = 1}.
\end{equation}
\item The \textit{forks} of $T$ are the elements of
\begin{equation}
\vfork(T) = V(T) \setminus \vleaf(T).
\end{equation}
\item Write $l(T) = \card{\vleaf(T)}$ for the number of leaves of $T$.
\item The leaf mapping maps every vertex to the set of leaves under it:\footnote{By $\mathcal{P}(\vleaf(T))$, we denote the power set of $\vleaf(T)$.}
\begin{equation}
\begin{split}
\mathcal{L}_T:\ &V(T) \to \mathcal{P}(\vleaf(T)), \\
&(n,k) \mapsto \set{(n',k') \in \vleaf(T) : k \equiv k' \pmod{2^n}}.
\end{split}
\end{equation}
\item The parent mapping maps every vertex to its parent vertex:
\begin{equation}
\begin{split}
p_T:\ &V(T) \to V(T) \sqcup \set{\dagger}, \\
&(n,k) \mapsto \begin{cases}
\dagger & \text{if $n=0$} \\
(n-1, k \pmod{2^{n-1}}) & \text{if $n \geq 1$}
\end{cases}.
\end{split}
\end{equation}
Here, $\dagger$ is just a newly introduced symbol.
\item The children mappings map every fork to its left or right child in the tree; define it for $i \in \set{1,2}$:
\begin{equation}
\begin{split}
c_T^{(i)}:\ &\vfork(T) \to V(T), \\
&(n,k) \mapsto \begin{cases}
(n+1,k) & \text{if $i=1$} \\
(n+1,k+2^n) & \text{if $i=2$}
\end{cases}.
\end{split}
\end{equation}
\item Define a total order on $V(T)$ by
\begin{equation}
(n,k) < (n',k') \quad :\Leftrightarrow \quad n < n' \text{ or } (n=n' \text{ and } k < k').
\end{equation}
\end{enumerate}
Some of these preceding notations are showcased in Figure~\ref{fig:example-binary-tree-with-notation}.
Lastly, for $l \in \N$, use the notation
\begin{equation}
\mathfrak{B} = \set{T : \text{$T$ is a binary tree}}
\text{ and }
\mathfrak{B}_l = \set{T \in \mathfrak{B} : l(T) = l}.
\end{equation}
\end{definition}

\begin{figure}[h!]
\begin{subfigure}[t]{0.7\textwidth}
\begin{center}
\tiny
\begin{tikzpicture}
\node[anchor=south] at (3.5,3) {$(0,0) = *$};
\node[circle, fill=black, inner sep=0, minimum size=3pt] at (3.5,3) {};

\draw (3.5,3) -- (1.5,2) node[anchor=east] {$(1,0)$};
\node[circle, fill=black, inner sep=0, minimum size=3pt] at (1.5,2) {};
\draw (3.5,3) -- (5.5,2) node[anchor=west] {$(1,1)$};
\node[circle, fill=black, inner sep=0, minimum size=3pt] at (5.5,2) {};

\draw (1.5,2) -- (0.5,1) node[anchor=east] {$(2,0)$};
\node[circle, fill=black, inner sep=0, minimum size=3pt] at (0.5,1) {};
\draw (1.5,2) -- (2.5,1) node[anchor=west] {$(2,2)$};
\node[circle, fill=black, inner sep=0, minimum size=3pt] at (2.5,1) {};
\draw (5.5,2) -- (4.5,1) node[anchor=east] {$(2,1)$};
\node[circle, fill=black, inner sep=0, minimum size=3pt] at (4.5,1) {};
\draw (5.5,2) -- (6.5,1) node[anchor=west] {$(2,3)$};
\node[circle, fill=black, inner sep=0, minimum size=3pt] at (6.5,1) {};

\draw (0.5,1) -- (0,0) node[anchor=east] {$(3,0)$};
\node[circle, fill=black, inner sep=0, minimum size=3pt] at (0,0) {};
\draw[dotted] (0,0) -- (-0.25,-0.5);
\draw[dotted] (0,0) -- (0.25,-0.5);

\draw (0.5,1) -- (1,0) node[anchor=east] {$(3,4)$};
\node[circle, fill=black, inner sep=0, minimum size=3pt] at (1,0) {};
\draw[dotted] (1,0) -- (0.75,-0.5);
\draw[dotted] (1,0) -- (1.25,-0.5);

\draw (2.5,1) -- (2,0) node[anchor=east] {$(3,2)$};
\node[circle, fill=black, inner sep=0, minimum size=3pt] at (2,0) {};
\draw[dotted] (2,0) -- (1.75,-0.5);
\draw[dotted] (2,0) -- (2.25,-0.5);

\draw (2.5,1) -- (3,0) node[anchor=east] {$(3,6)$};
\node[circle, fill=black, inner sep=0, minimum size=3pt] at (3,0) {};
\draw[dotted] (3,0) -- (2.75,-0.5);
\draw[dotted] (3,0) -- (3.25,-0.5);

\draw (4.5,1) -- (4,0) node[anchor=east] {$(3,1)$};
\node[circle, fill=black, inner sep=0, minimum size=3pt] at (4,0) {};
\draw[dotted] (4,0) -- (3.75,-0.5);
\draw[dotted] (4,0) -- (4.25,-0.5);

\draw (4.5,1) -- (5,0) node[anchor=east] {$(3,5)$};
\node[circle, fill=black, inner sep=0, minimum size=3pt] at (5,0) {};
\draw[dotted] (5,0) -- (4.75,-0.5);
\draw[dotted] (5,0) -- (5.25,-0.5);

\draw (6.5,1) -- (6,0) node[anchor=east] {$(3,3)$};
\node[circle, fill=black, inner sep=0, minimum size=3pt] at (6,0) {};
\draw[dotted] (6,0) -- (5.75,-0.5);
\draw[dotted] (6,0) -- (6.25,-0.5);

\draw (6.5,1) -- (7,0) node[anchor=east] {$(3,7)$};
\node[circle, fill=black, inner sep=0, minimum size=3pt] at (7,0) {};
\draw[dotted] (7,0) -- (6.75,-0.5);
\draw[dotted] (7,0) -- (7.25,-0.5);
\end{tikzpicture}
\normalsize
\end{center}
\caption{Example binary tree with vertex labels.}
\label{fig:example-binary-tree-with-vertex-labels}
\end{subfigure}%
\begin{subfigure}[t]{0.3\textwidth}
\begin{center}
\tiny
\begin{tikzpicture}
\path[rounded corners=5, fill=blue, opacity=0.2] (2.5,-0.25) -- (1.5,-0.25) -- (1.25,0) -- (0.75,1) -- (1,1.25) -- (1.25,1) -- (1.5,0.25) -- (2.5,0.25) -- (2.75,0) -- cycle;

\node[anchor=south] at (2,4) {$*$};
\node[circle, fill=black, inner sep=0, minimum size=3pt] at (2,4) {};

\draw (2,4) -- (1,3) node[anchor=east] {$p_T(v)$};
\node[circle, fill=black, inner sep=0, minimum size=3pt] at (1,3) {};
\draw (2,4) -- (3,3);
\node[circle, fill=black, inner sep=0, minimum size=3pt] at (3,3) {};

\draw (1,3) -- (0.5,2);
\node[circle, fill=black, inner sep=0, minimum size=3pt] at (0.5,2) {};
\draw (1,3) -- (1.5,2) node[anchor=west] {$v \in V(T)$};
\node[circle, fill=black, inner sep=0, minimum size=3pt] at (1.5,2) {};

\draw (1.5,2) -- (1,1) node[anchor=east] {$c_T^{(1)}(v)$};
\node[circle, fill=black, inner sep=0, minimum size=3pt] at (1,1) {};
\draw (1.5,2) -- (2,1) node[anchor=west] {$c_T^{(2)}(v)$};
\node[circle, fill=black, inner sep=0, minimum size=3pt] at (2,1) {};

\draw (2,1) -- (1.5,0);
\node[circle, fill=black, inner sep=0, minimum size=3pt] at (1.5,0) {};
\draw (2,1) -- (2.5,0);
\node[circle, fill=black, inner sep=0, minimum size=3pt] at (2.5,0) {};

\node[anchor=west, blue] at (2.75,0) {$\leavesof{T}{v}$};
\end{tikzpicture}
\normalsize
\end{center}
\caption{Example binary tree $T$ showcasing some of the notation introduced in Definition~\ref{def:binary-tree} (i) to (vii).}
\label{fig:example-binary-tree-with-notation}
\end{subfigure}

\caption{Explanatory drawings for Definition~\ref{def:binary-tree}.}
\end{figure}

This now allows us to write down a solution of Polchinski's equation as a formal power series in the initial interaction $\G_0$. The expansion coefficients are labelled by binary trees. We then go on by proving its convergence (Theorem \ref{thm:convergence-of-formal-solution}).
 
\begin{theorem}
\label{thm:formal-solution}
The integrated Polchinski Equation~\eqref{eq:integrated-polchinski-equation} is solved by the formal series $\G(t,\Psi) = \sum_{l=1}^\infty \G_l(t,\Psi)$, where $\G_l(t,\Psi)$, the $l^{\rm th}$ order term in $\G_0$, is given by a sum over binary trees
\begin{equation}
\G_l(t,\Psi) = \sum_{T \in \mathfrak{B}_l} \G_T(t, \Psi)
\end{equation}
and for every $T \in \mathfrak{B}_l$
\begin{equation}
\begin{split}
\G_T(t, \Psi)
&= \Bigg\llbracket  \ordprod{v \in \vfork(T)} \int_0^{s_{p(v)}} \dd{s_v}\;  \Diff^{(T,v)}_{s_{p(v)},s_{\vphantom{(}v}} \; \prod_{w \in \vleaf(T)} \G_0\Big(s_{p(w)}, \Psi^{(w)}\Big)  \Bigg\rrbracket.
\end{split}
\end{equation}
with the differential operator
\beq
\Diff^{(T,v)}_{s_{p(v)},s_{\vphantom{(}v}}
=
-\sfrac{1}{2} \; 
e^{\frac{1}{2} \Delta_{C_{s_{p(v)},s_{v\phantom{)}}}}^{(\leavesof{T}{v})}} \; 
\innerprod{\nabla^{(\mathcal{L}_1(v))}}{\dot{C}(s_v) \nabla^{(\mathcal{L}_2(v))}} 
\eeq
Here, we have also introduced the following notation:
\begin{itemize}
\item We have introduced multiple copies of the field variables, namely one copy $\Psi^{(v)}$ for every leaf vertex $v \in \vleaf(T)$. For $A \subseteq \vleaf(T)$, we use the notation
\begin{align}
\Delta_C^{(A)} &= \sum_{i,j \in A} \Delta_C^{(i,j)} = \sum_{i,j \in A} \innerprod{\fdv{\;}{\Psi^{(i)}}}{C \fdv{\;}{\Psi^{(j)}}}, \\
\nabla^{(A)} &= \sum_{i \in A} \nabla^{(i)} = \sum_{i \in A} \fdv{\;}{\Psi^{(i)}}.
\end{align}
Lastly, we use special parentheses $\llbracket . \rrbracket$ to express that all copies of the field variables within these brackets shall be set the original field variables. In other words, we define
\begin{equation}
\llbracket . \rrbracket = [ . ]_{\Psi^{(v)}=\Psi\ \forall v \in \vleaf(T)}.
\end{equation}
\item The ordered product $\ordprod{}_{v \in \vfork(T)}$ uses the total order on $V(T)$ given in Definition~\ref{def:binary-tree}.
\item We write $p(v)$ instead of $p_T(v)$ and $c_i(v)$ (with $i \in \set{1,2}$) instead of $c_T^{(i)}(v)$ because the dependence on the tree $T$ is clear, similarly $\mathcal{L}_i(v) = \leavesof{T}{c_i(v)}$ for $i=1,2$.
\item We have set $s_{p(*)} = s_\dagger = t$.
\end{itemize}
\end{theorem}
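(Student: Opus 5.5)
We prove the statement by induction on the order in $\G_0$, comparing coefficients of both sides of the integrated Polchinski equation \eqref{eq:integrated-polchinski-equation} from Lemma~\ref{lem:integrate-polchinski}. Inserting the ansatz $\G=\sum_l \G_l$ into the right-hand side and sorting by order, the equation holds as an identity of formal power series if and only if $\G_1(t)=\G_0(t)$ and, for $l\ge 2$,
\begin{equation*}
\G_l(t,\Psi) = -\sfrac12\sum_{l_1+l_2=l}\int_0^t \dd s\; e^{\frac12\Delta_{C_{t,s}}}\innerprod{\nabla \G_{l_1}(s,\Psi)}{\dot C(s)\nabla \G_{l_2}(s,\Psi)}.
\end{equation*}
This recursion determines the $\G_l$ uniquely. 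It therefore suffices to check that the tree formula satisfies it. For $l=1$, $\mathfrak{B}_1=\set{\{*\}}$ has no forks, and with $s_{p(*)}=t$ the formula gives $\G_0(t,\Psi)$, as required.

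For $l\ge2$ we use the bijection $\mathfrak{B}_l\ni T\mapsto (T_1,T_2)\in\bigsqcup_{l_1+l_2=l}\mathfrak{B}_{l_1}\times\mathfrak{B}_{l_2}$. Here $T_i$ is the subtree rooted at $c_i(*)$, relabelled by $(n,k)\mapsto(n-1,\lfloor k/2\rfloor)$, the inverse of the index maps $c^{(1)},c^{(2)}$. One checks that this relabelling preserves parents, children, leaf sets $\mathcal{L}$, and the relative order of forks within each subtree. In $\G_T$ the root factor is outermost, since $*$ is minimal in the total order. We then rename $s_*=s$ and observe that the remaining fork integrals split into those in $T_1$ and those in $T_2$, each with upper limit $s_{p(\cdot)}$ and with $s_{p(c_i(*))}=s$. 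This matches the convention $s_\dagger=s$ for the subtrees.

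The ordering of the forks of $T_1$ and $T_2$ is interleaved in $\ordprod{}$, but this is harmless. Each operator $\Diff^{(T,v)}$ acts only on copies $\Psi^{(w)}$ with $w\in\mathcal{L}_T(v)$, it is even in the Grassmann derivatives (two derivatives, or an exponential of Laplacians), and the integration variables are ordinary reals. Hence the operators of the two subtrees commute and can be regrouped. The result is that $\G_T$ before applying the root operator equals $\G_{T_1}(s,\Psi^{(\mathcal{L}_1)})\,\G_{T_2}(s,\Psi^{(\mathcal{L}_2)})$, with the two copy sets still unidentified. Here we also use that each $\G_0$ is even, so the product of leaf factors may be reordered freely.

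It remains to handle the root operator
\begin{equation*}
-\sfrac12 e^{\frac12\Delta^{(\vleaf)}_{C_{t,s}}}\innerprod{\nabla^{(\mathcal{L}_1)}}{\dot C(s)\nabla^{(\mathcal{L}_2)}}.
\end{equation*}
Two identities are needed. The first is the chain rule under setting copies equal,
\begin{equation*}
\llbracket \nabla^{(A)}F\rrbracket=\nabla\llbracket F\rrbracket,
\end{equation*}
which gives $\llbracket \Delta^{(A)}_C F\rrbracket=\Delta_C\llbracket F\rrbracket$ and hence $\llbracket e^{\frac12\Delta^{(A)}_C}F\rrbracket=e^{\frac12\Delta_C}\llbracket F\rrbracket$ when $A$ is the set of all copies on which $F$ depends. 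The second is that the inner $\llbracket\cdot\rrbracket$ for each subtree can be taken first, because partial identification within $\mathcal{L}_i$ commutes with the operators acting at $*$. The root operator acts on $\mathcal{L}_1$ and $\mathcal{L}_2$ only through the sums $\nabla^{(\mathcal{L}_i)}$, which are again compatible with identification inside $\mathcal{L}_i$. Together these turn the root operator into $-\frac12 e^{\frac12\Delta_{C_{t,s}}}\innerprod{\nabla\G_{T_1}(s)}{\dot C(s)\nabla \G_{T_2}(s)}$. Summing over $(T_1,T_2)$ then reproduces the recursion.

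The main obstacle is the bookkeeping of signs and orders, especially checking that $\G_{T_1}$ and $\G_{T_2}$ are even elements. Evenness is needed so that $\nabla^{(\mathcal{L}_2)}$ passes through the $T_1$ factor without a sign, and so that the pairing $\innerprod{\nabla F_1}{\dot C\nabla F_2}$ appears with exactly the signs of the original equation. I would settle this by proving, within the same induction, that every $\G_T$ is even, assuming $\V$ is even as is implicit in the setup. With that established, the remaining work is the relabelling check for the subtree bijection.
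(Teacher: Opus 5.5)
Your proposal is correct and follows the paper's own argument. You substitute the series into the integrated equation, identify each ordered pair $(T_1,T_2)$ with the tree obtained by joining them under a new root, and use the chain rule $\nabla\llbracket\cdot\rrbracket=\llbracket\nabla^{(A)}\,\cdot\,\rrbracket$ for the field copies; the details you add that the paper leaves implicit (the relabelling $(n,k)\mapsto(n-1,\lfloor k/2\rfloor)$, regrouping the interleaved fork integrals, and evenness of $\V$ and of the $\G_T$ for the signs) all check out.
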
 

\begin{proof}
By substituting the formal power series into the right-hand side of Equation~\eqref{eq:integrated-polchinski-equation}, we obtain:
\begin{equation}
\begin{split}
&\G_0(t,\Psi) - \int_0^t \dd{s}\; e^{\frac{1}{2}\Delta_{C_{t,s}}} \cdot \frac{1}{2} \innerprod{\nabla \G(s,\Psi)}{\dot{C}(s) \nabla \G(s,\Psi)} \\
=& \G_0(t,\Psi) + \sum_{l_1,l_2=1}^\infty \sum_{\substack{T_1 \in \B_{l_1} \\ T_2 \in \B_{l_2}}} \int_0^t \dd{s}\; e^{\frac{1}{2} \Delta_{C_{t,s}}} \left(-\frac{1}{2}\right) \innerprod{\nabla \G_{T_2}(s,\Psi)}{\dot{C}(s) \nabla \G_{T_2}(s,\Psi)} \\
=& \G_0(t,\Psi) + \sum_{l_1,l_2=1}^\infty \sum_{\substack{T_1 \in \B_{l_1} \\ T_2 \in \B_{l_2}}} \G_{T}(t,\Psi),
\end{split}
\end{equation}
where $T \in \B_{l_1+l_2}$ is the binary tree obtained by combining $T_1$ and $T_2$ and connecting their respective roots to a new root. Here we have used that $\nabla \llbracket \prod_{i \in A} F_i (\Psi^{(i)}) \rrbracket  = \llbracket \nabla^{(A)} \prod_{i \in A} F_i (\Psi^{(i)})\rrbracket$. Further, we have:
\begin{equation}
\begin{split}
&\G_0(t,\Psi) + \sum_{l_1,l_2=1}^\infty \sum_{\substack{T_1 \in \B_{l_1} \\ T_2 \in \B_{l_2}}} \G_{T}(t,\Psi) \\
=& \G_{(\set{*},\emptyset)}(t,\Psi) + \sum_{l=2}^\infty \sum_{T \in \B_l} \G_T(t,\Psi) \\
=& \sum_{l=1}^\infty \sum_{T \in \B_l} \G_T(t,\Psi)
= \G(t,\Psi),
\end{split}
\end{equation}
where $(\set{*},\emptyset)$ denotes the trivial binary tree with exactly one leaf. This proves the statement.
\end{proof}

\subsection{Convergence of the tree expansion}

Here we show the convergence of the formal expansion stated in Theorem \ref{thm:formal-solution} in terms of the determinant bound $\gamma_C$ and the decay constant $\omega_C$  of the covariance matrix $C$. 

\begin{theorem}
\label{thm:convergence-of-formal-solution}
Let $h > 0$  and  $\G(t,\Psi) = \sum_{l=1}^\infty \G_l(t,\Psi)$ be the (formal) solution of Equation~\eqref{eq:integrated-polchinski-equation} from Theorem~\ref{thm:formal-solution}. Let $\dot{C}(t) = C$ for all $t \in [0,1]$, where $C \in M_N(\R)$ is a (constant) skew-symmetric matrix. Let $C$ have a Gram constant\footnote{See Definition~\ref{def:gram-constant}.} $\gamma_C > 0$ and let there be a $\omega_C > 0$ such that
\begin{equation}
|C| = \sup_{X \in \X} \int \dd{Y}\; |C(X,Y)| \leq \omega_C \gamma_C^2.
\end{equation}
Finally, let $\G (0) =\V$. Then:
\begin{equation}
\label{eq:bound-on-wilsonian-effective-potential-order-l}
\norm{\G_l(t=1)}_h \leq 2 \; (4e \; \omega_C)^{l-1} \; \norm{\V}_{h'}^l,
\end{equation}
where $h'=h + \left(\sqrt{2e} + 1\right) \gamma_C$ (and $e$ is Euler's constant). 

Moreover, if $4e \; \omega_C \; \norm{\V}_{h'} < 1$ then
\begin{equation}
\label{eq:bound-on-wilsonian-effective-potential}
\norm{\G(t=1)}_h
\leq \frac{2\norm{\V}_{h'}}{1 - 4e \; \omega_C \; \norm{\V}_{h'}}.
\end{equation}
In particular, the formal power series $\G(t,\Psi)$ is convergent in this case. For $P \geq 1$, we have in this case:
\begin{equation}
\label{eq:bound-on-wilsonian-effective-potential-remainder}
\norm{\G(t=1) - \sum_{l=1}^{P-1} \G_l(t=1)}_h
\leq \frac{2}{4e \; \omega_C} \; \frac{\big(4e \; \omega_C \; \norm{\V}_{h'}\big)^{P}}{1 - 4e \; \omega_C \; \norm{\V}_{h'}}.
\end{equation}
\end{theorem}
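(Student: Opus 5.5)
We bound each tree term $\G_T(1)$ separately and then sum over $T\in\B_l$. The sum over binary trees is controlled by Catalan numbers, $|\B_l| = \frac{1}{l}\binom{2l-2}{l-1}\le 4^{l-1}$, which accounts for the $4^{l-1}$ in \eqref{eq:bound-on-wilsonian-effective-potential-order-l}. The remaining factor $(e\,\omega_C)^{l-1}$ and the shift $h\to h'$ have to come from bounding a single tree amplitude. The bounds \eqref{eq:bound-on-wilsonian-effective-potential} and \eqref{eq:bound-on-wilsonian-effective-potential-remainder} then follow from \eqref{eq:bound-on-wilsonian-effective-potential-order-l} by summing geometric series in $4e\,\omega_C\norm{\V}_{h'}$.

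\textbf{Expanding a single tree.} Since $\dot C\equiv C$, we have $C_{s,s'}=(s-s')C$, and all Laplacians commute. We can therefore collect the heat operators along the tree. First, $\G_0(s_{p(w)})=e^{\frac{s_{p(w)}}{2}\Delta_C^{(w)}}\V(\Psi^{(w)})$. Second, the operators $e^{\frac12 (s_{p(v)}-s_v)\Delta_C^{(\mathcal L_T(v))}}$ compose into a single Gaussian convolution on the copied fields. That convolution has some covariance matrix $\mathbb{M}$ indexed by leaf pairs, where $\mathbb{M}_{ij}$ depends on the scale of the lowest common ancestor of $i$ and $j$. The operator $\mathbb M\otimes C$ should then be a Gaussian covariance with Gram constant $\le\gamma_C$, using the representation $C_{ij}$ as an integral over scales of nested projections.

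\textbf{Bounding the amplitude.} We expand each $\V(\Psi^{(w)})$ in monomials. Each fork's $\innerprod{\nabla^{(\mathcal L_1)}}{C\nabla^{(\mathcal L_2)}}$ picks one field in $\mathcal L_1(v)$ and one in $\mathcal L_2(v)$. This gives $l-1$ explicit propagators, and they form a spanning "leaf tree" on the $l$ leaves. We bound these propagators by $|C|\le\omega_C\gamma_C^2$ in the $L^1$–$\sup$ norm, integrating from the root leaf outwards. The remaining fields are contracted by the Gaussian convolution with covariance $\mathbb M\otimes C$. By the Gram bound, the sum over partial contractions of $m$ fields, with $k$ of them kept as external legs, is bounded by $\binom{m}{k}\gamma_C^{m-k}$ times external factors. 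This yields $\sum_m |v_m|\binom{m}{k}\gamma_C^{m-k}h^k\le\norm{\V}_{h+\gamma_C}$-type bounds. The $\gamma_C^2$ of each propagator, together with the derivative choices (a factor $m_i$ per field hit), is absorbed by enlarging $h$ further. Using $m\, x^{m}\le$ something like $(\sqrt{2e}\,\gamma_C)$-shifted weights, i.e.\ $m^{k}\le k!\,e^{\dots}$, should produce the extra $\sqrt{2e}\,\gamma_C$ in $h'$ and one factor $e$ per fork.

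\textbf{Main obstacle.} The scale integrals $\int_0^{s_{p(v)}}ds_v$ over ordered tree times give $1/\prod_v(\text{subtree size})$-type factors. These must compensate the combinatorics of leaf trees compatible with $T$: the number of ways the fork derivatives can land, which is the product over forks of $|\mathcal L_1(v)|\,|\mathcal L_2(v)|$ in terms of choices of leaves. The key estimate to prove is
\[
\sum_{\text{leaf trees}}\int\prod_v ds_v\;\le\;\text{const}^{\,l-1},
\]
bounded uniformly by $1$ per fork after the $\sqrt{2e}$ and $e$ absorption. I would try to prove it by induction on $T$, splitting at the root, using $\int_0^t ds\, s^{a}s^{b}$-type identities and a Cayley-like count. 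This, together with carefully distributing the $m$-factors so that only $\norm{\V}_{h'}$ appears, is where I expect the main difficulty. The factor $2$ in front presumably comes from the one-leaf term and the root normalization.
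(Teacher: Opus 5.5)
\textbf{There is a genuine gap: the factorials $d_L(w)!$ produced by the fork derivatives are never controlled.}

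Your outline has the same skeleton as the paper's proof. You combine everything into one Gaussian convolution with the positive semidefinite leaf matrix, so $M\otimes C$ has Gram constant $\gamma_C$. You expand the fork operators into induced leaf trees, trim the tree to get $|C|^{l-1}\le(\omega_C\gamma_C^2)^{l-1}$, and bound the scale integrals by $\int_0^1 T\le\prod_{v\in V(T)}|\mathcal{L}_T(v)|^{-1}$.

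However, the step you single out as the main obstacle is the easy one. Every non-root vertex is a child of exactly one fork, so
$|\mathfrak{L}(T)|=\prod_{v\in\vfork(T)}|\mathcal{L}_1(v)|\,|\mathcal{L}_2(v)|=\prod_{v\neq *}|\mathcal{L}_T(v)|$.
Hence your unweighted estimate $\sum_L\int\prod_v ds_v$ holds trivially with constant $1/l$.

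What that estimate ignores is the weight each leaf tree carries. Placing the $d_L(w)$ fork derivatives that land on leaf $w$ onto distinct fields of a degree-$m_w$ monomial gives $m_w!/(m_w-d_L(w))!=\binom{m_w}{d_L(w)}\,d_L(w)!$.
\begin{itemize}
\item The binomial can be absorbed into $h'$, using $\gamma_C^{2(l-1)}=\prod_w\gamma_C^{d_L(w)}$; this is the $+\gamma_C$ in $h'$.
\item The factorial $d_L(w)!$ cannot be absorbed. $\norm{\V}_{h'}$ has no factorial slack, and $m^k\le k!\,\epsilon^{-k}e^{\epsilon m}$ just reproduces the $k!$.
\item These factorials are genuinely large. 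For a comb-shaped $T$, every fork may attach to the deepest leaf, giving a star-shaped leaf tree with a vertex of degree $l-1$.
\end{itemize}
So ``one factor $e$ per fork'' from the derivatives is not available.

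\textbf{What closes the argument in the paper} is a count of leaf trees with prescribed degrees that is adapted to $T$, namely Lemma~\ref{lem:num-of-leaf-trees-bound}:
$|\mathfrak{L}(T,d)|\le\prod_{v\ne *}|\mathcal{L}_T(v)|\,/\prod_w(d_w-1)!$, which is nonzero only if $\sum_w d_w=2(l-1)$.
\begin{itemize}
\item The numerator is cancelled by the $s$-integral.
\item The ratio $d_w!/(d_w-1)!=d_w$ remains, and $\prod_w d_w\le 2^l$ by the arithmetic--geometric mean inequality.
\item This $2^l$ is the source of the prefactor $2$ and of one factor $2$ in $4e$; it does not come from the one-leaf term or a root normalization.
\item The other factor $2e$ is the crude bound $|\mathfrak{B}_l|\le(2e)^{l-1}$.
\item The $\sqrt{2e}$ in $h'$ comes from $\binom{2k}{k}\le(2e)^k$, i.e.\ splitting the Gram-contracted fields into rows and columns of the minor. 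It is not produced by the derivatives, and your contraction bound $\binom{m}{k}\gamma_C^{m-k}$ omits this factor.
\end{itemize}

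A plain Cayley count $(l-2)!/\prod_w(d_w-1)!$ would not suffice. For a balanced $T$, $\prod_v|\mathcal{L}_T(v)|$ grows only exponentially in $l$, so the $s$-integrals cannot pay for $(l-2)!$. The paper proves the $T$-adapted bound in two steps. First it derives an exact formula for $|\mathfrak{L}(T,d)|$ by induction, removing a lowest pair of sibling leaves. Then it estimates leaf trees with extra vertices by induction from the root, using Vandermonde's identity. Without such a lemma your proof does not close.
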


If the determinant bound $\gamma_C$ and the decay constant $\omega_C$  of the covariance matrix $C$ can be chosen independent of the lattice spacing $\epsilon$ and the system size $L$, then the tree expansion for the Wilsonian effective potential converges in a disk that is uniform in $\epsilon$ and $L$, hence the continuum limit and the thermodynamic limit can be taken. 

Theorem \ref{thm:convergence-of-formal-solution} is the central result of this paper. It will be proven in Section~\ref{sec:proof-of-convergence}. In the proof we will need some results on the combinatorics on binary trees, which we prove in the next section.

\section{Combinatorics of graphs}
\label{sec:combinatorics-of-graphs}

\subsection{Short review}

Graph expansions are a common tool in quantum field theory. Simple notions of graph theory are reviewed in Appendix~\ref{app:graph-theory}. To bound graph expansions, one has to know about the combinatorics of the specific type of graphs used. The simplest expansions of the Wilsonian (effective) potential that one can obtain are expansions in general (connected) graphs. Unfortunately, these have quite a bad growth behaviour in the number of vertices:

\begin{lemma}[Number of (connected) graphs on $n$ vertices]
\label{lem:number-of-graphs}
Let $n \in \N$.
\begin{enumerate}[label=(\roman*)]
\item The number of graphs on $n$ vertices is
\begin{equation}
\card{\Gr([n])} = 2^{\binom{n}{2}} = 2^{\frac{n(n-1)}{2}}.
\end{equation}
\item For $n \to \infty$, the number of connected graphs on $n$ vertices is 
\begin{equation}
\card{\Gr_c([n])} \sim 2^{\binom{n}{2}} = 2^{\frac{n(n-1)}{2}}.
\end{equation}
\end{enumerate}
\end{lemma}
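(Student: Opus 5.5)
Part (i) is a direct count. On the vertex set $[n]$ there are exactly $\binom{n}{2}$ unordered pairs $\set{i,j}$ with $i \neq j$. A (simple) graph is determined by its edge set, which is an arbitrary subset of these pairs. The map $G \mapsto E(G)$ is therefore a bijection from $\Gr([n])$ onto the power set of the set of pairs. Hence $\card{\Gr([n])} = 2^{\binom{n}{2}}$.

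For part (ii), the lower bound $\card{\Gr_c([n])} \le \card{\Gr([n])}$ is trivial, so it suffices to show
\begin{equation*}
\frac{\card{\Gr([n])\setminus\Gr_c([n])}}{2^{\binom n2}} \to 0 .
\end{equation*}

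I will bound the disconnected graphs by a union bound. If $G$ is disconnected, there is a nonempty proper subset $S \subset [n]$, containing no more than $n/2$ vertices, such that no edge joins $S$ to $[n]\setminus S$. Take for $S$ a smallest connected component.

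Fix $S$ with $\abs{S}=k$. Every edge between $S$ and its complement is forbidden, and the other $\binom n2 - k(n-k)$ pairs are free. So the number of graphs with no edge across this particular cut is $2^{\binom n2 - k(n-k)}$. Summing over all $S$ gives
\begin{equation*}
\frac{\card{\Gr([n])\setminus\Gr_c([n])}}{2^{\binom n2}} \le \sum_{k=1}^{\lfloor n/2\rfloor} \binom nk 2^{-k(n-k)} .
\end{equation*}

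The only real work is showing that this sum tends to $0$. Since $k \le n/2$, we have $n-k \ge n/2$. Together with $\binom nk \le n^k$, each term is at most $\left(n\,2^{-n/2}\right)^k$. For large $n$ we have $q_n := n\,2^{-n/2} < 1$, so the sum is bounded by a geometric series:
\begin{equation*}
\sum_{k\ge1} q_n^k = \frac{q_n}{1-q_n} \to 0 .
\end{equation*}
Hence $\card{\Gr_c([n])} = 2^{\binom n2}\bigl(1-o(1)\bigr)$, which is the asserted asymptotic equivalence.
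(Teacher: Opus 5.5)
Your proof is correct. Part (i) matches the paper, which simply calls it clear. For part (ii) you take a different route.

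The paper starts from the exact identity $n\,2^{\binom n2}=\sum_{k\ge1}\binom nk\,k\,d_k\,2^{\binom{n-k}{2}}$: mark a vertex and sort graphs by the size $k$ of its component. It normalizes with $\delta_k=d_k2^{-\binom k2}\le 1$, isolates $\delta_n$, and bounds the other terms. You bypass this recursion with a union bound over cuts, taking $S$ to be a smallest component so that only $k\le n/2$ occurs.

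The exact identity gives an equation rather than an inequality for $\delta_n$, relating it to all $\delta_k$. That helps if one wants sharper asymptotics. Your argument is shorter, needs no outside reference, and directly yields the explicit estimate $1-\card{\Gr_c([n])}\,2^{-\binom n2}\le q_n/(1-q_n)$ for $n\ge5$.

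Your restriction to $k\le n/2$ is doing real work. In the paper's route the normalized $k$-th term is $\binom nk\frac kn\delta_k\,2^{-k(n-k)}$, because $\binom k2+\binom{n-k}2-\binom n2=-k(n-k)$. That sum runs up to $k=n-2$, and the termwise comparison $2^{-k(n-k)}\le 2^{-kn/2}$ fails once $k>n/2$. The exponent $-k(n-\frac{k-1}{2})$ displayed in the paper is not the correct one. Closing that argument requires the symmetry of $\binom nk 2^{-k(n-k)}$ under $k\leftrightarrow n-k$, or a separate bound for the large-$k$ terms; your smallest-component choice builds this in from the start.

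One wording slip: $\card{\Gr_c([n])}\le\card{\Gr([n])}$ is the trivial \emph{upper} bound. The nontrivial lower bound is what your union bound supplies.
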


\begin{proof}
Part (i) is clear. To prove part (ii), we use from \cite[Section~3.11]{generatingfunctionology} that $d_n$, the number of labelled connected graphs on $n$ vertices, satisfies
\begin{equation*}
n \; 2^{{n \choose 2}}
=
\sum_{k \ge 1} {n \choose k} k\; d_k \; 2^{{n - k \choose 2}}.
\end{equation*}
Define $\delta_k = d_k \; 2^{-{k\choose 2}}$. Then by (i), $\delta_k \le 1$ for all $k \in \bN$ and 
\begin{equation*}
1
=
\delta_n + (n-1) \delta_{n-1} 2^{1-n} + \sum_{k=1}^{n-2} {n \choose k} \frac{k}{n} \delta_k\; 2^{- k (n - \frac{k-1}{2})} \; .
\end{equation*}
The last sum is bounded by
\begin{equation*}
\sum_{k=1}^{n-2} {n \choose k} 2^{-k (n - \frac{k-1}{2})}
\le
\sum_{k\ge 1} {n \choose k} 2^{-k n/2 }
=
(1+ 2^{-n/2})^n -1
\le
\E^{n 2^{-n/2}} -1.
\end{equation*}
Thus
\begin{equation*}
\delta_n \ge
1 - (n-1) 2^{1-n} - (\E^{n 2^{-n/2}} -1) \; ,
\end{equation*}
so for $n \ge 10$, $\delta_n \ge 1 - n 2^{-n/2}$. 
\end{proof}

For this reason, expansions in general (connected) graphs usually behave badly in the sense that estimates for individual graphs are not sufficient to prove convergence of the series. Expansions in tree graphs behave a lot better because the number of trees on $n$ vertices exhibits a much smaller growth rate:

\begin{lemma}[Cayley's formula {\cite[Chapter~2]{lintCombinatorics}}]
\label{lem:cayleys-formula}
Let $n \in \N$ and $d_1, \dots, d_n \geq 1$ such that $\sum_{i=1}^n d_i = 2(n-1)$.
\begin{enumerate}[label=(\roman*)]
\item We have:
\begin{equation}
\card{\mathfrak{T}_n} = n^{n-2}.
\end{equation}
\item Define
\begin{equation}
\mathfrak{T}_n(d_1, \dots, d_n)
= \set{T \in \mathfrak{T}_n : d_T(i) = d_i\ \forall i \in [n] }.
\end{equation}
Then:
\begin{equation}
\card{\mathfrak{T}_n(d_1, \dots, d_n)} = \frac{(n-2)!}{(d_1 - 1)! \dots (d_n - 1)!}.
\end{equation}
\end{enumerate}
\end{lemma}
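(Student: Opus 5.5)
We prove both parts through one bijection, the Prüfer code, between $\mathfrak{T}_n$ (labelled trees on the vertex set $[n]$) and sequences in $[n]^{n-2}$. The key feature is that a vertex $i$ occurs in the code of $T$ exactly $d_T(i)-1$ times. Part (ii) is then a multinomial count, and part (i) follows from it by summing, or directly from $\card{[n]^{n-2}}=n^{n-2}$. The cases $n=1,2$ are trivial (one tree, empty code), so assume $n\ge 3$.

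\emph{Encoding.} Given $T\in\mathfrak{T}_n$, do the following $n-2$ times. Take the leaf with the smallest label, record its unique neighbour, and delete the leaf. The result is a sequence $(a_1,\dots,a_{n-2})$.

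Every tree with at least two vertices has a leaf, and deleting a leaf leaves a tree, so the procedure is well defined. After $n-2$ steps exactly one edge remains.

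Each vertex $i$ is recorded once for each incident edge removed while $i$ is still present, except for the edge along which $i$ itself is deleted as a leaf. If $i$ is one of the two endpoints of the final edge, that final edge is never recorded. In both cases $i$ is recorded exactly $d_T(i)-1$ times.

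\emph{Decoding.} Given $(a_1,\dots,a_{n-2})\in[n]^{n-2}$, rebuild the tree step by step. At step $k$, let $b_k$ be the smallest label that has not yet been used as a $b_j$ and does not occur among $a_k,\dots,a_{n-2}$. Add the edge $\{b_k,a_k\}$. At the end, join the two labels not yet used as any $b_j$.

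We then check three things:
\begin{enumerate}[label=(\alph*)]
\item The result is a tree: $n-1$ edges, and by induction on $n$ it is connected.
\item Decoding inverts encoding, since $b_k$ is precisely the smallest leaf at step $k$. A label is a leaf exactly when it does not occur in the remaining code.
\item Encoding inverts decoding.
\end{enumerate}
The cleanest route is induction on $n$. Removing the first leaf $b_1$ and the first entry $a_1$ reduces both procedures to the case of $n-1$ labels. Matching up the "smallest leaf" rule in the two directions is the only step that needs care, and we expect it to be the main (though routine) obstacle.

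\emph{Counting.} Since the bijection sends $\mathfrak{T}_n(d_1,\dots,d_n)$ onto the sequences of length $n-2$ in which $i$ appears $d_i-1$ times, the hypothesis $\sum_i (d_i-1)=n-2$ gives
\begin{equation*}
\card{\mathfrak{T}_n(d_1,\dots,d_n)}=\frac{(n-2)!}{(d_1-1)!\cdots(d_n-1)!},
\end{equation*}
which is (ii). For (i), sum (ii) over all admissible degree sequences. By the multinomial theorem this sum equals $(1+\dots+1)^{n-2}=n^{n-2}$, which is (i).
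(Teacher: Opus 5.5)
Your Prüfer-code proof is correct. The count of $d_T(i)-1$ occurrences is argued properly, including the two endpoints of the last edge. The checks that decoding and encoding are mutually inverse, by stripping off $b_1,a_1$ and inducting on $n$, are the standard ones. Part (ii) is then a multinomial count, and (i) follows from it. The only detail left implicit is that $b_k$ exists; this holds because at step $k$ at most $(k-1)+(n-1-k)=n-2$ labels are excluded. The paper itself does not prove this lemma; it quotes it from the cited textbook, so there is no argument of the paper's to match.

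It is still worth comparing with how the paper proves its own degree-refined count, Lemma~\ref{lem:num-of-induced-leaf-trees}(ii). There no encoding is used. The edge between two sibling leaves $w_1,w_2$ of the binary tree, which every induced leaf tree must contain, is contracted into their parent. One then counts the $\binom{d_{w_1}+d_{w_2}-2}{d_{w_1}-1}$ ways of splitting the merged vertex back, after which the factorials in the denominator recombine. This is the recursive style of proof of (ii): reduce the number of vertices by one and check that the right-hand side satisfies the same recursion. Your route is bijective instead.

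The bijection gives both parts at once and explains the multinomial form. However, a sequence code does not naturally respect the constraint that each fork of the binary tree carries exactly one crossing edge. The recursive argument adapts to that constraint directly. That is how the product over forks in the paper's formula comes to replace the $(n-2)!$ of Cayley's formula.
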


We have $n^{n-2} \leq n! e^n$; the factorial is usually cancelled by an inverse factor $1/n!$  in according tree expansions.

The expansion presented in this paper is in binary trees. Their number can be obtained by Catalan family considerations:

\begin{lemma}[Number of binary trees {\cite[Example~14.9]{lintCombinatorics}}]
\label{lem:num-of-binary-trees}
For $l \in \N$, we have $\card{\mathfrak{B}_l} = \frac{1}{l} \binom{2(l-1)}{l-1}$.
\end{lemma}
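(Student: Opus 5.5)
We prove that the number of planar binary trees with $l$ leaves is the Catalan number $c_{l-1} = \frac{1}{l}\binom{2(l-1)}{l-1}$. The approach is a recursion from root decomposition followed by a generating-function computation.

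\textbf{Root decomposition.} For $l \ge 2$, the root $*$ of $T \in \B_l$ has degree $2$. Deleting $*$ leaves two subtrees, rooted at $c^{(1)}(*)$ and $c^{(2)}(*)$. Relabelling each vertex $(n,k)$ of the left subtree as $(n-1,k/2)$, and each vertex of the right subtree as $(n-1,(k-1)/2)$, turns both into binary trees in the sense of Definition~\ref{def:binary-tree}. This is exactly the inverse of the gluing operation used in the proof of Theorem~\ref{thm:formal-solution}.

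The map $T \mapsto (T_1,T_2)$ is therefore a bijection
\begin{equation*}
\B_l \;\longleftrightarrow\; \bigsqcup_{l_1+l_2=l,\ l_1,l_2\ge 1} \B_{l_1}\times \B_{l_2}.
\end{equation*}
To justify it, I would check two things:
\begin{itemize}
\item the relabelling is compatible with the congruence condition in (ii), since $k\equiv k' \pmod{2^n}$ with both $k,k'$ of the same parity becomes a congruence mod $2^{n-1}$ after halving;
\item the degree conditions in (iii) transfer, since the children of the root now have degree $3-1=2$ as new roots, or $1-1=0$ if they are leaves.
\end{itemize}
Also $\B_1=\{(\{*\},\emptyset)\}$. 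Setting $b_l=\card{\B_l}$, this gives
\begin{equation*}
b_1=1, \qquad b_l=\sum_{l_1=1}^{l-1} b_{l_1}b_{l-l_1} \quad (l\ge 2).
\end{equation*}

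\textbf{Solving the recursion.} Let $B(x)=\sum_{l\ge1} b_l x^l$. The recursion becomes $B = x + B^2$, so
\begin{equation*}
B(x) = \frac{1-\sqrt{1-4x}}{2},
\end{equation*}
choosing the root with $B(0)=0$. Expanding $\sqrt{1-4x}$ with the binomial series gives $b_l = \frac{1}{l}\binom{2l-2}{l-1}$. This is the Catalan number $c_{l-1}$, which one can also confirm from the standard identity $c_n=\sum_{i} c_i c_{n-1-i}$ with $c_0=1$.

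Alternatively, a direct induction using the Catalan convolution identity avoids generating functions. This is consistent with citing \cite[Example~14.9]{lintCombinatorics}.

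\textbf{Main obstacle.} The only non-routine point is the bookkeeping that the vertex-labelling convention of Definition~\ref{def:binary-tree} really makes the decomposition a bijection. In particular, planar trees must be counted with left and right distinguished, so that $(T_1,T_2)$ and $(T_2,T_1)$ are different trees. This is ensured by the parity of $k$: the left child of the root has even labels and the right child has odd labels. Once this is checked, the computation is standard.
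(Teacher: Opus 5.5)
Your proof is correct and is essentially the argument the paper relies on. The paper gives no proof of its own; it cites the standard Catalan-family computation in van Lint--Wilson, which is exactly your root decomposition $b_l=\sum_{l_1+l_2=l}b_{l_1}b_{l_2}$, i.e.\ $B=x+B^2$.

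Your check that the labelling in Definition~\ref{def:binary-tree} really encodes planar full binary trees is a useful addition that the paper leaves implicit (the parity of $k$ separates the two subtrees, and degrees go $3\to 2$ and $1\to 0$ at the new roots). Note that it uses the paper's tacit convention that the one-vertex tree counts as having one leaf, even though its root has degree $0$ and so is not literally in $\vleaf(T)$.
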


Hence, $\card{\mathfrak{B}_l} < (2e)^{l-1}$; the growth rate of binary trees is non-factorial, hence significantly better than even that of normal trees. On the other hand, it turns out that our expansion in binary trees gives rise to a sum over yet another structure: Leaf trees of binary trees. These are defined and analysed in the next subsection.

\subsection{Combinatorics of binary trees}

We begin by defining so-called leaf trees of binary trees.

\begin{definition}[(Induced) leaf tree of a binary tree]
\label{def:induced-leaf-tree}
Let $T$ be a binary tree. An \textit{(induced) leaf tree} of $T$ is a graph $L=(V,E)$ with $V = \vleaf(T)$, that fulfils the following property:
\begin{equation}
\begin{split}
&\forall v \in \vfork(T)\ \exists ! \set{w_1,w_2} \in E \\
&\text{ such that } w_1 \in \mathcal{L}_T(c_T^{(1)}(v)) \text{ and } w_2 \in \mathcal{L}_T(c_T^{(2)}(v)).
\end{split}
\end{equation}
For an induced leaf tree $L$ of $T$, define the mappings
\begin{equation}
c_L^{(i)}:\ \vfork(T) \to \vleaf(T)
\end{equation}
for $i \in \set{1,2}$ such that for any $v \in \vfork(T)$, we have $c_L^{(i)}(v) \in \mathcal{L}_T(c_T^{(i)}(v))$ (for $i \in \set{1,2}$) and $\set{c_L^{(1)}(v), c_L^{(2)}(v)} \in E(L)$.
We use the following notation:
\begin{equation}
\mathfrak{L}(T) = \set{L : \text{$L$ is an induced leaf tree of $T$}}.
\end{equation}
\end{definition}

\begin{remark}
Note that every induced leaf tree of a binary tree is indeed a tree, although we did not explicitly require it in the definition (one can show this by induction).
\end{remark}

In Figure~\ref{fig:leaf-trees}, we give an example for a binary tree $T$ and all leaf trees that it induces.

\begin{figure}[h!]
\tiny
\begin{tikzpicture}
\node at (0,2) {$T:$};

\draw (1,2) -- (0.5,1);
\draw (1,2) -- (1.5,1);

\draw (0.5,1) -- (0,0);
\draw (0.5,1) -- (1,0);

\node[circle, fill=black, inner sep=0, minimum size=3pt] at (1,2) {};
\node[circle, fill=black, inner sep=0, minimum size=3pt] at (0.5,1) {};
\node[circle, fill=black, inner sep=0, minimum size=3pt] at (1.5,1) {};
\node[circle, fill=black, inner sep=0, minimum size=3pt] at (0,0) {};
\node[circle, fill=black, inner sep=0, minimum size=3pt] at (1,0) {};

\draw (4,2) -- (3.5,1);
\draw (4,2) -- (4.5,1);

\draw (3.5,1) -- (3,0);
\draw (3.5,1) -- (4,0);

\draw[blue,dashed] (3,0) .. controls (3.5,-0.25) .. (4,0);
\draw[blue,dashed] (4,0) .. controls (4.25,0) .. (4.5,1);

\node[blue] at (3.75,-0.6) {$L_1$};

\node[circle, fill=black, inner sep=0, minimum size=3pt] at (4,2) {};
\node[circle, fill=black, inner sep=0, minimum size=3pt] at (3.5,1) {};
\node[circle, fill=black, inner sep=0, minimum size=3pt] at (4.5,1) {};
\node[circle, fill=black, inner sep=0, minimum size=3pt] at (3,0) {};
\node[circle, fill=black, inner sep=0, minimum size=3pt] at (4,0) {};

\draw (7,2) -- (6.5,1);
\draw (7,2) -- (7.5,1);

\draw (6.5,1) -- (6,0);
\draw (6.5,1) -- (7,0);

\draw[blue,dashed] (6,0) .. controls (6.5,-0.25) .. (7,0);
\draw[blue,dashed] (6,0) .. controls (6.5,-0.5) and (7.5,-0.5) .. (7.5,1);

\node[blue] at (6.75,-0.6) {$L_2$};

\node[circle, fill=black, inner sep=0, minimum size=3pt] at (7,2) {};
\node[circle, fill=black, inner sep=0, minimum size=3pt] at (6.5,1) {};
\node[circle, fill=black, inner sep=0, minimum size=3pt] at (7.5,1) {};
\node[circle, fill=black, inner sep=0, minimum size=3pt] at (6,0) {};
\node[circle, fill=black, inner sep=0, minimum size=3pt] at (7,0) {};
\end{tikzpicture}
\normalsize
\caption{A binary tree $T$ (left) and (drawn with dashed blue lines) the two possible leaf trees $L_1$ and $L_2$ that it induces (right). With the notation introduced in Definition~\ref{def:induced-leaf-tree}, $\L(T) = \set{L_1,L_2}$.}\label{fig:leaf-trees}
\end{figure}

We can now state the following central lemma:

\begin{lemma}[Number of induced leaf trees of binary trees]
\label{lem:num-of-induced-leaf-trees}
Let $l \in \N$, $T \in \mathfrak{B}_l$, and $d = (d_w)_{w \in \vleaf(T)}$ with $d_w \geq 1$ for all $w \in \vleaf(T)$. Then:
\begin{enumerate}[label=(\roman*)]
\item We have:
\begin{equation}
\label{eq:num-of-induced-leaf-trees-1}
\card{\mathfrak{L}(T)} = \prod_{v \in V(T) \setminus \set{*}} \card{\mathcal{L}_T(v)}.
\end{equation}
\item Define:
\begin{equation}
\mathfrak{L}(T, d)
= \set{L \in \mathfrak{L}(T) : d_L(w) = d_w\ \forall w \in \vleaf(T)}.
\end{equation}
Then, we either have $\card{\mathfrak{L}(T, d)} = 0$ or
\begin{equation}
\label{eq:num-of-induced-leaf-trees-2}
\card{\mathfrak{L}(T, d)}
= \frac{\prod_{v \in \vfork(T) \setminus \set{*}} \left( \sum_{w \in \mathcal{L}_T(v)} (d_w - 2) + 2 \right)}{\prod_{w \in \vleaf(T)} (d_w - 1)!}.
\end{equation}
\end{enumerate}
\end{lemma}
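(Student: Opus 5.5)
The plan is to describe an induced leaf tree as a set of independent choices made at the forks, and then count those choices directly.

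\emph{Part (i).} By Definition~\ref{def:induced-leaf-tree}, every fork $v$ carries exactly one edge $\{c_L^{(1)}(v),c_L^{(2)}(v)\}$ with $c_L^{(i)}(v)\in\mathcal{L}_T(c_T^{(i)}(v))$. The key fact is that distinct forks carry distinct edges: $v$ is recovered from $\{w_1,w_2\}$ as the last common ancestor of $w_1$ and $w_2$ in $T$. Hence $L$ has exactly $l-1$ edges, one per fork, and $L$ is determined by the map $v\mapsto (c_L^{(1)}(v),c_L^{(2)}(v))$. Conversely, every such choice of pairs defines an element of $\mathfrak{L}(T)$. This gives a bijection with $\prod_{v\in\vfork(T)}\mathcal{L}_T(c_1(v))\times\mathcal{L}_T(c_2(v))$, so
\[
\card{\mathfrak{L}(T)}=\prod_{v\in\vfork(T)}\card{\mathcal{L}_T(c_1(v))}\,\card{\mathcal{L}_T(c_2(v))}.
\]
Every non-root vertex is a child of exactly one fork, so the product equals \eqref{eq:num-of-induced-leaf-trees-1}.

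\emph{Part (ii).} I will use a half-edge counting argument. For $v\in V(T)$ set $k_v=\sum_{w\in\mathcal{L}_T(v)}(d_w-2)+2$, so that $k_w=d_w$ for a leaf $w$.

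Step 1: interpret $k_v$. The edges of $L$ lying inside $\mathcal{L}_T(v)$ are exactly those carried by forks below $v$, and there are $\card{\mathcal{L}_T(v)}-1$ of them. Hence $k_v$ is the number of edges of $L$ with exactly one endpoint in $\mathcal{L}_T(v)$, and in particular $k_*=0$. If $v$ is a fork with children $a,b$, this gives
\[
k_v=(k_a-1)+(k_b-1),
\]
because the edge carried by $v$ crosses out of both $\mathcal{L}_T(a)$ and $\mathcal{L}_T(b)$ but not out of $\mathcal{L}_T(v)$.

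Step 2: build $L$ top-down. Each edge has a well-defined fork, so the $k_v$ edges leaving $\mathcal{L}_T(v)$ are distinguishable, labelled by forks above $v$. Processing forks in the order of Definition~\ref{def:binary-tree}, at fork $v$ the $k_v$ ``incoming'' edge ends must be split between $\mathcal{L}_T(a)$ ($k_a-1$ of them) and $\mathcal{L}_T(b)$ ($k_b-1$ of them). The edge of $v$ itself then contributes one further end to each side. This split can be done in $\binom{k_v}{k_a-1}$ ways.

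Step 3: check the bijection. At a leaf $w$, all $d_w$ arriving ends attach to $w$ with no further choice. The resulting data determines every $c_L^{(i)}(v)$, and distinct choices give distinct $L$. Conversely, every $L\in\mathfrak{L}(T,d)$ arises in this way. It is also automatic that the degrees come out as $d_w$.

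Step 4: telescope. The count is therefore
\[
\prod_{v\in\vfork(T)}\frac{k_v!}{(k_{c_1(v)}-1)!\,(k_{c_2(v)}-1)!}.
\]
Each non-root fork $u$ contributes $k_u!$ in its own numerator and $(k_u-1)!$ in its parent's denominator, leaving a factor $k_u$. Each leaf leaves $1/(d_w-1)!$, and the root contributes $k_*!=0!=1$. This is exactly \eqref{eq:num-of-induced-leaf-trees-2}. The degenerate case $\card{\mathfrak{L}(T,d)}=0$ occurs precisely when some binomial is undefined, i.e.\ when some $k_v<1$ for $v\neq *$ or when $k_*\neq0$ (equivalently $\sum_w d_w\neq 2(l-1)$).

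The main obstacle is Step 3: making rigorous that incoming ends are genuinely distinguishable (so binomials, not multisets, are correct) and that no overcounting occurs at the leaves. I would handle this by identifying each end with the fork that carries its edge, and arguing by induction over the ordered fork product as in Theorem~\ref{thm:formal-solution}.
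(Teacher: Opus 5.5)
Your proof is correct. The counting is the same as the paper's, but it runs in the opposite direction.

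\textbf{Part (i).} The paper removes the root and uses the recursion $\card{\mathfrak{L}(T)}=\card{\mathfrak{L}(T_1)}\,\card{\mathfrak{L}(T_2)}\,\card{\mathcal{L}_T(c_T^{(1)}(*))}\,\card{\mathcal{L}_T(c_T^{(2)}(*))}$. Your last-common-ancestor bijection is this recursion unrolled. Your step ``hence $L$ has exactly $l-1$ edges'' also needs the converse: every edge of $L$ is carried by the fork at its endpoints' last common ancestor. The same argument gives this, so it is only an omission of wording.

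\textbf{Part (ii).} The paper inducts from the lowest leaves. It contracts a cherry $w_1,w_2$ into its parent $v$ with $d'_v=d_{w_1}+d_{w_2}-2$. This is exactly your $k_v$, and the paper records it in the main text as the ``degree'' of the fork $v$. It then shows that every fibre of the contraction map has size $\binom{d'_v}{d_{w_1}-1}$. Iterating this gives precisely your product $\prod_{v}\binom{k_v}{k_{c_1(v)}-1}$ and the same telescoping.

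\textbf{What your route gives.} Your top-down global bijection makes the telescoping explicit without an inductive hypothesis. It also gives a sharp criterion for the vanishing alternative: $\sum_w d_w\neq 2(l-1)$, or $k_v\le 0$ for some non-root $v$. This is informative, because the right-hand side of the formula in (ii) can be nonzero when the count is $0$. For example, $l=2$, $d=(2,1)$ gives count $0$ but formula value $1$, which is why the lemma is phrased as ``either $0$ or''.

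\textbf{What the paper's route gives.} The bottom-up induction avoids your Step~3 bookkeeping. At each stage one only redistributes the $d'_v$ distinct edges at a single vertex of a simple graph, so distinguishability is automatic. You instead must label the ends by their carrying forks, which you do correctly.
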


The proof of Lemma \ref{lem:num-of-induced-leaf-trees} is given in Appendix \ref{app:combinatorial-proofs}. We just want to explain the idea of the proof here using Figure \ref{fig:types-of-induction} because the inductive scheme of the proof differs in items (i) and (ii) of Lemma \ref{lem:num-of-induced-leaf-trees}.  To prove part (i), we use ``induction from the root'' (see Figure~\ref{fig:induction-root}). For this type of induction, the root $*$ of the initial binary tree $T$ is removed, splitting it into two new binary trees, $T_1$ and $T_2$. During this operation, the associated leaf tree $L$ is also split into two trees, $L_1$ and $L_2$ -- a process during which one edge gets lost. To prove part (ii), we use ``induction from the lowest leaves'' (see Figure~\ref{fig:induction-leaf}). In this case, the binary tree $T$ is transformed into a new tree $T'$ by removing the lowest-hanging ``cherry'' of two leaves. All edges of the leaf tree $L$ that previously terminated in the removed leaves are passed on to the parent vertex $p$ of the removed leaves. This process would yield one self-loop in the parent vertex $p$; this self-loop is omitted. That way, a new leaf tree $L'$ is created.

Now consider a binary tree $T \in \B$ and an induced leaf tree $L \in \L(T)$.
Like any graph, the leaf tree $L=(\vleaf(T),E(L))$ has a degree function \linebreak ${d_L:\ \vleaf(T) \to \N}$, assigning to each vertex a degree (see Definition~\ref{def:properties-of-graphs}).
Using the method from Figure~\ref{fig:induction-leaf} recursively, it is possible to define a ``degree'' $d_L(v)$ of $v$ in $L$ for all forks $v \in \vfork(T)$ of $T$. Take the vertex $p \in \vfork(T)$ from Figure~\ref{fig:induction-leaf} as an example. For this vertex, we have $d_L(p) = d_{L'}(p) = 2$ as can be seen from the right half of the figure.\footnote{We define a degree $d_L(p)$ of $p$ in $L$, although $p$ is not a vertex of $L$. This definition also depends on the associated binary tree $T$. Hence, the correct notation would be $d_L^{(T)}(p)$ instead of $d_L(p)$.} Let us return to general trees $T \in \B$ and $L \in \L(T)$ and to a general vertex $v \in \vfork(T)$. Some basic considerations using the handshaking lemma (Lemma~\ref{lem:handshaking-lemma}) yield the formula\footnote{The expression $e(L[\leavesof{T}{v}])$ denotes the number of edges of the subgraph of $L$ induced by the vertex set $\leavesof{T}{v}$; see Appendix \ref{app:graph-theory}.}
\begin{equation}
\begin{split}
d_L(v)
&= \Bigg( \sum_{w \in \leavesof{T}{v}} d_L(w) \Bigg) - 2 e(L[\leavesof{T}{v}]) \\
&= \Bigg( \sum_{w \in \leavesof{T}{v}} d_L(w) \Bigg) - 2 \left( \card{\leavesof{T}{v}} - 1 \right) \\
&= \sum_{w \in \leavesof{T}{v}} (d_L(w) - 2) + 2.
\end{split}
\end{equation}
This is exactly the expression found in the numerator of \eqref{eq:num-of-induced-leaf-trees-2}.

\begin{figure}
\begin{subfigure}{\textwidth}
\begin{center}
\tiny
\begin{tikzpicture}
\node[anchor=south] at (2,4.25) {$T$};

\draw (2,4) -- (1,3);
\draw (2,4) -- (3,3);

\draw[dotted,thick] (0.75,3) .. controls (1,3.25) .. (1.25,3);
\draw[dotted,thick] (2.75,3) .. controls (3,3.25) .. (3.25,3);

\draw (1,3) -- (0.5,2);
\draw (1,3) -- (1.5,2);
\draw (3,3) -- (2.5,2);
\draw (3,3) -- (3.5,2);

\draw (0.5,2) -- (0,1);
\draw (0.5,2) -- (1,1);

\draw[blue,dashed] (0,1) .. controls (0.5,0.75) .. (1,1);
\draw[blue,dashed] (1,1) .. controls (1.25,1) .. (1.5,2);
\draw[blue,dashed] (1,1) .. controls (1.75,0.5) .. (2.5,2);
\draw[blue,dashed] (2.5,2) .. controls (3,1.75) .. (3.5,2);

\node[blue] at (2,0.5) {$L$};

\node[circle, fill=black, inner sep=0, minimum size=3pt] at (2,4) {};
\node[circle, fill=black, inner sep=0, minimum size=3pt] at (1,3) {};
\node[circle, fill=black, inner sep=0, minimum size=3pt] at (3,3) {};
\node[circle, fill=black, inner sep=0, minimum size=3pt] at (0.5,2) {};
\node[circle, fill=black, inner sep=0, minimum size=3pt] at (1.5,2) {};
\node[circle, fill=black, inner sep=0, minimum size=3pt] at (2.5,2) {};
\node[circle, fill=black, inner sep=0, minimum size=3pt] at (3.5,2) {};
\node[circle, fill=black, inner sep=0, minimum size=3pt] at (0,1) {};
\node[circle, fill=black, inner sep=0, minimum size=3pt] at (1,1) {};

\draw[-stealth, line width=0.75mm] (4,2.5) -- (4.75,2.5);

\node[anchor=south] at (6,3.25) {$T_1$};
\node[anchor=south] at (8,3.25) {$T_2$};

\draw[dashed] (7,4) -- (6,3);
\draw[dashed] (7,4) -- (8,3);

\draw (6,3) -- (5.5,2);
\draw (6,3) -- (6.5,2);
\draw (8,3) -- (7.5,2);
\draw (8,3) -- (8.5,2);

\draw (5.5,2) -- (5,1);
\draw (5.5,2) -- (6,1);

\draw[blue,dashed] (5,1) .. controls (5.5,0.75) .. (6,1);
\draw[blue,dashed] (6,1) .. controls (6.25,1) .. (6.5,2);
\draw[blue,dashed] (7.5,2) .. controls (8,1.75) .. (8.5,2);

\node[blue] at (6,0.5) {$L_1$};
\node[blue] at (8,1.5) {$L_2$};

\node[circle, fill=black, inner sep=0, minimum size=3pt] at (6,3) {};
\node[circle, fill=black, inner sep=0, minimum size=3pt] at (8,3) {};
\node[circle, fill=black, inner sep=0, minimum size=3pt] at (5.5,2) {};
\node[circle, fill=black, inner sep=0, minimum size=3pt] at (6.5,2) {};
\node[circle, fill=black, inner sep=0, minimum size=3pt] at (7.5,2) {};
\node[circle, fill=black, inner sep=0, minimum size=3pt] at (8.5,2) {};
\node[circle, fill=black, inner sep=0, minimum size=3pt] at (5,1) {};
\node[circle, fill=black, inner sep=0, minimum size=3pt] at (6,1) {};
\end{tikzpicture}
\normalsize
\end{center}
\caption{Induction from the root.}
\label{fig:induction-root}
\end{subfigure}

\vspace{0.5cm}

\begin{subfigure}{\textwidth}
\begin{center}
\tiny
\begin{tikzpicture}
\node[anchor=south] at (2,4.25) {$T$};

\draw (2,4) -- (1,3);
\draw (2,4) -- (3,3);

\draw[dotted,thick] (0.75,3) .. controls (1,3.25) .. (1.25,3);
\draw[dotted,thick] (2.75,3) .. controls (3,3.25) .. (3.25,3);

\draw (1,3) -- (0.5,2);
\draw (1,3) -- (1.5,2);
\draw (3,3) -- (2.5,2);
\draw (3,3) -- (3.5,2);

\draw (0.5,2) -- (0,1);
\draw (0.5,2) -- (1,1);

\draw[blue,dashed] (0,1) .. controls (0.5,0.75) .. (1,1);
\draw[blue,dashed] (1,1) .. controls (1.25,1) .. (1.5,2);
\draw[blue,dashed] (1,1) .. controls (1.75,0.5) .. (2.5,2);
\draw[blue,dashed] (2.5,2) .. controls (3,1.75) .. (3.5,2);

\node[blue] at (2,0.5) {$L$};

\node[circle, fill=black, inner sep=0, minimum size=3pt] at (2,4) {};
\node[circle, fill=black, inner sep=0, minimum size=3pt] at (1,3) {};
\node[circle, fill=black, inner sep=0, minimum size=3pt] at (3,3) {};
\node[circle, fill=black, inner sep=0, minimum size=3pt] at (0.5,2) {};
\node[circle, fill=black, inner sep=0, minimum size=3pt] at (1.5,2) {};
\node[circle, fill=black, inner sep=0, minimum size=3pt] at (2.5,2) {};
\node[circle, fill=black, inner sep=0, minimum size=3pt] at (3.5,2) {};
\node[circle, fill=black, inner sep=0, minimum size=3pt] at (0,1) {};
\node[circle, fill=black, inner sep=0, minimum size=3pt] at (1,1) {};

\draw[-stealth, line width=0.75mm] (4,2.5) -- (4.75,2.5);

\node[anchor=south] at (6,3.25) {$T_1$};
\node[anchor=south] at (8,3.25) {$T_2$};

\draw[dashed,dashed] (7,4) -- (6,3);
\draw[dashed,dashed] (7,4) -- (8,3);

\draw (6,3) -- (5.5,2);
\draw (6,3) -- (6.5,2);
\draw (8,3) -- (7.5,2);
\draw (8,3) -- (8.5,2);

\draw (5.5,2) -- (5,1);
\draw (5.5,2) -- (6,1);

\draw[blue,dashed] (5,1) .. controls (5.5,0.75) .. (6,1);
\draw[blue,dashed] (6,1) .. controls (6.25,1) .. (6.5,2);
\draw[blue,dashed] (6,1) .. controls (6.4,0.75) .. (6.83,1);
\draw[blue,dashed] (7.17,1) .. controls (7.5,0.75) .. (7.5,2);
\draw[blue,dashed] (7.5,2) .. controls (8,1.75) .. (8.5,2);

\node[blue] at (6,0.5) {$L_1$};
\node[blue] at (8,1.5) {$L_2$};

\node[circle, fill=black, inner sep=0, minimum size=3pt] at (6,3) {};
\node[circle, fill=black, inner sep=0, minimum size=3pt] at (8,3) {};
\node[circle, fill=black, inner sep=0, minimum size=3pt] at (5.5,2) {};
\node[circle, fill=black, inner sep=0, minimum size=3pt] at (6.5,2) {};
\node[circle, fill=black, inner sep=0, minimum size=3pt] at (7.5,2) {};
\node[circle, fill=black, inner sep=0, minimum size=3pt] at (8.5,2) {};
\node[circle, fill=black, inner sep=0, minimum size=3pt] at (5,1) {};
\node[circle, fill=black, inner sep=0, minimum size=3pt] at (6,1) {};
\node[circle, fill=blue, inner sep=0, minimum size=3pt] at (6.83,1) {};
\node[circle, fill=blue, inner sep=0, minimum size=3pt] at (7.17,1) {};
\end{tikzpicture}
\normalsize
\end{center}
\caption{Induction from the root (on trees with extra vertices).}
\label{fig:induction-root-extra-vertices}
\end{subfigure}

\vspace{0.5cm}

\begin{subfigure}{\textwidth}
\begin{center}
\tiny
\begin{tikzpicture}
\node[anchor=south] at (2,4.25) {$T$};

\draw (2,4) -- (1,3);
\draw (2,4) -- (3,3);

\draw (1,3) -- (0.5,2);
\node[anchor=east] at (0.25,2) {$p$};
\draw (1,3) -- (1.5,2);
\draw (3,3) -- (2.5,2);
\draw (3,3) -- (3.5,2);

\draw[dotted,thick] (0.25,2) .. controls (0.5,2.25) .. (0.75,2);

\draw (0.5,2) -- (0,1);
\draw (0.5,2) -- (1,1);

\draw[blue,dashed] (0,1) .. controls (0.5,0.75) .. (1,1);
\draw[blue,dashed] (1,1) .. controls (1.25,1) .. (1.5,2);
\draw[blue,dashed] (1,1) .. controls (1.75,0.5) .. (2.5,2);
\draw[blue,dashed] (2.5,2) .. controls (3,1.75) .. (3.5,2);

\node[blue] at (2,0.5) {$L$};

\node[circle, fill=black, inner sep=0, minimum size=3pt] at (2,4) {};
\node[circle, fill=black, inner sep=0, minimum size=3pt] at (1,3) {};
\node[circle, fill=black, inner sep=0, minimum size=3pt] at (3,3) {};
\node[circle, fill=black, inner sep=0, minimum size=3pt] at (0.5,2) {};
\node[circle, fill=black, inner sep=0, minimum size=3pt] at (1.5,2) {};
\node[circle, fill=black, inner sep=0, minimum size=3pt] at (2.5,2) {};
\node[circle, fill=black, inner sep=0, minimum size=3pt] at (3.5,2) {};
\node[circle, fill=black, inner sep=0, minimum size=3pt] at (0,1) {};
\node[circle, fill=black, inner sep=0, minimum size=3pt] at (1,1) {};

\draw[-stealth, line width=0.75mm] (4,2.5) -- (4.75,2.5);

\node[anchor=south] at (7,4.25) {$T'$};

\draw (7,4) -- (6,3);
\draw (7,4) -- (8,3);

\draw (6,3) -- (5.5,2) node[anchor=east] {$p$};
\draw (6,3) -- (6.5,2);
\draw (8,3) -- (7.5,2);
\draw (8,3) -- (8.5,2);

\draw[blue,dashed] (5.5,2) .. controls (6,1.5) .. (7.5,2);
\draw[blue,dashed] (5.5,2) .. controls (6,1.75) .. (6.5,2);
\draw[blue,dashed] (7.5,2) .. controls (8,1.75) .. (8.5,2);

\node[blue] at (7,1.25) {$L'$};

\node[circle, fill=black, inner sep=0, minimum size=3pt] at (7,4) {};
\node[circle, fill=black, inner sep=0, minimum size=3pt] at (6,3) {};
\node[circle, fill=black, inner sep=0, minimum size=3pt] at (8,3) {};
\node[circle, fill=black, inner sep=0, minimum size=3pt] at (5.5,2) {};
\node[circle, fill=black, inner sep=0, minimum size=3pt] at (6.5,2) {};
\node[circle, fill=black, inner sep=0, minimum size=3pt] at (7.5,2) {};
\node[circle, fill=black, inner sep=0, minimum size=3pt] at (8.5,2) {};
\end{tikzpicture}
\normalsize
\end{center}
\caption{Induction from the lowest leaves.}
\label{fig:induction-leaf}
\end{subfigure}
\caption{Schematic representation of different types of induction steps on leaf trees and their associated binary tree.}
\label{fig:types-of-induction}
\end{figure}

\bigskip
In the previous lemma, we have found an exact expression for $\card{\mathfrak{L}(T,d)}$, i.e. for the number of induced leaf trees given a certain degree sequence. However, later we will need a bound for $\card{\mathfrak{L}(T,d)}$ that is of a slightly different form:

\begin{lemma}
\label{lem:num-of-leaf-trees-bound}
Let $l \in \N$, $T \in \mathfrak{B}_l$, and $d = (d_w)_{w \in \vleaf(T)}$ with $d_w \geq 1$ for all $w \in \vleaf(T)$. Then
\begin{equation}\label{eq:num-of-leaf-trees-bound}
\card{\mathfrak{L}(T, d)}
\leq 
\begin{cases}
\frac{\prod_{v \in V(T) \setminus \set{*}} \card{\mathcal{L}_T(v)}}{\prod_{w \in \vleaf(T)} (d_w - 1)!}  & \text{ if } \sum\limits_{w \in \vleaf(T)} d_w = 2(l-1) \\
0 & \text{ otherwise. }
\end{cases}
\end{equation}
\end{lemma}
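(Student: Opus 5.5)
The plan is to split into the two cases of \eqref{eq:num-of-leaf-trees-bound} and, in the nontrivial case, reduce to the exact count in Lemma~\ref{lem:num-of-induced-leaf-trees}(ii) plus one combinatorial inequality between two products over the forks of $T$.

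\textbf{Vanishing case.} By the Remark after Definition~\ref{def:induced-leaf-tree}, every $L\in\L(T)$ is a tree on the $l$ vertices $\vleaf(T)$, so it has exactly $l-1$ edges. The handshaking lemma (Lemma~\ref{lem:handshaking-lemma}) then gives $\sum_{w} d_L(w)=2(l-1)$. Hence if $\sum_w d_w\neq 2(l-1)$, no $L$ has degree sequence $d$, so $\mathfrak{L}(T,d)=\emptyset$ and the bound $0$ holds.

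\textbf{Nontrivial case, reduction.} Assume $\sum_w d_w=2(l-1)$ and $\mathfrak{L}(T,d)\neq\emptyset$; otherwise there is nothing to show. Lemma~\ref{lem:num-of-induced-leaf-trees}(ii) gives the exact formula \eqref{eq:num-of-induced-leaf-trees-2}. The denominators in that formula and in \eqref{eq:num-of-leaf-trees-bound} coincide. Every leaf $w\neq *$ has $\card{\mathcal{L}_T(w)}=1$, so the target numerator equals $\prod_{v\in\vfork(T)\setminus\{*\}}\card{\mathcal{L}_T(v)}$. It therefore suffices to show
\begin{equation*}
\prod_{v\in\vfork(T)\setminus\{*\}} d_L(v)\;\le\;\prod_{v\in\vfork(T)\setminus\{*\}}\card{\mathcal{L}_T(v)},
\qquad d_L(v)=\sum_{w\in\mathcal{L}_T(v)}(d_w-2)+2,
\end{equation*}
for one (equivalently, any) $L\in\mathfrak{L}(T,d)$. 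By the handshaking computation in the text preceding Figure~\ref{fig:types-of-induction}, $d_L(v)$ is the number of edges of $L$ joining $\mathcal{L}_T(v)$ to its complement, so $d_L(v)\ge 1$. By Definition~\ref{def:induced-leaf-tree}, each edge of $L$ is the unique edge assigned to some fork $u$, with one endpoint in $\mathcal{L}_T(c_1(u))$ and the other in $\mathcal{L}_T(c_2(u))$. An edge crosses the cut of $v$ exactly when its fork $u$ is a strict ancestor of $v$ and the endpoint on $v$'s side lies in $\mathcal{L}_T(v)$.

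\textbf{Main obstacle: the product inequality.} A factor-by-factor comparison fails: $d_L(v)$ is bounded by the depth of $v$, not by $\card{\mathcal{L}_T(v)}$. So the inequality must be proved globally, and I would use induction from the root (Figure~\ref{fig:induction-root}). Removing $*$ splits $T$ into $T_1,T_2$ and $L$ into $L_1,L_2$ plus the root edge $\{w_1,w_2\}$. For forks $v$ of $T_i$ one has $d_L(v)=d_{L_i}(v)+\mathbb{1}[v \text{ lies on the path from } c_i(*) \text{ to } w_i]$, and the new factors $d_L(c_i(*))$ enter at the two children of the root.

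The bare statement is not directly inductive, because of these $+1$ shifts along a root-to-leaf path. I would therefore first try to strengthen the hypothesis to the following: for every leaf $w_0$ of $T$,
\begin{equation*}
\prod_{v}\bigl(d_L(v)+\mathbb{1}[v\text{ on the path to } w_0]\bigr)\le\prod_{v}\card{\mathcal{L}_T(v)}
\end{equation*}
over forks $v$ of $T$ including the root, where the root's own factor is $1+\mathbb{1}[\,\cdot\,]$. This is then checked by splitting at the root. The point to verify is that the extra path factors are absorbed by the $\card{\mathcal{L}_T(c_i(*))}\ge 2$ factors at the children of the root. If this fails, the fallback is an injective encoding of $L$ that records, for each fork $v$, one leaf of $\mathcal{L}_T(v)$, so that $\card{\mathfrak{L}(T,d)}\prod_w(d_w-1)!$ is bounded by the count of such records.
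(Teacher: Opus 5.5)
Your reduction is correct and matches the paper's. The vanishing case is handshaking, and by Lemma~\ref{lem:num-of-induced-leaf-trees}(ii) everything comes down to $\prod_{v\in\vfork(T)\setminus\{*\}} d_L(v)\le\prod_{v\in\vfork(T)\setminus\{*\}}\card{\leavesof{T}{v}}$. You rightly note this cannot be done factor by factor, and attaching one extra leg at a leaf $w_0$ is exactly what the paper does.

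The gap is the induction meant to prove your strengthened inequality: a single path indicator is not preserved by the root split you propose to check it with. Suppose $w_0\in\vleaf(T_1)$ and the root edge of $L$ is $\{w_1,w_2\}$ with $w_i\in\vleaf(T_i)$. Then for every fork $v\neq c_1(*)$ of $T_1$ the factor becomes
\[
d_{L_1}(v)+\bbbone[w_1\in\leavesof{T}{v}]+\bbbone[w_0\in\leavesof{T}{v}],
\]
with two indicators, and deeper splits accumulate more. Your hypothesis does not cover such products. Bounding them by one-indicator products loses a factor at every fork on the path to $w_0$. For a caterpillar-shaped $T_1$ with $d_{L_1}=1$ along the spine and $w_0=w_1$ at the bottom, each factor goes from $2$ to $3$. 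That is a loss of $(3/2)^m$ over $m$ forks, which the single factor $\card{\leavesof{T}{c_1(*)}}\le l$ cannot absorb. There is also a normalization problem: with root factor $1+\bbbone[\cdot]=2$ on the left against $\card{\leavesof{T}{*}}=l$ on the right, the claim applied to $T$ itself gives your target only up to a factor $l/2$. The fallback injective encoding is never constructed, so it does not fill the gap.

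The missing idea is to let the number of legs be arbitrary. Consider leaf trees with $k$ extra vertices; for these, handshaking gives root factor $\sum_{w\in\leavesof{T}{*}}(d_L(w)-2)+2=k$. Then prove the $k$-dependent bound
\[
F(l,T,k,L)=\frac{\prod_{v\in\vfork(T)}\bigl(\sum_{w\in\leavesof{T}{v}}(d_L(w)-2)+2\bigr)}{\prod_{v\in V(T)\setminus\{*\}}\card{\leavesof{T}{v}}}\le\binom{k+l-2}{k-1}.
\]
This family is closed under the root split: the two halves carry $k_1,k_2\ge1$ legs with $k_1+k_2=k+2$, and $F=F_1F_2\,k/(l_1l_2)$. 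The induction step then closes with $l_1l_2\ge l_1+l_2-1$ and Vandermonde's identity. For $k=1$ the bound is $1$. That is precisely your one-indicator inequality, with root factor $1$ and no factor $\card{\leavesof{T}{*}}$ on the right, and it yields the lemma.
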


The proof of this lemma can be found in Appendix~\ref{app:combinatorial-proofs}, too. The proof uses a slightly more general setting, namely leaf trees with extra vertices. The induction step is similar to the one shown in Figure~\ref{fig:induction-root} with the difference that the edge formerly connecting $L_1$ and $L_2$ is not lost, but split into two edges connecting to extra vertices. This is shown in Figure~\ref{fig:induction-root-extra-vertices}.

As a final step of preparation, we define a mapping $\mathfrak{B} \times \R_{\geq 0} \to \R_{\geq 0}$ that will become important later on and investigate it combinatorially.
\randbem{\vglue1cm suggestion: denote it as $\int_0^r d^T s$.}

\begin{definition}
Let
\begin{equation}
\int:\ \mathfrak{B} \times \R_{\geq 0} \to \R_{\geq 0},\ (T,r) \mapsto \int_0^r{T} = \overset{\rightarrow}{\prod_{v \in \vfork(T)}} \int_0^{s_{p_T(v)}} \dd{s_v}\; \frac{1}{2},
\end{equation}
where the product is ordered according to the total order on $V(T)$ defined in Definition~\ref{def:binary-tree} and where $s_{p_T(*)} = s_\dagger = r$.
\end{definition}

\begin{lemma}
\label{lem:bound-on-tree-integral}
For $T \in \mathfrak{B}$ and $r \in \R_{\geq 0}$, we have:
\begin{equation}
\int_0^r{T}
\leq r^{l(T) - 1} \cdot \prod_{v \in V(T)} (\card{\mathcal{L}_T(v)})^{-1}.
\end{equation}
\end{lemma}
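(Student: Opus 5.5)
We prove the bound by induction on the number of leaves $l(T)$, using the recursive structure of the nested integral: removing the root $*$ splits $T$ into the two subtrees $T_1$ and $T_2$ rooted at $c_1(*)$ and $c_2(*)$. In fact we will establish the sharper identity
\begin{equation*}
\int_0^r T = r^{l(T)-1}\, 2^{-\card{\vfork(T)}} \prod_{v \in \vfork(T)} \card{\mathcal{L}_T(v)}^{-1},
\end{equation*}
i.e.\ the nested time-ordered integral equals $r^{l-1}/2^{l-1}$ times the hook-length-type product over forks. This implies the claim, because leaves contribute factors $\card{\mathcal{L}_T(w)}^{-1}=1$ and $2^{-(l-1)}\le 1$.

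The base case is the trivial tree $(\set{*},\emptyset)$. It has no forks, so the product is empty and $\int_0^r T = 1 = r^0$.

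For the induction step, let $l(T)\ge 2$, so $*$ is a fork. The integration variable $s_*$ runs over $[0,r]$. Every other fork $v$ lies in exactly one of $T_1,T_2$, and its upper limit $s_{p(v)}$ is either $s_*$ (if $v$ is a child of $*$) or another variable of the same subtree. The ordered product is only an iterated-integral bookkeeping device: the integrand is constant ($\frac12$ per fork), and each limit depends only on ancestors, which precede $v$ in the total order. Hence Fubini lets us regroup the integrals so that, for fixed $s_*$, they factorize:
\begin{equation*}
\int_0^r T = \frac12 \int_0^r \dd s_* \; \Big(\int_0^{s_*} T_1\Big)\Big(\int_0^{s_*} T_2\Big).
\end{equation*}
Here $T_1,T_2$ are relabelled as binary trees in the sense of Definition~\ref{def:binary-tree}. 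The relabelling preserves leaf sets and the sizes $\card{\mathcal{L}(v)}$, and the ordering is irrelevant, as just explained.

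We now apply the induction hypothesis (the identity form) to $T_1$ and $T_2$. The integrand becomes $s_*^{l_1-1+l_2-1}$ times constants, where $l_1+l_2=l(T)$. Integrating gives
\begin{equation*}
\frac12\int_0^r s^{l-2}\,\dd s = \frac{r^{l-1}}{2(l-1)}.
\end{equation*}
It remains to match the factor $\frac{1}{l-1}$ with the product over forks. Since $\card{\mathcal{L}_T(*)} = l$, we need $\frac{1}{l-1}\le \frac{1}{l}\cdot C$ for a suitable constant $C$. The exact identity above is therefore not quite right with $\card{\mathcal{L}}$. The correct exact statement uses the number of forks below $v$, which is $\card{\mathcal{L}_T(v)}-1$:
\begin{equation*}
\int_0^r T = \Big(\tfrac r2\Big)^{l-1}\prod_{v\in\vfork(T)}\big(\card{\mathcal{L}_T(v)}-1\big)^{-1}.
\end{equation*}
The induction goes through cleanly with this form.

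The remaining step, which is the only real obstacle, is to compare $\prod_{v\in \vfork}(\card{\mathcal{L}(v)}-1)^{-1}$ with $2^{l-1}\prod_{v\in\vfork}\card{\mathcal{L}(v)}^{-1}$. For each fork we have $\card{\mathcal{L}(v)}\ge 2$, and therefore
\begin{equation*}
\frac{\card{\mathcal{L}(v)}}{\card{\mathcal{L}(v)}-1}\le 2.
\end{equation*}
Taking the product over the $l-1$ forks, the factor $2^{l-1}$ cancels exactly against $2^{-(l-1)}$. This yields
\begin{equation*}
\int_0^r T\le r^{l-1}\prod_{v\in V(T)}\card{\mathcal{L}_T(v)}^{-1},
\end{equation*}
as claimed, with leaves contributing the factor $1$.
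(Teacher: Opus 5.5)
Your final argument is correct and is essentially the paper's proof. You use the same root-splitting induction, and your exact identity $\int_0^r T=(r/2)^{l-1}\prod_{v\in\vfork(T)}(\card{\mathcal{L}_T(v)}-1)^{-1}$ followed by $\card{\mathcal{L}_T(v)}/(\card{\mathcal{L}_T(v)}-1)\le 2$ is just the paper's per-step estimate $\frac{1}{2(l-1)}\le\frac1l$ applied at every fork. Delete the opening ``sharper identity'' $r^{l-1}2^{-\card{\vfork(T)}}\prod_{v}\card{\mathcal{L}_T(v)}^{-1}$: it is false (for $l=2$ it gives $r/4$ instead of $r/2$), and you retract it anyway.
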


\begin{proof}
Let $T \in \mathfrak{B}$ and $r \in \R_{\geq 0}$. We prove the lemma using induction on $l(T)$. For $l(T) = 1$, the statement is clear. Now, let $l(T) \geq 2$. After removing the root $*$ from $T$, it breaks apart into two binary trees $T_1$ and $T_2$ rooted at $c_T^{(1)}(*)$ and $c_T^{(2)}(*)$ respectively\footnote{This stretches our definition of a binary tree a bit as we required binary trees to be rooted at $(0,0)$ above, but this detail does not matter for this proof.} (see Figure~\ref{fig:induction-root}). Then by definition   
\beq
\int_0^r{T}
= \int_0^r \dd{s_*} \; \frac{1}{2} \left( \int_0^{s_*}{T_1} \right) \left( \int_0^{s_*}{T_2} \right) \;.
\eeq
The inductive hypothesis applies to both factors on the right hand side, yielding a power
\beq
\card{\mathcal{L}_T\left(c_T^{(1)}(*)\right)} - 1 \;  + \; \card{\mathcal{L}_T\left(c_T^{(2)}(*)\right)} - 1
=
l(T) -2
\eeq
of $s_*$ and a factor
\beq
\prod_{v \in V(T_1) \cup V(T_2)} (\card{\mathcal{L}_T(v)})^{-1} 
=
\prod_{v \in V(T) \setminus \set{*}} (\card{\mathcal{L}_T(v)})^{-1}
\eeq
so that
\beq
\int_0^r{T}
= 
\sfrac{1}{2}  \int_0^r s_*^{l(T)-2} \; \dd{s_*} \;  \prod_{v \in V(T) \setminus \set{*}} (\card{\mathcal{L}_T(v)})^{-1}
\eeq
The integral over $s_*$ gives
\beq
\sfrac{1}{2}  \int_0^r s_*^{l(T)-2} \; \dd{s_*} 
=
\frac{r^{l(T)-1}}{2 (l(T) -1)} 
\le
\frac{r^{l(T)-1}}{l(T)}
\eeq
Noting that $l(T) = \card{\mathcal{L}_T(*)}$, this completes the proof. 
\end{proof}

\section{Proof of convergence}
\label{sec:proof-of-convergence}

In this section, we prove the convergence Theorem~\ref{thm:convergence-of-formal-solution} by building on Section~\ref{sec:combinatorics-of-graphs}. To make the presentation more self-containted, we first review the role of determinants in fermionic quantum field theory and the technique of ``trimming the tree''.

\subsection{Role of determinants in fermionic QFT}
\label{subsec:determinants}

The following lemma shows how determinants naturally arise when applying exponentiated laplacians to monomials.

\begin{lemma}
\label{lem:laplacians-and-determinants}
Let $D \subseteq \mathcal{X}$ and $\Mcov \in M_N(\R)$. Then, there exist $\varepsilon_D^{A,A'} \in \set{\pm 1}$ for all $A,A' \subseteq D$ such that
\begin{equation}
e^{\Delta_\Mcov} \prod_{X \in D} \psi(X)
= \sum_{\substack{A,A' \subseteq D \\ \card{A}=\card{A'} \\ A \cap A' = \emptyset}} \varepsilon_D^{A,A'} \det(\Mcov_{A,A'}) \prod_{X \in D \setminus (A \cup A')} \psi(X).
\end{equation}
Here, $\Mcov_{A,A'}$ denotes a quadratic minor of the matrix $\Mcov$.
\end{lemma}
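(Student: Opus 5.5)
Our plan is to expand the exponential of the Laplacian as a finite power series and track the action of each power on the monomial. First write $\Delta_\Mcov = \int \dd{X}\,\dd{Y}\,\Mcov(X,Y)\nabla_X\nabla_Y$. Acting on $\prod_{X\in D}\psi(X)$, only derivatives $\nabla_X$ with $X\in D$ give a nonzero result, and each such derivative removes the factor $\psi(X)$ and produces a sign (and a factor $\epsilon^{-d}$ that cancels the measure $\epsilon^d$ in $\dd X$). So the integrals reduce to sums over $X,Y\in D$. Moreover, $\nabla_X\nabla_X=0$, so $X\neq Y$ is forced. We then compute $\frac{1}{k!}\Delta_\Mcov^k$ applied to the monomial. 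This is a sum over ordered $k$-tuples of ordered pairs $(X_1,Y_1),\dots,(X_k,Y_k)$ with all $2k$ points distinct in $D$, weighted by $\prod_i \Mcov(X_i,Y_i)$, times the monomial with those $2k$ points removed, up to sign.

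Next we organize the sum by the sets $A=\{X_1,\dots,X_k\}$ and $A'=\{Y_1,\dots,Y_k\}$. These satisfy $A\cap A'=\emptyset$ and $|A|=|A'|=k$, and every such pair occurs. For fixed $(A,A')$, fix reference orderings $a_1<\dots<a_k$ of $A$ and $a'_1<\dots<a'_k$ of $A'$, using the order on $\mathcal{X}$. The $k!$ reorderings of the pairs all give the same product of derivatives, since each $\nabla_{X_i}\nabla_{Y_i}$ is even and these commute. This cancels the $1/k!$. What remains is a sum over bijections $\pi:A\to A'$, i.e.\ permutations $\sigma\in S_k$ with $Y$-partner $a'_{\sigma(i)}$ for $a_i$.

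The remaining step is the sign. We bring $\prod_i \nabla_{a_i}\nabla_{a'_{\sigma(i)}}$ into the reference order $\nabla_{a_1}\cdots\nabla_{a_k}\nabla_{a'_1}\cdots\nabla_{a'_k}$. The permutation of the $a'$ factors among themselves costs $\sgn\sigma$. Separating the $a$'s from the $a'$'s costs a sign depending only on $k$. Finally, applying the ordered derivative to $\prod_{X\in D}\psi(X)$ gives a sign $\pm$ depending only on $D,A,A'$. Hence the coefficient is $\varepsilon_D^{A,A'}\sum_\sigma \sgn\sigma\prod_i \Mcov(a_i,a'_{\sigma(i)}) = \varepsilon_D^{A,A'}\det(\Mcov_{A,A'})$.

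The only delicate point is this sign bookkeeping. Specifically, we must verify that the $\sigma$-dependence of the sign is exactly $\sgn\sigma$ and that everything else is independent of $\sigma$. The grouping into commuting even pairs makes this clean. Skew-symmetry of $\Mcov$ is not needed, since $A$ and $A'$ play distinguished roles.
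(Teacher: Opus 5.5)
Your proposal is correct and takes essentially the same route as the paper. In both, you expand $e^{\Delta_\Mcov}$, sort the terms of $\frac{1}{k!}\Delta_\Mcov^k$ by the disjoint sets $A$ (first slots) and $A'$ (second slots), and recognise the signed sum over pairings as $\det(\Mcov_{A,A'})$. The only bookkeeping difference is that the paper introduces auxiliary copies $\psi^{(1)},\psi^{(2)}$ on $A$ and $A'$, keeps the double sum over $\sigma,\tau\in S_k$ and divides by $k!$ at the end, whereas you cancel the $1/k!$ first using that the even pairs $\nabla_{X_i}\nabla_{Y_i}$ commute.
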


This is an adapted version of \cite[Lemma~5]{SW}. Although the proof remains similar, we give it explicitly in Appendix~\ref{app:selected-proofs}.

The preceding lemma shows the importance of bounding determinants. The basis for these bounds is Gram's inequality:

\begin{lemma}[Gram's inequality {\cite[Lemma~B.13]{msbook}}]
\label{lem:grams-inequality}
Let $\mathcal{H}$ be a Hilbert space with inner product $\langle \cdot, \cdot \rangle$, and for $f=(f_1,\dots,f_n)\in\mathcal{H}^n$ and $g=(g_1,\dots,g_n)\in\mathcal{H}^n$, let
\begin{equation}
(f,g) = \det((\langle f_i, g_j \rangle)_{1 \leq i,j \leq n}).
\end{equation}
Then:
\begin{equation}
|(f,g)| \leq \prod_{i=1}^n \lVert f_i \rVert \cdot \lVert g_i \rVert.
\end{equation}
\end{lemma}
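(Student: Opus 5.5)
The plan is to use the Gram--Schmidt/QR reduction, which turns the determinant into a product of diagonal entries of a triangular matrix. Each of these entries is then controlled by a projection norm. First, handle the trivial case: if $g_1,\dots,g_n$ are linearly dependent, there is a nontrivial vector $c$ with $\sum_j c_j g_j = 0$. Then the columns of the matrix $(\langle f_i, g_j\rangle)_{i,j}$ satisfy the same linear relation, so the determinant vanishes and the inequality holds. The same argument applies if the $f_i$ are dependent, using rows instead of columns.

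In the generic case, restrict attention to the finite-dimensional subspace $\mathcal{H}_0$ spanned by $f_1,\dots,f_n,g_1,\dots,g_n$. Apply Gram--Schmidt to $g_1,\dots,g_n$ to get an orthonormal family $e_1,\dots,e_n$. This gives $g_j = \sum_{k\le j} R_{kj} e_k$ with $R$ upper triangular and $|R_{jj}| = \|g_j - P_{j-1} g_j\| \le \|g_j\|$, where $P_{j-1}$ denotes the orthogonal projection onto the span of $g_1,\dots,g_{j-1}$. Using this expansion,
\begin{equation*}
(\langle f_i, g_j\rangle)_{i,j} = A R, \qquad A_{ik} = \langle f_i, e_k\rangle ,
\end{equation*}
with $A$ an $n\times n$ matrix. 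Hence
\begin{equation*}
\det(\langle f_i,g_j\rangle) = \det A \cdot \prod_j R_{jj},
\end{equation*}
and the product satisfies $\prod_j |R_{jj}| \le \prod_j \|g_j\|$. (If the inner product is conjugate-linear in the first slot, the same factorization holds with conjugates placed appropriately.)

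It remains to show $|\det A| \le \prod_i \|f_i\|$. Each row of $A$ is the coefficient vector $(\langle f_i,e_k\rangle)_{k}$, so its Euclidean norm equals $\|Pf_i\| \le \|f_i\|$, where $P$ is the orthogonal projection onto the span of the $e_k$ (Bessel's inequality). The claim then follows from Hadamard's inequality for the rows of $A$. If Hadamard is not to be assumed, it follows from the same Gram--Schmidt step applied to the rows of $A$ in $\C^n$. Writing $A = L Q$ with $L$ lower triangular and $Q$ unitary gives $|\det A| = \prod_i |L_{ii}|$, and $|L_{ii}|$ is the distance from row $i$ to the span of the earlier rows, hence at most the norm of row $i$.

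The only step that needs real care is this Hadamard step, i.e.\ the observation that the diagonal entries of the triangular factor are distances to spans of earlier vectors and are therefore dominated by the vectors' norms. Everything else is linear-algebra bookkeeping. Combining the two bounds yields
\begin{equation*}
|(f,g)| \le \prod_i \|f_i\| \, \|g_i\| .
\end{equation*}
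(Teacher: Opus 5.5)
The paper gives no proof of its own for this lemma; it only refers to \cite[Appendix~B]{msbook}. Your argument is a correct and self-contained proof along the standard lines:
\begin{itemize}
\item reduce to linearly independent $g_j$;
\item factor $(\langle f_i,g_j\rangle)_{i,j}=AR$ with $|R_{jj}|=\operatorname{dist}(g_j,\mathrm{span}\{g_1,\dots,g_{j-1}\})\le\|g_j\|$;
\item bound the rows of $A$ by $\|Pf_i\|\le\|f_i\|$ using Bessel's inequality;
\item finish with Hadamard's inequality, proved by the same Gram--Schmidt step.
\end{itemize}

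Two cosmetic points. First, the conjugates are needed when the inner product is conjugate-linear in the \emph{second} slot, where the $g_j$ sit, not the first; the moduli are unaffected either way. Second, in your sketch of Hadamard's inequality, the rows of $A$ can be dependent even when the $f_i$ are not, for instance if some $f_i$ is orthogonal to all $g_j$. In that case $\det A=0$ and the bound is trivial.
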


The proof can be found in \cite[Appendix~B]{msbook}.
%
Next, we define what we mean by a matrix to have a certain Gram constant:

\begin{definition}[Gram constant]
\label{def:gram-constant}
Let $M \in M_n(\R)$ be a matrix and $\gamma \in \R_{>0}$. We say that $M$ has a \textit{Gram constant} $\gamma$ iff there exists a Hilbert space $\mathcal{H}$ and $f = (f_1, \dots, f_n) \in \mathcal{H}^n$, $g = (g_1, \dots, g_n) \in \mathcal{H}^n$ such that
\begin{equation}
M = (\langle f_i, g_j \rangle)_{1 \leq i,j \leq n}
\end{equation}
and $\lVert f_i \rVert \leq \gamma$ and $\lVert g_i \rVert \leq \gamma$ for all $i \in [n]$. We call $(\mathcal{H},f,g)$ a \textit{Gram decomposition} of $M$ corresponding to the Gram constant $\gamma$.
\end{definition}

With this definition, the following statement follows directly from Lemma~\ref{lem:grams-inequality}:

\begin{corollary}
\label{cor:grams-inequality}
Let $M \in M_n(\R)$ be a matrix with Gram constant $\gamma \in \R_{>0}$. Let $A,A' \subseteq [n]$ with $\card{A} = \card{A'} = k$. Then:
\begin{equation}
|\det(M_{A,A'})| \leq \gamma^{2k}.
\end{equation}
In particular:
\begin{equation}
|\det(M)| \leq \gamma^{2n}.
\end{equation}
\end{corollary}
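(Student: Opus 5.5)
The corollary is a direct specialization of Gram's inequality (Lemma~\ref{lem:grams-inequality}) to a minor. The plan is to restrict the given Gram decomposition to the rows and columns selected by $A$ and $A'$.

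First, fix a Gram decomposition $(\mathcal{H},f,g)$ of $M$ corresponding to the Gram constant $\gamma$, as in Definition~\ref{def:gram-constant}. Write $A=\set{a_1<\dots<a_k}$ and $A'=\set{a'_1<\dots<a'_k}$. The minor $M_{A,A'}$ is the $k\times k$ matrix with entries
\begin{equation*}
(M_{A,A'})_{ij} = M_{a_i,a'_j} = \langle f_{a_i}, g_{a'_j}\rangle .
\end{equation*}
So $M_{A,A'}$ is itself a Gram matrix, built from the families $\tilde f=(f_{a_1},\dots,f_{a_k})$ and $\tilde g=(g_{a'_1},\dots,g_{a'_k})$ in $\mathcal{H}^k$. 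If the paper's convention orders the index sets differently, this only permutes rows or columns and so changes at most the sign of the determinant. The absolute value, which is all we bound, is unaffected.

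Next, apply Lemma~\ref{lem:grams-inequality} to $\tilde f,\tilde g$. This gives
\begin{equation*}
|\det(M_{A,A'})| \le \prod_{i=1}^k \lVert f_{a_i}\rVert\,\lVert g_{a'_i}\rVert \le \gamma^{2k},
\end{equation*}
using $\lVert f_j\rVert,\lVert g_j\rVert\le\gamma$ for every $j$.

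Finally, the second claim is the special case $A=A'=[n]$, $k=n$. The only point needing any care is the row/column ordering convention in the minor, and as noted above it affects only the sign. There is no real obstacle.
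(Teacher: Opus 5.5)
Your proof is correct and is the argument the paper has in mind when it says the corollary ``follows directly'' from Gram's inequality. Restricting the Gram decomposition to the families $(f_a)_{a\in A}$ and $(g_{a'})_{a'\in A'}$ exhibits $M_{A,A'}$ as a Gram matrix, and applying the lemma gives $\gamma^{2k}$. The full determinant is then the case $A=A'=[n]$, and you rightly note that the ordering convention for the minor affects only the sign.
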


Next, we consider tensor products of matrices.

\begin{lemma}
\label{lem:gram-constant-and-tensor-product}
Let $M \in M_n(\R)$ and $M' \in M_{n'}(\R)$ be two matrices with Gram constants $\gamma$ and $\gamma'$ respectively. Then, the matrix $M \otimes M'$ has Gram constant $\gamma \cdot \gamma'$.
\end{lemma}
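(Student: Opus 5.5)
The plan is to build the Gram decomposition of $M \otimes M'$ directly as the tensor product of the given Gram decompositions. Let $(\mathcal{H},f,g)$ be a Gram decomposition of $M$ with constant $\gamma$, and $(\mathcal{H}',f',g')$ one of $M'$ with constant $\gamma'$, as in Definition~\ref{def:gram-constant}. Let $\mathcal{H}\otimes\mathcal{H}'$ be the Hilbert space tensor product, i.e.\ the completion of the algebraic tensor product with respect to the inner product determined on elementary tensors by
\begin{equation*}
\langle u\otimes u', v\otimes v'\rangle = \langle u,v\rangle\,\langle u',v'\rangle .
\end{equation*}
Index the rows and columns of $M\otimes M'$ by pairs $(i,i')\in[n]\times[n']$, so that $(M\otimes M')_{(i,i'),(j,j')} = M_{ij}\,M'_{i'j'}$. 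Once this pairing is identified with $[nn']$ by any fixed bijection, it is simply the Kronecker product convention.

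Next, define the vectors
\begin{equation*}
F_{(i,i')} = f_i\otimes f'_{i'}, \qquad G_{(j,j')} = g_j\otimes g'_{j'} .
\end{equation*}
The defining property of the tensor inner product gives
\begin{equation*}
\langle F_{(i,i')},G_{(j,j')}\rangle = \langle f_i,g_j\rangle\langle f'_{i'},g'_{j'}\rangle = M_{ij}M'_{i'j'},
\end{equation*}
which is exactly the required matrix entry. For the norms, the same property yields
\begin{equation*}
\|f_i\otimes f'_{i'}\|^2 = \|f_i\|^2\,\|f'_{i'}\|^2 \le \gamma^2\gamma'^2 ,
\end{equation*}
so $\|F_{(i,i')}\|\le\gamma\gamma'$, and likewise $\|G_{(j,j')}\|\le\gamma\gamma'$. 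Hence $(\mathcal{H}\otimes\mathcal{H}',F,G)$ is a Gram decomposition of $M\otimes M'$ with constant $\gamma\gamma'$.

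No step here is genuinely hard. The only point needing care is the bookkeeping convention: the indexing of $M\otimes M'$ (the ordering of pairs) must match the one used later, for instance when the Gram decomposition is applied to covariances on copies $\mathcal{X}\times\vleaf(T)$. Since Definition~\ref{def:gram-constant} is invariant under simultaneously relabeling the $f_i$ and $g_j$, any consistent bijection works. One should also note that a real matrix decomposed in a complex Hilbert space is still covered, since the definition does not require $\mathcal{H}$ to be real.
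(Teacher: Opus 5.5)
Your proposal is correct and matches the paper's proof: both take the tensor product $(\mathcal{H}\otimes\mathcal{H}', f\otimes f', g\otimes g')$ of the two Gram decompositions and use multiplicativity of the tensor inner product and norm to get a decomposition with constant $\gamma\gamma'$. Your extra remarks on indexing by pairs and on complex Hilbert spaces are harmless bookkeeping that the paper leaves implicit.
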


\begin{proof}
Let $(\mathcal{H},f,g)$ and $(\mathcal{H}',f',g')$ be Gram decompositions of $M$ and $M'$ corresponding to the Gram constants $\gamma$ and $\gamma'$ respectively. Then, $(\mathcal{H} \otimes \mathcal{H}', f \otimes f', g \otimes g')$ is a Gram decomposition with Gram constant $\gamma \cdot \gamma'$.
\end{proof}

Lastly, we recall how one can obtain a Gram constant for positive matrices; 
this is where positivity comes into play crucially.

\begin{lemma}[{\cite[Lemma~7]{msbook}}]
\label{lem:gram-constant-of-positive-matrices}
Let $M \in M_n(\R)$ be a positive semidefinite matrix. Then, $M$ has a Gram constant
\begin{equation}
\gamma = \max\set{\sqrt{M_{ii}} : i \in [n]}.
\end{equation}
\end{lemma}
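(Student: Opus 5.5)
The statement to prove is Lemma~\ref{lem:gram-constant-of-positive-matrices}: a positive semidefinite $M \in M_n(\R)$ has Gram constant $\max_i \sqrt{M_{ii}}$.

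The plan is to build an explicit Gram decomposition in the sense of Definition~\ref{def:gram-constant}, using the square root of $M$. The Hilbert space will be $\mathcal{H} = \R^n$ with the standard inner product $\langle x, y\rangle = \sum_k x_k y_k$.

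First, we take the square root. Since $M$ is positive semidefinite, we read it as symmetric (the standard convention; otherwise we would use its symmetric part for the quadratic form, but we assume symmetry). The spectral theorem gives an orthogonal $O$ and a diagonal $\Lambda$ with nonnegative entries such that $M = O \Lambda O^T$. Setting $R = O \Lambda^{1/2} O^T$, we get a real symmetric matrix with $R^T R = R^2 = M$.

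Next, we define the vectors. Let $f_i = g_i = R e_i$, the $i$-th column of $R$. Then
\[
\langle f_i, g_j \rangle = e_i^T R^T R e_j = M_{ij},
\]
so $M = (\langle f_i, g_j\rangle)_{i,j}$ as Definition~\ref{def:gram-constant} requires.

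Finally, we check the norms. For each $i$,
\[
\lVert f_i \rVert^2 = \langle R e_i, R e_i \rangle = M_{ii} \ge 0,
\]
so $\lVert f_i\rVert = \lVert g_i\rVert = \sqrt{M_{ii}} \le \gamma$.

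The only delicate points are the degenerate case and the symmetry assumption. If $\gamma = 0$, then all diagonal entries vanish, and positivity forces $M = 0$; this is excluded by the requirement $\gamma > 0$ in the definition, and in that case any positive $\gamma$ works trivially. Symmetry was handled in the first step. Otherwise there is no real obstacle: a Cholesky factorization $M = R^T R$ would serve equally well in place of the spectral square root.
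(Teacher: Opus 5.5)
Your proof is correct and is essentially the paper's argument. The paper likewise writes $M = B^2$ with $B \ge 0$ the positive square root and takes the rows $b_i$ of $B$ as Gram vectors (these coincide with your columns $Re_i$ by symmetry), so that $\lVert b_i \rVert = \sqrt{M_{ii}}$; your remark on the degenerate case $\gamma = 0$ is a small point of care the paper leaves implicit.
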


\begin{proof}
This proof follows the proof of \cite[Lemma~7]{msbook}.
Since $M \geq 0$, there is a matrix $B \in M_n(\R)$ with $B \geq 0$ such that $M = B^2$. If $b_i = (B_{ij})_{j \in [n]}$ is the $i$-th row vector of $B$, this means that $M_{ij} = \langle b_i, b_j \rangle$ for all $i,j \in [n]$. Hence, we have found a Gram decomposition of $M$ with Gram constant
\begin{equation}
\gamma = \max\set{\lVert b_i \rVert : i \in [n]} = \max\set{\sqrt{M_{ii}} : i \in [n]}.
\end{equation}
\end{proof}

While every matrix has some Gram representation, the point of Definition \ref{def:gram-constant} is that the constant $\gamma$ should be independent of the size $n$ of the matrix, so that bounds can be uniform in $n$. In applications to quantum many-fermion systems, it is important to use a more general notion of a determinant bound $\delta_M$ that also applies directly to time-ordered covariances, for which the standard Gram bound diverges in the time-continuum limit \cite{PSUV}. The estimates we give in the present paper, in particular in the proof of Theorem \ref{thm:convergence-of-formal-solution}, remain unchanged when the Gram constant is replaced by the determinant bound, so Theorem \ref{thm:convergence-of-formal-solution} also holds in this more general context.

\subsection{The technique of ``trimming the tree''}
\label{subsec:trimming-the-tree}

An important technique when it comes to bounding tree expansions is that of ``trimming the tree''. The following lemma specifies what is meant by this.

\begin{lemma}[``Trimming the tree'' {\cite[Appendix C]{brydgesShortCourseCluster}}]
\label{lem:trimming-the-tree}
Let $C \in \GL_N(\R)$ be a skew-symmetric matrix such that
\begin{equation}
|C| = \sup_{X \in \X} \int \dd{Y}\; |C(X,Y)| < \infty.
\end{equation}
Let $v_m \in \R^{\X^{m}}$ be an antisymmetric tensor for all $m \in \N$.
Let $p \in \N$, let $T \in \T_p$ be a tree, and let $m_1, \dots, m_p \geq 1$ such that $m_q \geq d_T(q)$ for all $q \in [p]$. Define $D=\set{(q,i) : q \in [p],\ i \in [m_q]}$. Then:
\begin{equation}
\label{eq:trimming-the-tree}
\begin{split}
&\max_{(q',i') \in D} \sup_{\tilde X \in \X} \int \Bigg( \prod_{(q,i) \in D \setminus \set{(q',i')}} \dd{X_{(q,i)}} \Bigg) \Bigg( \prod_{q=1}^p \abs{v_{m_q}\Big(X_{(q,1)}, \dots, X_{(q,m_q)}\Big)} \Bigg) \\
&\quad \cdot \Bigg( \prod_{\set{q,q'} \in E(T)} \abs{C\Big(X_{(q,i_{T,q}(q'))}, X_{(q',i_{T,q'}(q))}\Big)} \Bigg) \Bigg|_{X_{(q',i')} = \tilde X} \\
&\leq \abs{C}^{p-1} \prod_{q=1}^p \abs{v_{m_q}},
\end{split}
\end{equation}
where we use Notation~\ref{not:enumeration-of-neighbours} and the norm $\abs{v_{m_q}}$ from Definition~\ref{def:h-norm}.
\end{lemma}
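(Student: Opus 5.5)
We argue by induction on the number of vertices $p$ of the tree $T$, integrating out the vertices one leaf at a time (``trimming''). For $p=1$ there are no propagators. The integral is then the $L^1$-type integral of $\abs{v_{m_1}}$ over all variables but one, with that one held at $\tilde X$. Its supremum over $\tilde X$ is bounded by $\abs{v_{m_1}}$ directly by Definition~\ref{def:h-norm}, since the maximum over the pinned index $i'$ is built into that norm.

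For the inductive step with $p\ge 2$, let $(q',i')$ be the pinned variable, and regard $T$ as rooted at the vertex $q'$. Since $p\ge2$, $T$ has at least two leaves, so we can choose a leaf $q_0\neq q'$ of $T$. Let $q_1$ be its unique neighbour, so that $\{q_0,q_1\}\in E(T)$. The variables $X_{(q_0,i)}$, $i\in[m_{q_0}]$, appear only in two factors: $\abs{v_{m_{q_0}}}$ and the single propagator $\abs{C(X_{(q_1,i_{T,q_1}(q_0))},X_{(q_0,i_{T,q_0}(q_1))})}$.

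We first integrate out all variables of $q_0$ except $X_{(q_0,i_{T,q_0}(q_1))}=:Y$, with $Y$ held fixed. This gives at most $\sup_Y\int\prod_{i\ne i_{T,q_0}(q_1)}\dd X_{(q_0,i)}\abs{v_{m_{q_0}}}\le\abs{v_{m_{q_0}}}$. We then integrate over $Y$ against $\abs{C(X_{(q_1,\cdot)},Y)}$, which gives at most $\sup_X\int\dd Y\abs{C(X,Y)}=\abs{C}$. Skew-symmetry of $C$ ensures $\abs{C(X,Y)}=\abs{C(Y,X)}$, so the orientation of the propagator does not matter.

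Fubini/Tonelli applies because all integrands are nonnegative, which justifies doing the $q_0$-integrations first and bounding them by suprema. What remains is exactly the left side of \eqref{eq:trimming-the-tree} for the tree $T\setminus\{q_0\}\in\T_{p-1}$, after relabelling vertices. The pinned variable is unchanged, since $q_0\ne q'$. The hypothesis $m_q\ge d_{T'}(q)$ still holds, because degrees only decrease; for $q_1$ its degree drops by one.

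Because the neighbour enumeration $i_{T,q}$ of Notation~\ref{not:enumeration-of-neighbours} restricted to $T'$ still assigns distinct slots, the inductive hypothesis gives $\abs{C}^{p-2}\prod_{q\ne q_0}\abs{v_{m_q}}$. Multiplying by the factor $\abs{C}\,\abs{v_{m_{q_0}}}$ from the removed leaf yields the claim. Taking the maximum over $(q',i')$ and the supremum over $\tilde X$ at the end costs nothing, since the bound holds uniformly in both.

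The main obstacle is purely bookkeeping: making sure the chosen leaf is never the vertex carrying the pinned variable, and making sure the $\sup$ in the norm of $v_{m_{q_0}}$ is taken over the correct slot, namely the slot connected to the parent. Both are handled by choosing $q_0$ as a leaf different from $q'$ and by the ``$\max_i\sup_{X_i}$'' structure of $\abs{v_m}$. A further point is that the slots used for different edges at the same vertex are distinct, which is guaranteed by $m_q\ge d_T(q)$ and the injective enumeration.
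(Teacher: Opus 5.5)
Your proof is correct and follows the paper's argument. It inducts on $p$, removes a leaf $q_0\neq q'$, bounds that leaf's $v_{m_{q_0}}$-integral (with the slot attached to its neighbour held fixed) by $\abs{v_{m_{q_0}}}$, bounds the subsequent integral against the propagator by $\abs{C}$, and applies the inductive hypothesis to $T-q_0$. Your remarks on propagator orientation (via skew-symmetry) and on the neighbour enumeration after deleting $q_0$ make explicit two points the paper leaves implicit; the latter strictly needs either permutation invariance of $\abs{v_m}$, which follows from antisymmetry, or an induction over arbitrary injective slot assignments.
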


The proof of this lemma can be found in \cite[Appendix C]{brydgesShortCourseCluster}. Due to slightly different notation in this reference, we also give the proof in Appendix~\ref{app:selected-proofs} for the reader's convenience.

\subsection{Proof of Theorem~\ref{thm:convergence-of-formal-solution}} We are now finally ready to prove the central theorem of this paper.

\begin{proof}[Proof of Theorem~\ref{thm:convergence-of-formal-solution}]
Consider the formal solution from Theorem~\ref{thm:formal-solution}. By writing out every $\G_0(s_{p(v)},\Psi^{(w)}) = e^{\frac{1}{2} \Delta_{C_{s_{p(v)},0}}} \V(\Psi^{(w)})$ and then re-ordering the terms (noting that all Laplacians commute with one another), it can be rewritten as
\begin{equation}
\begin{split}
&\G_l(t=1) = \sum_{T \in \B_l} \left(\ordprod{v \in \vfork(T)} \int_0^{s_{p(v)}} \dd{s_v} \; \left(-\frac{1}{2}\right)\right) \\
&\qquad \left\llbracket \exp(\frac{1}{2} \sum_{v \in \vfork(T)} (s_{p(v)} - s_v) \Delta_C^{(\leavesof{T}{v})} + \frac{1}{2} \sum_{v \in \vleaf(T)} \Delta_C^{(v)}) \right. \\
&\qquad \cdot \left. \left( \prod_{v \in \vfork(T)} \innerprod{\nabla^{(\leavesof{T}{c_1(T)})}}{C \nabla^{(\leavesof{T}{c_2(T)})}} \right) \cdot \left(\prod_{v \in \vleaf(T)} \V\left(\Psi^{(v)}\right)\right) \right\rrbracket.
\end{split}
\end{equation}
We can further rewrite
\begin{equation}
\begin{split}
\frac{1}{2} \sum_{v \in \vfork(T)} (s_{p(v)} - s_v) \Delta_C^{(\leavesof{T}{v})} + \frac{1}{2} \sum_{v \in \vleaf(T)} \Delta_C^{(v)}
&= \Delta_C[M(T,\underline{s})],
\end{split}
\end{equation}
where $M(T,\underline{s}) \in M_l(\R)$ is a matrix indexed by $\vleaf(T)$ and defined by
\begin{equation}
M(T,\underline{s}) = \frac{1}{2}\Bigg(\bbbone + \sum_{v \in \vfork(T)} \underbrace{(s_{p(v)} - s_v)}_{\geq 0} J_{\leavesof{T}{v}} \Bigg).
\end{equation}
In this definition, we use the matrix $J_I \in M_l(\R)$ for $I \subseteq \vleaf(T)$ given by
\begin{equation}
(J_I)_{ij} = \begin{cases}
1 & \text{if $i\in I$ and $j \in I$} \\
0 & \text{else}
\end{cases}.
\end{equation}
We note that $M(T,\underline{s})$ is a sum of positive semidefinite matrices, hence it is itself positive semidefinite. Additionally, we have $M_{vv}(T,\underline{s}) = \frac{1}{2}(2 - s_{p(v)}) \le 1$ for all $v \in \vleaf(T)$. Hence, by Lemma~\ref{lem:gram-constant-of-positive-matrices}, $M(T,\underline{s})$ has a Gram constant of $1$.

Write the initial condition $\G(0,\Psi) = \V(\Psi)$ as
\begin{equation}
\V(\Psi) = \sum_{m \geq 0} \int \dd^m{\underline{X}}\ v_m(\underline{X}) \psi^m(\underline{X}).
\end{equation}
Using the above rewriting of $\G_l(t=1)$, we can now write:\footnote{For better legibility, we write $\norm{.}_h$ for $\norm{\llbracket . \rrbracket}_h$. Additionally, for $w \in \vleaf(T)$, $m \in \N$ and $\underline{X} \in \X^m$, we write $\psi_{(w)}^m(\underline{X})$ instead of $(\psi^{(w)})^m(\underline{X})$.}
\begin{equation}
\begin{split}
\norm{\G_l(t=1)}_h
&\leq \sum_{T \in \B_l} \sum_{\substack{m_w \in \N_0 \\ \forall w \in \vleaf(T)}} \left(\ordprod{v \in \vfork(T)} \int_0^{s_{p(v)}} \dd{s_v} \; \frac{1}{2}\right) \\
&\qquad \cdot \left\lVert e^{\Delta_C[M(T,\underline{s})]} \left(\prod_{v \in \vfork(T)} \innerprod{\nabla^{(\leavesof{T}{c_1(T)})}}{C \nabla^{(\leavesof{T}{c_2(T)})}}\right) \right. \\
&\qquad\quad \left. \int \prod_{w \in \vleaf(T)} \dd^{m_w}{\underline{X_w}}\ v_{m_w}(\underline{X_w}) \cdot \psi_{(w)}^{m_w}(\underline{X_w}) \right\rVert_h.
\end{split}
\end{equation}
Consider the operator $\prod_{v \in \vfork(T)} \innerprod{\nabla^{(\leavesof{T}{c_1(T)})}}{C \nabla^{(\leavesof{T}{c_2(T)})}}$; for every $v \in \vfork(T)$, it picks out $w_1 \in \leavesof{T}{c_1(v)}$ and $w_2 \in \leavesof{T}{c_2(v)}$ and ``connects'' them with a $C\left(\left(\underline{X_{w_1}}\right)_i, \left(\underline{X_{w_2}}\right)_j\right)$ (with $i \in [m_{w_1}]$ and $j \in [m_{w_2}]$). One quickly notices that this produces a sum over 
induced leaf trees (see Definition~\ref{def:induced-leaf-tree}). Hence, we can rewrite\footnote{For $\underline{X} \in \X^m$ and $I=\set{n_1, \dots, n_k}$ with $n_1 < \dots < n_k$, use the notation $\left(\underline{X}\right)_I = (X_{n_1}, \dots, X_{n_k}) \in \X^k$.}
\begin{equation}
\begin{split}
&\norm{\G_l(t=1)}_h
\leq \sum_{T \in \B_l} \sum_{\substack{m_w \in \N_0 \\ \forall w \in \vleaf(T)}} \sum_{L \in \L(T)} \left(\ordprod{v \in \vfork(T)} \int_0^{s_{p(v)}} \dd{s_v} \; \frac{1}{2}\right) \\
&\qquad \cdot \Bigg\lVert e^{\Delta_C[M(T,\underline{s})]} \sum_{\substack{\iota = (\iota_w)_{w \in \vleaf(T)} \\ \iota_w:\ N_L(w) \to [m_w] \\ \text{injective}}} \varepsilon_{T,L,\iota} \int \Bigg(\prod_{w \in \vleaf(T)} \dd^{m_w}{\underline{X_w}}\ v_{m_w}(\underline{X_w}) \\
&\qquad\qquad \cdot \psi_{(w)}^{m_w - d_L(w)}\left(\left(\underline{X_w}\right)_{[m_w] \setminus \set{\iota_w(w') : w' \in N_L(w)}}\right) \Bigg) \\
&\qquad\quad \cdot \Bigg( \prod_{\set{w,w'} \in E(L)} C\left(\left(\underline{X_w}\right)_{\iota_w(w')}, \left(\underline{X_{w'}}\right)_{\iota_{w'}(w)}\right) \Bigg) \Bigg\rVert_h,
\end{split}
\end{equation}
where $\varepsilon_{T,L,\iota} \in \set{\pm 1}$. Since the tensor $v_{m_w}$ is antisymmetric, the choice of $\set{\iota_w(w') : w' \in N_L(w)}$ in $[m_w]$ does not matter; we can always choose $\set{\iota_w(w') : w' \in N_L(w)} = \set{1, \dots, d_L(w)}$ with some fixed ordering of elements. We just have to take into consideration the combinatorial factor
\begin{equation}\label{eq:derivcomb}
\sum_{\substack{\iota = (\iota_w)_{w \in \vleaf(T)} \\ \iota_w:\ N_L(w) \to [m_w] \\ \text{injective}}} 1 = \prod_{w \in \vleaf(T)} \binom{m_w}{d_L(w)} d_L(w)!.
\end{equation}
Using this, we get
\begin{equation}
\begin{split}
&\norm{\G_l(t=1)}_h
\leq \sum_{T \in \B_l} \sum_{\substack{m_w \in \N_0 \\ \forall w \in \vleaf(T)}} \sum_{L \in \L(T)} \left(\ordprod{v \in \vfork(T)} \int_0^{s_{p(v)}} \dd{s_v} \; \frac{1}{2}\right) \\
&\qquad \cdot \left(\prod_{w\in\vleaf(T)} \binom{m_w}{d_L(w)} d_L(w)! \right) \\
&\qquad \cdot \Bigg\lVert e^{\Delta_C[M(T,\underline{s})]} \int \Bigg( \prod_{w \in \vleaf(T)} \dd^{d_L(w)}{\underline{X_w}}\ \dd^{m_w - d_L(w)}{\underline{Y_w}}\ v_{m_w}(\underline{X_w},\underline{Y_w}) \\
&\qquad\quad \cdot \psi_{(w)}^{m_w - d_L(w)}(\underline{Y_w}) \Bigg) \cdot \Bigg(\prod_{\set{w,w'} \in E(L)} C\left(\left(\underline{X_w}\right)_{\iota_{L,w}(w')}, \left(\underline{X_{w'}}\right)_{\iota_{L,w'}(w)}\right) \Bigg) \Bigg\rVert_h,
\end{split}
\end{equation}
where we use Notation~\ref{not:enumeration-of-neighbours}.

Next, we need to deal with the term $e^{\Delta_C[M(T,\underline{s})]} \prod_{w \in \vleaf(T)} \psi_{(w)}^{m_w - d_L(w)}(\underline{Y_w})$. With a slight abuse of notation, we can write $\Delta_C[M(T,\underline{s})] = \Delta_{\Xi(T,\underline{s})}$ with $\Xi(T,\underline{s}) = M(T,\underline{s}) \otimes C$. Using Lemma~\ref{lem:gram-constant-and-tensor-product} and the above findings on $M(T,\underline{s})$, we conclude that $\Xi(T,\underline{s})$ has Gram constant $\gamma_C$. Using Lemma~\ref{lem:laplacians-and-determinants}, we find that
\begin{equation}
\begin{split}
&e^{\Delta_C[M(T,\underline{s})]} \prod_{w \in \vleaf(T)} \psi_{(w)}^{m_w - d_L(w)}(\underline{Y_w}) \\
=& \sum_{\substack{A_1,A_2 \subseteq D \\ \card{A_1}=\card{A_2} \\ A_1 \cap A_2 = \emptyset}} \varepsilon_D^{A_1,A_2} \det(\Xi_{A_1,A_2}) \prod_{(w,Y) \in D \setminus (A_1 \cup A_2)} \psi^{(w)}(Y),
\end{split}
\end{equation}
where $D = \set{(w,(\underline{Y_w})_i) : w \in \vleaf(T),\ i \in [m_w - d_L(w)]}$, and $\varepsilon_D^{A_1,A_2} \in \set{\pm 1}$ for all ${A_1,A_2 \subseteq D}$. We can rewrite this expression like so:
\begin{equation}
\begin{split}
&e^{\Delta_C[M(T,\underline{s})]} \prod_{w \in \vleaf(T)} \psi_{(w)}^{m_w - d_L(w)}(\underline{Y_w}) \\
=& \sum_{k=0}^\infty \sum_{\substack{A \subseteq D \\ \card{A}=2k}} \sum_{\substack{A_1 \subseteq A \\ \card{A_1}=k \\ A_2=A\setminus A_1}} \varepsilon_D^{A_1,A_2} \det(\Xi_{A_1,A_2}) \prod_{(w,Y) \in D \setminus A} \psi^{(w)}(Y) \\
=& \sum_{k=0}^\infty \sum_{\substack{a_w \in \N_0 \\ \forall w \in \vleaf(T) \\ \sum_w a_w = 2k}} \sum_{\substack{A_w \subseteq D_w \\ \card{A_w}=a_w \\ \forall w \in \vleaf(T) \\ A=\bigcup_w A_w}} \sum_{\substack{A_1 \subseteq A \\ \card{A_1}=k \\ A_2=A\setminus A_1}} \varepsilon_D^{A_1,A_2} \det(\Xi_{A_1,A_2}) \prod_{(w,Y) \in D \setminus A} \psi^{(w)}(Y),
\end{split}
\end{equation}
where for every $w \in \vleaf(T)$, we define
\begin{equation}
D_w = \set{(w',Y) \in D : w=w'} = \set{(w,(\underline{Y_w})_i):i\in[m_w-d_L(w)]}.
\end{equation}
Using Corollary~\ref{cor:grams-inequality}, we get $\abs{\det(\Xi_{A_1,A_2})} \leq \gamma_C^{2k} = \prod_{w \in \vleaf(T)} \gamma_C^{a_w}$. We have combinatorial factors
\begin{equation}
\sum_{\substack{A_w \subseteq D_w \\ \card{A_w}=a_w \\ \forall w \in \vleaf(T) \\ A=\bigcup_w A_w}} 1 = \prod_{w \in \vleaf(T)} \binom{m_w - d_L(w)}{a_w}
\text{ and }
\sum_{\substack{A_1 \subseteq A \\ \card{A_1}=k \\ A_2=A\setminus A_1}} 1 = \binom{2k}{k}.
\end{equation}
Using all this, we get after ``trimming the tree'' (see Lemma~\ref{lem:trimming-the-tree}):
\begin{equation}
\label{eq:convergence-proof-bound-1}
\begin{split}
&\norm{\G_l(t=1)}_h \\
&\leq (\omega_C \gamma_C^2)^{l-1} \sum_{T \in \B_l} \sum_{\substack{m_w \in \N_0 \\ \forall w \in \vleaf(T)}} \sum_{L \in \L(T)} \sum_{k=0}^\infty \sum_{\substack{a_w \in \N_0 \\ \forall w \in \vleaf(T) \\ \sum_w a_w = 2k}}
\left(\ordprod{v \in \vfork(T)} \int_0^{s_{p(v)}} \dd{s_v} \; \frac{1}{2}\right) \\
&\qquad \cdot \left(\prod_{w\in\vleaf(T)} \binom{m_w}{d_L(w)} d_L(w)! \cdot \binom{m_w - d_L(w)}{a_w} \gamma_C^{a_w} \right)  \cdot \binom{2k}{k} \\
&\qquad \cdot \prod_{w \in \vleaf(T)} \abs{v_{m_w}} \cdot h^{m_w - d_L(w) - a_w}.
\end{split}
\end{equation}
Because the summand no longer depends on $\underline{s}$, we can use Lemma~\ref{lem:bound-on-tree-integral} to bound the $\underline{s}$-integral by 
\begin{equation}
\label{eq:convergence-proof-integral-bound}
\ordprod{v \in \vfork(T)} \int_0^{s_{p(v)}} \dd{s_v} \; \frac{1}{2} = \int_0^1 T \leq \prod_{v \in V(T)} (\card{\leavesof{T}{v}})^{-1}.
\end{equation}
Next, we can rewrite
\begin{equation}
\sum_{L\in \L(T)} \mapsto \sum_{\substack{d_w \in \N_0 \\ \forall w\in\vleaf(T)}} \sum_{L\in\L(T,(d_w)_w)}
\end{equation}
and replace every $d_L(w)$ by $d_w$. After this rewriting, no dependence on the tree $L$ is left. Hence, we can use Corollary~\ref{lem:num-of-leaf-trees-bound} and have
\begin{equation}
\sum_{L \in \L(T,d)} 1 \leq \frac{\prod_{v \in V(T) \setminus \set{*}} \card{\leavesof{T}{V}}}{\prod_{v \in \vleaf(T)}{(d_w-1)!}} \cdot \bbbone\Bigg(\sum_{w \in \vleaf(T)} d_w = 2(l-1)\Bigg)
\end{equation}
where $\bbbone(P) =1$ if $P$ is true and $0$ otherwise.
Note that the numerator of this fraction is cancelled by the expression in Equation~\eqref{eq:convergence-proof-integral-bound}. In total, we have
\begin{equation}
\label{eq:convergence-proof-bound-2}
\begin{split}
&\norm{\G_l(t=1)}_h
\leq (\omega_C \gamma_C^2)^{l-1} \sum_{T \in \B_l} \sum_{\substack{m_w \in \N_0 \\ \forall w \in \vleaf(T)}} \sum_{\substack{d_w \in \N_0 \\ \forall w\in\vleaf(T)}} \sum_{k=0}^\infty \sum_{\substack{a_w \in \N_0 \\ \forall w \in \vleaf(T) \\ \sum_w a_w = 2k}} \\
&\quad \cdot \Bigg( \prod_{w \in \vleaf(T)} \frac{d_w!}{(d_w-1)!} \cdot \bbbone\bigg( \sum_{w \in \vleaf(T)} d_w = 2(l-1) \bigg) \Bigg) \\
&\quad \cdot \left(\prod_{w\in\vleaf(T)} \binom{m_w}{d_w} \binom{m_w - d_w}{a_w} \gamma_C^{a_w} h^{m_w - d_w - a_w} \right)  \cdot \binom{2k}{k} \cdot \prod_{w \in \vleaf(T)} \abs{v_{m_w}}.
\end{split}
\end{equation}
By the tree identity, $\gamma_C^{2(l-1)} = \prod_{w \in \vleaf(T)} \gamma_C^{d_w}$. Moreover, the summand depends on $T$ only via $V_{\rm leaf} (T)$, so we can use Lemma~\ref{lem:num-of-binary-trees}:
\begin{equation}
\sum_{T \in \B_l} 1 = \frac{1}{l} \binom{2(l-1)}{l-1} \leq \frac{1}{l} (2e)^{l-1} < (2e)^{l-1}.
\end{equation}
Additionally, we can use the bound $\binom{2k}{k} \leq (2e)^k = \prod_{w \in \vleaf(T)} \left(\sqrt{2e}\right)^{a_w}$ on Equation~\eqref{eq:convergence-proof-bound-2}. Lastly, we can use the arithmetic-geometric inequality to obtain:
\begin{equation}
\begin{split}
&\prod_{w \in \vleaf(T)} \frac{d_w!}{(d_w-1)!} \cdot \bbbone\bigg( \sum_{w \in \vleaf(T)} d_w = 2(l-1) \bigg) \\
=& \prod_{w \in \vleaf(T)} d_w \cdot \bbbone\bigg( \sum_{w \in \vleaf(T)} d_w = 2(l-1) \bigg) \\
\leq& \left(\frac{1}{l} \cdot 2(l-1)\right)^l
\leq 2^l.
\end{split}
\end{equation}
This gives us:
\begin{equation}
\begin{split}
&\norm{\G_l(t=1)}_h
\leq 2 \cdot (4e \cdot \omega_C)^{l-1} \sum_{m_1, \dots, m_l = 0}^\infty \sum_{d_1, \dots, d_l = 0}^\infty \sum_{a_1, \dots, a_l = 0}^\infty \\
&\qquad \left( \prod_{i=1}^l \binom{m_i}{d_i} \binom{m_i - d_i}{a_i} \left(\sqrt{2e}\gamma_C\right)^{a_i} \gamma_C^{d_i} h^{m_i - d_i - a_i} \abs{v_{m_i}} \right) \\
&= 2 \cdot (4e \cdot \omega_C)^{l-1} \prod_{i=1}^{l} \left( \sum_{m=0}^\infty \sum_{d=0}^\infty \sum_{a=0}^\infty \binom{m}{d} \binom{m-d}{a} \left(\sqrt{2e}\gamma_C\right)^a \gamma_C^d h^{m-d-a} \abs{v_m} \right) \\
&= 2 \cdot (4e \cdot \omega_C)^{l-1} \prod_{i=1}^{l} \left( \sum_{m=0}^\infty \sum_{d=0}^\infty \binom{m}{d} \left(h + \sqrt{2e}\gamma_C\right)^{m-d} \gamma_C^d \abs{v_m} \right) \\
&= 2 \cdot (4e \cdot \omega_C)^{l-1} \prod_{i=1}^{l} \left( \sum_{m=0}^\infty \left(h + \left(\sqrt{2e} + 1\right) \gamma_C\right)^m \abs{v_m} \right) \\
&= 2 \cdot (4e \cdot \omega_C)^{l-1} \cdot \norm{\V}_{h'}^l
\end{split}
\end{equation}
with $h'=h + \left(\sqrt{2e} + 1\right) \gamma_C$. Finally, in the case $4e \cdot \omega_C \cdot \norm{\V}_{h'} < 1$, \eqref{eq:bound-on-wilsonian-effective-potential} and \eqref{eq:bound-on-wilsonian-effective-potential-remainder} just follow from \eqref{eq:bound-on-wilsonian-effective-potential-order-l} by using the sum of convergent geometric series.
\end{proof}

\section{Discussion and Outlook}
\label{sec:discussion}

The main difference to the proof of convergence of the tree expansion in fermionic quantum field theories, as for instance in \cite{SW}), is that two types of trees appear here. The iterative solution of the integrated Polchinski equation produces a sum over binary trees, the number of which does not grow factorially as a function of the iteration depth. However, the action of the Laplacians coming from the tree and from the exponential that creates loops does create another kind of tree which we call here the leaf tree. In analogy to the nomenclature of \cite{Rivbook},  the binary trees would be called the vertical trees of our expansion (which record how the iterative expansion proceeds) and the leaf tree would be called the horizontal tree (which forms the backbone of the connected Feynman graphs that are resummed into the action of the exponential of the Laplacian). Controlling the combinatorics of the latter is crucial for proving convergence. An essential combinatorial lemma that we have proven in the paper can be seen as a variant of Cayley's formula, as follows. Cayley's formula gives the number of labeled trees on $p$ vertices with incidence numbers $d_1, \ldots, d_p$ as
\beq
\frac{(p-2)!}{(d_1-1)! \ldots (d_p-1)!} \; .
\eeq
In tree expansions based on \cite{SW,clustering} or on the standard Brydges-Kennedy formula (see also \cite{anabosedec}), the factorials in the denominator are used to cancel the ones arising from the differentiation of the vertex functions. The factorial in the numerator is then cancelled by an overall $1/p!$ in front of the sum over all trees with $p$ vertices (see, e.g.\ \cite{SW}). 

Leaf trees are of course just special cases of general labelled trees, but our formula relates the number of leaf trees to the way they arise from the binary tree: we recall from Lemma \ref{lem:num-of-induced-leaf-trees} that
\beq
            \card{\mathfrak{L}(T, d)}
            = \frac{\prod_{v \in \vfork(T) \setminus \set{*}} \left( \sum_{w \in \mathcal{L}_T(v)} (d_w - 2) + 2 \right)}{\prod_{w \in \vleaf(T)} (d_w - 1)!}.
\neeq 
and from Lemma \ref{lem:num-of-leaf-trees-bound} the bound
\beq\label{eq:bound}
        \card{\mathfrak{L}(T, d)}
        \leq \frac{\prod_{v \in V(T) \setminus \set{*}} \card{\mathcal{L}_T(v)}}{\prod_{w \in \vleaf(T)} (d_w - 1)!} 
\eeq
When the tree in Cayley's formula is identified with our leaf tree, the denominators in these formulas are identical. It is the product in the numerator of \eqref{eq:bound} that encodes information about the binary tree that induces the leaf tree. While the cancellation of the factorials $d_L (w) !$ in \eqref{eq:derivcomb} by the ones in the denominators is similar to the one in other tree expansions, the cancellation of the product in the numerator of \eqref{eq:derivcomb} against factors arising in the integral over the $s$-variables fits specifically to the way our expansion is built up. 

Theorem \ref{thm:formal-solution} is valid for a general (differentiable) dependence of the covariance $C$ on the parameter $s$, while Theorem \ref{thm:convergence-of-formal-solution} includes the hypothesis that $C(s)$ depends linearly on $s$. When using RG as a multiscale method, this is still useful for controlling a single step in an iteration for discrete times. Specifically, suppose that $C (\sigma) = \int_0^\sigma \dot C(\tau) \; \dd \tau$ (where $C(\tau)$ does not depend linearly on $\tau$) and the Wilsonian effective action $\G(\sigma) = \G (\Psi, C(\sigma), \V)$ is defined by the Polchinski equation. If the limit of interest is $ \sigma \to \infty$, one can try to take this limit as the limit of the discrete sequence $\G_j (\Psi) = \G(\Psi, C(\sigma_j), \V)$, where the flow times $\sigma_j$ form an increasing sequence that tends to infinity for $j \to \infty$. The sequence starts at $\G_0 = \V$ and for $j \ge 1$, $\G_j$ is obtained from $\G_{j-1}$ by
\beq
\G_j = \G(\sigma_j) = \G \left(\Psi, C_j, \G_{j-1} \right) \; .
\eeq
where $C_j = C(\sigma_j) - C(\sigma_{j-1})$. 
By Theorem \ref{thm:convergence-of-formal-solution}, the linear interpolation $C_j(s) = s C_j $ with $s \in [0,1]$ implies that $\G_j$ is analytic in $\G_{j-1}$ in a norm that depends on $j$ via the determinant and decay constants of $C_j$. If the $\sigma_j$ are chosen properly, this can be iterated to analyze the convergence of $\G_j$ as $j \to \infty$. 

Still, it would be nice to have a method that does not involve the discrete sequence of $\sigma_j$. It is possible to generalize Theorem \ref{thm:convergence-of-formal-solution} to a general $s$-dependence of $C(s)$ \cite{inprogress}. Moreover, it is in general necessary to include renormalization effects, in particular, separate out the flow of RG-relevant terms in $\G$. This can be done for two-point terms (which are usually the most relevant ones in that they grow fastest in the flow) by a modification of Polchinski's equation in which terms that are quadratic in $\Psi$ in $\G$ are continuously absorbed in the covariance during the flow \cite{dynRG}. This modifies the Polchinski equation by additional terms that are linear in $\G$, so that the nonlinearity remains quadratic, hence a more general binary tree expansion applies. Bounds on the thus modified function $\G$ are under investigation \cite{inprogress}.

\appendix

\section{Some notions and notations from graph theory}
\label{app:graph-theory}

This appendix closely follows \cite{graphTheory}.

\begin{definition}[Graph]
A \textit{(finite simple undirected) graph} is a pair $G=(V,E)$, where $V$ is a finite set and
\begin{equation}
E \subseteq \set{\set{x,y} \subseteq V : x \neq y}.
\end{equation}
$V=V(G)$ is called the vertex set of $G$ and its elements are called \textit{vertices} of $G$. Denote by $v(G)= \card{V(G)}$ the number of vertices of $G$. $E=E(G)$ is called the edge set of $G$ and its elements are called \textit{edges} of $G$. Denote by $e(G) = \card{E(G)}$ the number of edges of $G$.

For a finite set $V$, use the notation
\begin{equation}
\Gr(V) = \set{G:\text{$G$ is a graph with $V(G)=V$}}.
\end{equation}
\end{definition}

We introduce two special kinds of graphs.

\begin{definition}[Path]
A graph $P$ is called a \textit{path of length $n$} iff $v(P)=n$ and the vertices of $P$ can be enumerated like $v_1, \dots, v_n \in V(T)$, such that $V(T) = \set{v_1, \dots, v_n}$ and $E(T) = \set{\set{v_1,v_2}, \set{v_2,v_3}, \dots, \set{v_{n-1}, v_n}}$.
\end{definition}

\begin{definition}[Cycle]
A graph $C$ is called a \textit{cycle of length $n$} iff $v(C)=n$ and the vertices of $C$ can be enumerated like $v_1, \dots, v_n \in V(C)$, such that $V(C) = \set{v_1, \dots, v_n}$ and $E(T) = \set{\set{v_1,v_2}, \set{v_2,v_3}, \dots, \set{v_{n-1}, v_n}, \set{v_n, v_1}}$.
\end{definition}

\begin{definition}[Subgraphs and modifications of graphs]
Let $G$ be a graph.
\begin{enumerate}[label=(\roman*)]
\item A graph $H=(V',E')$ is called \textit{subgraph} of $G$ iff $V' \subseteq V(G)$ and $E' \subseteq E(G)$. Write $H \subseteq G$.
\item For $V' \subseteq V(G)$, the \textit{subgraph induced by $V'$} is the graph $G[V']=(V',E')$ with
\begin{equation}
E' = \set{e\in E : e \subseteq V'} \;.
\end{equation}
A graph $H$ is called an \textit{induced subgraph} of $G$ iff some $V' \subseteq V(G)$ exists such that $H = G[V']$.
\item Let $V' \subseteq V(G)$, $v \in V(G)$, $E' \subseteq E(G)$, and $e \in E(G)$. Define:
\begin{align}
G - V' &= G[V(G) \setminus V'], \\
G - v  &= G - \set{v}, \\
G - E' &= (V(G), E(G) \setminus E'), \\
G - e  &= G - \set{e}.
\end{align}
\item For two subgraphs $H,H' \subseteq G$, define:
\begin{align}
G = H \cup H'
\quad:\Leftrightarrow\quad
&V(G) = V(H) \cup V(H') \notag \\
&\text{and } E(G) = E(H) \cup E(H'), \\
G = H \discup H'
\quad:\Leftrightarrow\quad
&V(G) = V(H) \discup V(H') \notag \\
&\text{and } E(G) = E(H) \discup E(H').
\end{align}
\end{enumerate}
\end{definition}

\begin{definition}[Properties of graphs]
\label{def:properties-of-graphs}
Let $G$ be a graph.
\begin{enumerate}[label=(\roman*)]
\item The \textit{neighbourhood} of $v \in V(G)$ in $G$ is defined as
\begin{equation}
N_G(v)=\set{w\in V(G) : \set{v,w} \in E(G)}.
\end{equation}
The \textit{neighbourhood} of $V' \subseteq V(G)$ in $G$ is defined as
\begin{equation}
N_G(V') = \bigcup_{v \in V'} N_G(v).
\end{equation}
\item The \textit{degree} of $v \in V(G)$ in $G$ is defined as $d_G(v) = \# N_G(v)$.
\item $G$ is called \textit{connected} iff for all $v,w \in V(G)$, the graph contains a path containing $v$ and $w$. For a finite set $V$, use the notation
\begin{equation}
\Gr_c(V) = \set{G \in \Gr(V) : \text{$G$ is connected}}.
\end{equation}
\item The \textit{(connected) component} of $v \in V(G)$ in $G$ is defined as
\begin{equation}
C_v = \set{w \in V(G) : \text{$\exists$ a path in $G$ connecting $v$ and $w$}}.
\end{equation}
A set $D \subseteq E(G)$ is called a \textit{(connected) component} of $G$ iff there exists $v \in V(G)$ such that $D=C_v$.
\item $G$ is called \textit{acyclic} iff it does not contain any cycles.
\end{enumerate}
\end{definition}

Sometimes, it is useful to be able to enumerate the neighbours of every vertex; therefore, we introduce the following notation.

\begin{notation}[Enumeration of neighbours]
\label{not:enumeration-of-neighbours}
Let $G$ be a graph such that there is a total order on $V(G)$. For every $v \in V(G)$, let
\begin{equation}
\iota_{G,v}:\ N_G(v) \to [d_G(v)]
\end{equation}
be the unique bijective mapping such that for all $u,w \in N_G(v)$ with $u < w$, it holds true that $\iota_{G,v}(u) < \iota_{G,v}(w)$.
\end{notation}

\begin{lemma}[Handshaking Lemma]
\label{lem:handshaking-lemma}
A graph $G$ always satisfies
\begin{equation}
\sum_{v \in V(G)} d_G(v) = 2 e(G).
\end{equation}
\end{lemma}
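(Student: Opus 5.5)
We prove the Handshaking Lemma (Lemma~\ref{lem:handshaking-lemma}) by double counting the set of incidences
\begin{equation}
I(G) = \set{(v,e) \in V(G) \times E(G) : v \in e}.
\end{equation}
Everything follows from the definitions of a graph and of the degree in Definition~\ref{def:properties-of-graphs}, so no earlier results are needed.

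First, we count $I(G)$ by grouping according to the vertex. For fixed $v \in V(G)$, the map $w \mapsto \set{v,w}$ is a bijection from $N_G(v)$ onto $\set{e \in E(G) : v \in e}$. It is injective because distinct $w$ give distinct sets. It is surjective because every edge containing $v$ has the form $\set{v,w}$ with $w \neq v$, and then $w \in N_G(v)$ by the definition of the neighbourhood. Hence the number of incidences with first entry $v$ is $\# N_G(v) = d_G(v)$, and summing over $v$ gives
\begin{equation}
\card{I(G)} = \sum_{v \in V(G)} d_G(v).
\end{equation}

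Second, we count $I(G)$ by grouping according to the edge. Since $G$ is simple, every edge is a set $\set{x,y}$ with $x \neq y$, so it has exactly two elements. Thus each $e \in E(G)$ occurs in exactly two incidences, and
\begin{equation}
\card{I(G)} = 2\, e(G).
\end{equation}
Equating the two counts proves the lemma.

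No real obstacle arises. The only point requiring care is that simplicity (no loops, and edges being sets rather than multisets) is exactly what guarantees that each edge contributes $2$ to the total and that $\#N_G(v)$ coincides with the number of edges at $v$. For the self-loop-free leaf trees used in Section~\ref{sec:combinatorics-of-graphs}, this is precisely the setting.
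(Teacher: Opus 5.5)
Your double-counting proof is correct and complete. The bijection $w \mapsto \set{v,w}$ from $N_G(v)$ onto the edges at $v$ is properly justified, using that the two endpoints of an edge are distinct. The count $\card{I(G)} = 2e(G)$ uses exactly the fact that every edge is a two-element set.

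The paper takes a different route: its entire proof is ``Induction on $v(G)$''. Unpacked, that argument deletes a vertex $v$. This removes the $d_G(v)$ edges at $v$ and lowers the degree of each of the $d_G(v)$ neighbours by one. Both sides of the identity therefore drop by $2d_G(v)$, and one concludes from the inductive hypothesis for $G - v$, with the edgeless one-vertex graph as base case.

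Both proofs rest on the same fact: each edge is counted once at each of its two endpoints. Your version uses it directly in a single non-inductive step. It avoids tracking how neighbours' degrees change under vertex deletion, shows exactly where simplicity of $G$ enters, and carries over to multigraphs if loops are counted twice. The inductive route matches the style of the other combinatorial proofs in the appendix, but it is neither more general nor more elementary than yours.
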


\begin{proof}
Induction on $v(G)$.
\end{proof}

\begin{definition}[Tree] A graph $T$ is called a \textit{tree} iff $T$ is connected and acyclic. For a finite set $V$ and $n \in \N$, use the notation:
\begin{equation}
\mathfrak{T} = \set{T : \text{$T$ is a tree}},\ 
\mathfrak{T}(V) = \set{T \in \mathfrak{T} : V(T) = V}
\text{ and }
\mathfrak{T}_n = \mathfrak{T}([n]).
\end{equation}
In a tree $T$, there is exactly one path between any two vertices $u,v \in V(T)$ (if there was more than one such path, $T$ would contain a cycle); call this path $P_{u,v}(T) \subseteq T$.
\end{definition}

\begin{remark}
A tree $T$ always satisfies $e(T) = v(T) - 1$.
\end{remark}

\section{Proof of combinatorial statements from Section~\ref{sec:combinatorics-of-graphs}}
\label{app:combinatorial-proofs}

We begin by proving Lemma~\ref{lem:num-of-induced-leaf-trees}.

\begin{proof}[Proof of Lemma~\ref{lem:num-of-induced-leaf-trees}]

(i) First, we derive a neat property of induced leaf trees on binary trees. Let $l > 1$ and $T \in \mathfrak{B}_l$ be a binary tree. If we remove the root $*$ from $T$, we get two smaller binary trees $T_1$ and $T_2$ rooted at $c_T^{(1)}(*)$ and $c_T^{(2)}(*)$ respectively (see Figure~\ref{fig:induction-root}). For a given leaf tree $L \in \mathfrak{L}(T)$ and $i \in \set{1,2}$, set
\begin{equation}
L_i = L\left[\mathcal{L}_T\left(c_T^{(i)}(*)\right)\right] \; .
\end{equation}
Here $L_1$ and $L_2$ are induced leaf trees of $T_1$ and $T_2$ respectively. The original leaf tree now satisfies
\begin{equation}
E(T) = E(T_1)\ \dot{\cup}\ E(T_2)\ \dot{\cup}\ \set{e},
\end{equation}
where $e \in E(T)$ is an edge between a vertex in $\mathcal{L}_T\left(c_T^{(1)}(*)\right)$ and a vertex in $\mathcal{L}_T\left(c_T^{(2)}(*)\right)$. But this means that the following recursion holds:
\begin{equation}
\label{eq:number-induced-leaf-trees-recursion}
\begin{split}
\card{\mathfrak{L}(T)} = &\card{\mathfrak{L}(T_1)} \cdot \card{\mathfrak{L}(T_2)} \\
&\quad \cdot \left( \card{\mathcal{L}_T\left(c_T^{(1)}(*)\right)} \cdot \card{\mathcal{L}_T\left(c_T^{(2)}(*)\right)} \right),
\end{split}
\end{equation}
where the last factor denotes the number of possibilities to choose $e$. We can now prove \eqref{eq:num-of-induced-leaf-trees-1} by induction on the number of leaves, starting at the the binary tree with one leaf. The induction hypothesis gives us for $i \in \set{1,2}$:
\begin{equation}
\card{\mathfrak{L}(T_i)} = \prod_{v \in V(T_1) \setminus \set{c_T^{(i)}(*)}} \card{\mathcal{L}_T(v)}.
\end{equation}
Together with the recursion formula \eqref{eq:number-induced-leaf-trees-recursion}, this completes the induction step.

\medskip
 (ii) Induction on $l$, with inductive hypothesis \eqref{eq:num-of-induced-leaf-trees-2}.
 Let $l \in \N$, $T \in \mathfrak{B}_l$, and $d = (d_w)_{w \in \vleaf(T)}$ with $d_w \geq 1$ for all $w \in \vleaf(T)$. If $l(T) \in \set{1,2}$, the statement is clear; hence, let $l(T) \geq 3$ and assume the inductive hypothesis to hold for all trees $\theta$ with $l(\theta) < l (T)$. Let $\tilde T = T - \vleaf(T)$. Because $l \ge 3$, $\tilde T$ is nonempty, and it is again a tree. Choose a leaf $v \in V(\tilde T) \setminus \set{*}$ of $\tilde T$ and for $i \in \set{1,2}$ let $w_i = c_T^{(i)}(v)$. Now, $w_1$ and $w_2$ are leaves of $T$ whose common parent vertex is $v$. Let $T' = T - w_1 - w_2$; this is a binary tree with
\begin{equation}\label{eq:forksTprime}
\vfork(T') = \vfork(T) \setminus \set{v}
\end{equation}
and
\begin{equation}
\vleaf(T') =  \set{v} \cup \vleaf(T) \setminus \set{w_1,w_2}
\end{equation}
(see Figure~\ref{fig:induction-leaf}). Define the sequence
\begin{equation}\label{eq:dprimedef}
\begin{split}
&d' = (d'_w)_{w \in \vleaf(T')} \\
&\text{with } d'_w = \begin{cases}
d_w & \text{if $w \neq v$} \\
d_{w_1} + d_{w_2} - 2 & \text{if $w = v$}\;.
\end{cases}
\end{split}
\end{equation}
Further define a mapping
\begin{equation}
f:\ \vleaf(T) \to \vleaf(T'),\ w \mapsto \begin{cases}
w & \text{if $w \notin \set{w_1, w_2}$} \\
v & \text{if $w \in \set{w_1, w_2}$}\;.
\end{cases}
\end{equation}
If $\mathfrak{L}(T,d)= \emptyset$, the statement is trivial. Otherwise, consider a leaf tree $L \in \mathfrak{L}(T,d)$. Define $L' = (\vleaf(T'), E')$ with
\begin{equation}
E' = \set{\set{f(a), f(b)} : \set{a,b} \in E(L)} \setminus \set{v},
\end{equation}
where we subtract $\set{v}$ because the edge $\set{w_1,w_2} \in E(L)$ gets mapped to $\set{v}$. By construction, we have $L' \in \mathfrak{L}(T',d')$ with $l(T') = l(T) - 1$. Hence, what we have effectively constructed is a mapping
\begin{equation}
F:\ \mathfrak{L}(T,d) \to \mathfrak{L}(T',d')\; , \quad L \mapsto F(L) = L' \;.
\end{equation}
We now claim that for every $L' \in \mathfrak{L}(T',d')$,
\begin{equation}
\card{F^{-1}(L')}
= \binom{d'_v}{d_{w_1} - 1}
= \frac{d'_v!}{(d_{w_1} - 1)! \cdot (d_{w_2} - 1)!}.
\end{equation}
This holds because there are $d'_v$ edges connecting to $v$ in $L'$, of which $d_{w_1} - 1$ have to be chosen to connect to $w_1$ in $L$ and $d_{w_2} - 1$ have to be chosen to connect to $w_2$ in $L$. Thus
\beq
\card{\mathfrak{L}(T,d)}
=
\card{\mathfrak{L}(T',d')} \;\frac{d'_v!}{(d_{w_1} - 1)! \; (d_{w_2} - 1)!} \; .
\eeq
The inductive hypothesis applies to $T'$, so
\beq
\card{\mathfrak{L}(T',d')}
=
\frac{\prod_{u \in \vfork(T') \setminus \set{*}} \left( \sum_{w \in \mathcal{L}_{T'}(u)} ({ d'_w} - 2) + 2 \right)}{\prod_{w \in \vleaf(T')} ({ d'_w} - 1)!}
\eeq
We now rewrite numerator and denominator in terms of $T$. For each factor in the product in the numerator, the equation $d'_v - 2 = (d_{w_1} + d_{w_2} - 2) - 2 = (d_{w_1} - 2) + (d_{w_2} - 2)$ implies for all $u \in \vfork(T') \setminus \set{*}$:
\begin{equation}
\sum_{w \in \mathcal{L}_{T'}(u)} ({ d'_w} - 2) + 2 = \sum_{w \in \mathcal{L}_{T}(u)} (d_w - 2) + 2 \;.
\end{equation}
Moreover, by \eqref{eq:forksTprime} and by \eqref{eq:dprimedef}, we can extend the product in the numerator to all $u \in \vfork(T) \setminus \set{*}$ if we divide by a factor
\beq
\sum_{w \in \mathfrak{L}_T(v)} (d_w - 2) + 2 = d_{w_1} + d_{w_2} - 2 = d'_v \;.
\eeq
The induction step is now completed by noting that
\beq
\frac{(d'_v-1)!}{(d_{w_1} - 1)! \; (d_{w_2} - 1)!} \; \frac{1}{\prod_{w \in \vleaf(T')} ({ d'_w} - 1)!} 
=
\frac{1}{\prod_{w \in \vleaf(T)} (d_w - 1)!} \; .
\eeq
\end{proof}

In order to prove Lemma~\ref{lem:num-of-leaf-trees-bound}, we first need another definition.

\begin{definition}[Leaf tree with extra vertices]
Let $T \in \mathfrak{B}$ be a binary tree and $k \in \N_0$. We say that a graph $L=(V,E)$ is an \textit{(induced) leaf tree of $T$ with $k$ extra vertices} iff there exists a leaf tree $L' \in \mathfrak{L}(T)$ and a mapping $f:\ [k] \to \vleaf(T)$ such that\footnote{$\sqcup$ denotes the disjoint union} 
\begin{equation}
V = \vleaf(T) \sqcup [k]
\text{ and }
E = E(L') \cup \set{\set{i,f(i)} : i \in [k]}.
\end{equation}
$L$ is indeed a tree, and every $i \in [k]$ is a leaf of $L$. 
We write
\begin{equation}
\mathfrak{L}_k(T) = \set{L : \text{$L$ is an induced leaf tree on $T$ with $k$ extra vertices}},
\end{equation}
and:
\begin{equation}
\mathfrak{L} = \set{(l,T,k,L) : l \in \N,\ T \in \mathfrak{B}_l,\ k \in \N_0,\ L \in \mathfrak{L}_k(T)}.
\end{equation}
\end{definition}

Intuitively, two extra vertices arise when, in the operation of removing the root of the binary tree, one of the lines of the induced leaf tree gets cut into two ``half-lines''. In the nomenclature of Feynman graphs in QFT, one would say that the graphs with $k$ extra vertices are graphs with $k$ external legs, i.e.\ every external leg ends at one of the extra vertices.

Note that $\mathfrak{L}_0(T) = \mathfrak{L}(T)$ for all $T \in \mathfrak{B}$. Next, we prove two auxiliary lemmas.

\begin{lemma}
\label{lem:degree-sums-in-leaf-trees-with-extra-vertices}
For $T \in \mathfrak{B}$, $k \in \N_0$, and $L \in \mathfrak{L}_k(T)$, we have:
\begin{equation}
\label{eq:degree-sums-in-leaf-trees-with-extra-vertices}
\sum_{w \in \mathcal{L}_T(*)}(d_L(w) - 2) + 2 = k.
\end{equation}
\end{lemma}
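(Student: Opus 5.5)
This is a direct counting argument via the Handshaking Lemma (Lemma~\ref{lem:handshaking-lemma}) applied to $L$, with no induction needed. By definition of a leaf tree with $k$ extra vertices, there is $L' \in \mathfrak{L}(T)$ and $f:[k]\to\vleaf(T)$ such that $V(L)=\vleaf(T)\sqcup[k]$ and $E(L)=E(L')\cup\set{\set{i,f(i)}:i\in[k]}$. The union is disjoint, since the added edges contain an element of $[k]$ and the edges of $L'$ do not. Moreover, the $k$ added edges are pairwise distinct because their $[k]$-endpoints differ.

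First I count the edges of $L$. Since $L'$ is a tree on $\vleaf(T)$ (Remark after Definition~\ref{def:induced-leaf-tree}), it has $l(T)-1$ edges. Alternatively, Definition~\ref{def:induced-leaf-tree} assigns exactly one edge to each fork, and a binary tree with $l$ leaves has $l-1$ forks. Hence $e(L)=l(T)-1+k$.

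Next I split the degree sum. Each extra vertex $i\in[k]$ has degree exactly $1$ in $L$, since its only edge is $\set{i,f(i)}$. The Handshaking Lemma then gives
\begin{equation*}
\sum_{w\in\vleaf(T)} d_L(w) + k = 2\big(l(T)-1+k\big),
\end{equation*}
so $\sum_{w\in\vleaf(T)} d_L(w) = 2(l(T)-1)+k$.

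Finally, since $\mathcal{L}_T(*)=\vleaf(T)$ (every leaf $(n',k')$ satisfies $k'\equiv 0 \pmod{2^0}$) and this set has $l(T)$ elements,
\begin{equation*}
\sum_{w\in\mathcal{L}_T(*)}(d_L(w)-2)+2 = 2(l(T)-1)+k-2l(T)+2 = k .
\end{equation*}

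The only point requiring care is the edge count $e(L')=l(T)-1$. If one prefers not to rely on the unproven Remark that $L'$ is a tree, the fork count suffices: the edges of $L'$ are in bijection with $\vfork(T)$. Surjectivity onto $E(L')$ holds because the definition assigns an edge to every fork. Distinct forks give distinct edges because an edge $\set{w_1,w_2}$ determines its fork as the lowest common ancestor of $w_1$ and $w_2$ in $T$, which is exactly the fork whose two subtrees separate $w_1$ from $w_2$. I am assuming here that $E(L')$ contains only these fork edges. Together with $\card{\vfork(T)}=l(T)-1$, which follows by the same root-removal induction used in Lemma~\ref{lem:bound-on-tree-integral}, this gives $e(L')=l(T)-1$.
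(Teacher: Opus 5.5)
Your proof is correct and is essentially the paper's argument. Both apply the Handshaking Lemma to $L$, use that each of the $k$ extra vertices has degree $1$, and use $e(L)=l(T)-1+k$. The paper reads off this edge count directly as $e(L)=v(L)-1$ from $L$ being a tree, whereas you count $e(L')=l(T)-1$ and add the $k$ new edges. Your side assumption that $E(L')$ consists only of fork edges actually follows from the definition: every edge $\{w_1,w_2\}$ of $L'$ joins the two child subtrees of the lowest common ancestor of $w_1$ and $w_2$, so it is that fork's unique edge.
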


\begin{proof}
By Lemma~\ref{lem:handshaking-lemma}, 
\begin{equation}
 \sum_{v \in V(L)} d_L(v) 
 = 2 e(L) = 2 (v(L) - 1) = 2 (l(T) + k - 1).
\end{equation}
and by definition, $\sum_{v \in V(L)} d_L(v) = \sum_{w \in \leavesof{T}{*}} d_L(w) + k$.
Comparing and rearranging this equation and using $l(T) = \card{\leavesof{T}{*}}$ yields Equation~\eqref{eq:degree-sums-in-leaf-trees-with-extra-vertices}.
\end{proof}

\begin{lemma}
\label{lem:bound-for-leaf-trees-with-extra-vertices}
Consider the mapping:
\begin{equation}
F:\ \mathfrak{L} \to \R_{\geq 0},\ 
(l,T,k,L) \mapsto \frac{\prod_{v \in \vfork(T)} \left(\sum_{w \in \mathcal{L}_T(v)} (d_L(w) - 2) + 2\right)}{\prod_{v \in V(T) \setminus \set{*}} \card{\mathcal{L}_T(v)}}.
\end{equation}
Then for all $(l,T,k,L) \in \mathfrak{L}$ with $k \geq 1$
\begin{equation}
F(l,T,k,L) \leq \binom{k+l-2}{k-1}\;.
\end{equation}
\end{lemma}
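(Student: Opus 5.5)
The plan is to prove the bound by induction on $l$, using ``induction from the root'' on trees with extra vertices (Figure~\ref{fig:induction-root-extra-vertices}). For $v \in \vfork(T)$ I will write $D_L(v) = \sum_{w \in \mathcal{L}_T(v)} (d_L(w)-2)+2$. By Lemma~\ref{lem:degree-sums-in-leaf-trees-with-extra-vertices}, the root factor is $D_L(*) = k$.

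\emph{Base case.} For $l=1$ there are no forks and no non-root vertices. Hence $F=1=\binom{k-1}{k-1}$.

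\emph{Reduction step.} For $l \ge 2$, remove the root. This splits $T$ into $T_1, T_2$ with $l_1+l_2=l$, and $|\mathcal{L}_T(c_T^{(i)}(*))| = l_i$. In $L$ there is exactly one edge $e=\{a,b\}$ between the two leaf sets. Extra vertices attached to leaves of $T_i$ go with $T_i$. I then cut $e$ into two half-edges ending at two new extra vertices, one attached to $a$ and one to $b$. This produces $L_1 \in \mathfrak{L}_{k_1}(T_1)$ and $L_2 \in \mathfrak{L}_{k_2}(T_2)$ with $k_1+k_2=k+2$ and $k_1,k_2 \ge 1$.

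All leaf degrees are preserved by this operation. Therefore $D_{L_i}(v) = D_L(v)$ for every fork $v$ of $T_i$. Splitting off the root factor in the numerator and the two children of the root in the denominator gives the exact recursion
\[
F(l,T,k,L) = \frac{k}{l_1 l_2}\, F(l_1,T_1,k_1,L_1)\, F(l_2,T_2,k_2,L_2).
\]
With the induction hypothesis and the substitution $k_i = x_i+1$, $x_1+x_2=k$, it suffices to show
\[
\frac{k}{l_1 l_2}\binom{x_1+l_1-1}{x_1}\binom{x_2+l_2-1}{x_2} \le \binom{k+l-2}{k-1}.
\]

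\emph{The binomial inequality.} This is the step I expect to be the main obstacle. Simply bounding one term by the full Vandermonde convolution $\sum_{x_1+x_2=k}\binom{x_1+l_1-1}{x_1}\binom{x_2+l_2-1}{x_2}=\binom{k+l-1}{k}$ loses too much, since it leaves the requirement $l_1l_2 \ge k+l-1$, which fails in general.

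First I will split $k = x_1 + x_2$ and treat the two resulting terms separately. For $x_1 \ge 1$ I use the identity
\[
\frac{x_1}{l_1}\binom{x_1+l_1-1}{x_1} = \binom{x_1+l_1-1}{l_1} = \binom{x_1+l_1-1}{x_1-1}.
\]
Then I use $\frac{1}{l_2}\binom{x_2+l_2-1}{x_2} \le \binom{x_2+l_2-1}{x_2}$. The $x_1$-part is then bounded by a single term of the convolution
\[
\sum_{y_1+y_2=k-1}\binom{y_1+l_1-1}{y_1}\binom{y_2+l_2-1}{y_2} = \binom{k+l-2}{k-1},
\]
namely the term with $y_1=x_1-1$, $y_2=x_2$. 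The $x_2$-part is handled symmetrically, using the identity for $x_2$ and the bound $1/l_1 \le 1$. Terms with $x_i=0$ contribute nothing to the corresponding part.

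The remaining difficulty is that the two parts together are bounded by two distinct terms of this convolution. Their sum is at most the full convolution only if those two terms are different, i.e.\ if $(x_1-1,x_2) \ne (x_1, x_2-1)$, which always holds. Hence the sum of the two parts is bounded by the sum of two distinct nonnegative terms of $\sum_{y_1+y_2=k-1}$, which is at most $\binom{k+l-2}{k-1}$. This closes the induction.

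If this bookkeeping fails, my fallback is a generating-function check. Set $B_l(z)=\sum_{k\ge1}\binom{k+l-2}{k-1}z^{k-1}=(1-z)^{-l}$ and compare coefficients of $z\frac{d}{dz}$ applied to $\frac{1}{l_1 l_2}z^2 B_{l_1}B_{l_2}$, exploiting the factor $k$ as a derivative.
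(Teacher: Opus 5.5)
The reduction step is correct and is the paper's own: remove the root, cut the connecting edge into two extra vertices so that $k_1+k_2=k+2$, and obtain $F=\frac{k}{l_1l_2}F_1F_2$.

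\textbf{The gap: the binomial inequality.} Put $n_i=x_i+l_i-1$. After your identity and after dropping $1/l_2$, your $x_1$-part is $\binom{n_1}{x_1-1}\binom{n_2}{x_2}=\binom{x_1+l_1-1}{x_1-1}\binom{x_2+l_2-1}{x_2}$. The $y_1=x_1-1$ term of $\sum_{y_1+y_2=k-1}\binom{y_1+l_1-1}{y_1}\binom{y_2+l_2-1}{y_2}$ is instead $\binom{x_1+l_1-2}{x_1-1}\binom{x_2+l_2-1}{x_2}$. The first upper index is off by one, and your part is strictly larger as soon as $x_1\ge 2$.

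A concrete counterexample is $l_1=l_2=1$, $x_1=2$, $x_2=0$. Your $x_1$-part equals $2$, but the term you claim dominates it equals $1$. Here $1/l_2=1$, so nothing was lost by dropping that factor. So the claim ``bounded by a single term of this convolution'' is false, and the induction is not closed.

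Your generating-function fallback cannot repair this. Extracting coefficients from $B_{l_1}B_{l_2}$ sums over all splittings $x_1+x_2=k$. That reproduces exactly the estimate $\frac{k}{l_1l_2}\binom{k+l-1}{k}$ you already rejected.

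\textbf{The fix.} You need the ordinary Vandermonde identity with the upper indices $n_1,n_2$ held fixed, not the negative-binomial one. Since $n_1+n_2=k+l-2$, it reads $\sum_j\binom{n_1}{j}\binom{n_2}{k-1-j}=\binom{k+l-2}{k-1}$. Your splitting then works. The part $\frac{1}{l_2}\binom{n_1}{x_1-1}\binom{n_2}{x_2}$ is at most the term $j=x_1-1$, and $\frac{1}{l_1}\binom{n_1}{x_1}\binom{n_2}{x_2-1}$ is at most the term $j=x_1$. These terms are distinct, so the sum is at most $\binom{k+l-2}{k-1}$.

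The paper uses the same identity in one stroke. It bounds $l_1l_2\ge l_1+l_2-1=l-1$ and $\binom{n_1}{x_1}\binom{n_2}{x_2}\le\sum_j\binom{n_1}{k-j}\binom{n_2}{j}=\binom{k+l-2}{k}$. Then $\frac{k}{l-1}\binom{k+l-2}{k}=\binom{k+l-2}{k-1}$.

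So your diagnosis that bounding by ``the full Vandermonde convolution loses too much'' holds only for the convolution over $x_1+x_2=k$ with varying upper indices. The fixed-upper-index sum $\binom{k+l-2}{k}=\frac{l-1}{k+l-1}\binom{k+l-1}{k}$ is smaller by exactly the factor that makes $l_1l_2\ge l-1$ suffice.
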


\begin{proof}
Let $(l,T,k,L) \in \mathfrak{L}$ with $k \geq 1$. For $l = 1$, the statement is clear; hence, let $l \geq 2$.
If we remove the root $*$ from $T$, we get two smaller binary trees $T_1$ and $T_2$ rooted at $v_1 = c_T^{(1)}(*)$ and $v_2 = c_T^{(2)}(*)$ respectively (see Figure~\ref{fig:induction-root}). Let $l_1=l(T_1)$ and $l_2=l(T_2)$. Then $l_1 \ge1$ and $l_2 \ge 1$. For $i \in \set{1,2}$, define:
\begin{equation}
L_i = L\left[ \vleaf(T_i) \cup N_L(\vleaf(T_i)) \right]
\text{ and }
k_i = \card{N_L(\vleaf(T_i))} \geq 1.
\end{equation}
Note that $L_1$ and $L_2$ are leaf trees of $T_1$ and $T_2$ with $k_1$ and $k_2$ extra vertices respectively.\footnote{To match our exact definition of leaf trees with extra vertices, one would need to rename some vertices. However, this does not matter here.} Also note that $k_1 + k_2 = k + 2$, and $d_L(w) = d_{L_i}(w)$ for every $w \in \vleaf(T_i)$ and $i \in \set{1,2}$. Using the inductive hypothesis and Lemma~\ref{lem:degree-sums-in-leaf-trees-with-extra-vertices}, we get
\begin{equation}
\label{eq:bound-for-leaf-trees-with-extra-vertices-induction-step-1}
\begin{split}
F(l,T,k,L)
&= \frac{\prod_{v \in \vfork(T)} \left(\sum_{w \in \mathcal{L}_T(v)} (d_L(w) - 2) + 2\right)}{\prod_{v \in V(T) \setminus \set{*}} \card{\mathcal{L}_T(v)}} \\
&= \left( \prod_{i \in \set{1,2}} \frac{\prod_{v \in \vfork(T_i)} \left(\sum_{w \in \mathcal{L}_{T_i}(v)} (d_{L_i}(w) - 2) + 2\right)}{\prod_{v \in V(T_i) \setminus \set{*}} \card{\mathcal{L}_{T_i}(v)}} \right) \cdot \frac{k}{l_1 \cdot l_2} \\
&= F(l_1,T_1,k_1,L_1) \cdot F(l_2,T_2,k_2,L_2) \cdot \frac{k}{l_1 \cdot l_2} \\
&\leq \binom{k_1 + l_1 - 2}{k_1 - 1} \cdot \binom{k_2 + l_2 - 2}{k_2 - 1} \cdot \frac{k}{l_1 \cdot l_2}.
\end{split}
\end{equation}
Because $l_i \ge 1$,  $l_1 \cdot l_2 \geq l_1 + l_2 - 1$, so
\begin{equation}
\label{eq:bound-for-leaf-trees-with-extra-vertices-induction-step-2}
F(l,T,k,L) \leq \frac{k}{l_1 + l_2 - 1} \cdot \binom{k_1 + l_1 - 2}{k_1 - 1} \cdot \binom{k_2 + l_2 - 2}{k_2 - 1}.
\end{equation}
We can bound this expression by a sum of positive terms and then use Vandermonde's identity on this sum,
\begin{equation}
\label{eq:bound-for-leaf-trees-with-extra-vertices-induction-step-3}
\begin{split}
F(l,T,k,L)
&\leq \frac{k}{l_1 + l_2 - 1} \cdot \sum_{j \ge 0 } \binom{k_1 + l_1 - 2}{(k_1 - 1) + (k_2 - 1) - j} \cdot \binom{k_2 + l_2 - 2}{j} \\
&= \frac{k}{l - 1} \cdot \binom{(k_1 + l_1 - 2) + (k_2 + l_2 - 2)}{(k_1 - 1) + (k_2 - 1)} \\
&= \frac{k}{l - 1} \binom{(k_1 + k_2 - 2) + (l_1 + l_2) - 2}{k_1 + k_2 - 2} \\
&= \frac{k}{l - 1} \binom{k + l - 2}{k}
= \binom{k + l - 2}{k - 1}.
\end{split}
\end{equation}
\end{proof}

\begin{figure}[h!]
\tiny
\begin{tikzpicture}
\node[anchor=south] at (1,4) {$*$};
\draw (1,4) -- (0.5,3);
\draw (1,4) -- (1.5,3);

\draw (0.5,3) -- (0,2);
\draw (0.5,3) -- (1,2);

\draw (1,2) -- (0.5,1) node[anchor=east] {$v$};
\draw (1,2) -- (1.5,1);

\draw (0.5,1) -- (0,0);
\draw (0.5,1) -- (1,0);

\draw[blue,dashed] (0,2) .. controls (-0.25,1) .. (0,0);
\draw[blue,dashed] (0,0) .. controls (0.5,-0.25) .. (1,0);
\draw[blue,dashed] (1,0) .. controls (1.5,0.25) .. (1.5,1);
\draw[blue,dashed] (1,0) .. controls (2,0) .. (1.5,3);

\node[circle, fill=black, inner sep=0, minimum size=3pt] at (1,4) {};
\node[circle, fill=black, inner sep=0, minimum size=3pt] at (0.5,3) {};
\node[circle, fill=black, inner sep=0, minimum size=3pt] at (1.5,3) {};
\node[circle, fill=black, inner sep=0, minimum size=3pt] at (0,2) {};
\node[circle, fill=black, inner sep=0, minimum size=3pt] at (1,2) {};
\node[circle, fill=black, inner sep=0, minimum size=3pt] at (0.5,1) {};
\node[circle, fill=black, inner sep=0, minimum size=3pt] at (1.5,1) {};
\node[circle, fill=black, inner sep=0, minimum size=3pt] at (0,0) {};
\node[circle, fill=black, inner sep=0, minimum size=3pt] at (1,0) {};
\end{tikzpicture}
\normalsize
\caption{Example of an extremal leaf tree.}
\label{fig:extremal-binary-tree}
\end{figure}

\begin{remark}
\label{rem:def-of-extremal-binary-trees}
We say that a binary tree $T \in \B$ is \textit{extremal} iff there exists a vertex $v \in \vfork(T)$ such that
\begin{equation}
\vfork(T) = V(P_{*,v}(T)),
\end{equation}
where $P_{*,v}(T)$ is the unique path in $T$ that connects $*$ and $v$. We call $v$ the \textit{extremal vertex} of $T$.
If $T \in \B$ is an extremal binary tree with extremal vertex $v \in \vfork(T)$, we say that $L \in \L_k(T)$ is an \textit{extremal leaf tree} iff $d_L(u) = 1$ for all $u \in \vleaf(T) \setminus \leavesof{T}{v}$ (see Figure~\ref{fig:extremal-binary-tree}).

Now, we can make the following statement: The bound for $F(l,T,k,L)$ from Lemma~\ref{lem:bound-for-leaf-trees-with-extra-vertices} is tight iff $T$ is an extremal binary tree and $L$ is an extremal leaf tree. (This can be easily seen in the induction step.) Otherwise, the inequalities in Equations~\eqref{eq:bound-for-leaf-trees-with-extra-vertices-induction-step-2} and \eqref{eq:bound-for-leaf-trees-with-extra-vertices-induction-step-3} lead to an overestimation.
\end{remark}

The preceding auxiliary lemma now finally allows us to prove Lemma~\ref{lem:num-of-leaf-trees-bound}:

\begin{proof}[Proof of Lemma~\ref{lem:num-of-leaf-trees-bound}]
The denominator in \eqref{eq:num-of-induced-leaf-trees-2} and in \eqref{eq:num-of-leaf-trees-bound} is the same, so once the indicator function is taken care of, the point is to bound the numerator of \eqref{eq:num-of-induced-leaf-trees-2}. 

Let $l \in \N$, $T \in \mathfrak{B}_l$, and $d = (d_w)_{w \in \vleaf(T)}$ with $d_w \geq 1$ for all $w \in \vleaf(T)$. If $\mathfrak{L}(T, d) = \emptyset$, the statement is trivial. Otherwise, let $L \in \mathfrak{L}(T, d)$. By Lemma~\ref{lem:handshaking-lemma}, we have $\sum_{w \in \vleaf(T)} d_w = 2e(L) = 2(l-1)$, i.e.\ $\bbbone(\sum_{w \in \vleaf(T)} d_w = 2(l-1)) = 1$.
Now, choose $w_0 \in \vleaf(T)$ and let $L' \in \mathfrak{L}_1(T)$ with
\begin{equation}
V(T') = V(T) \sqcup \set{1}
\text{ and }
E(T') = E(T) \cup \set{\set{1,w_0}}.
\end{equation}
 
By Lemma~\ref{lem:num-of-induced-leaf-trees} (ii) 
\begin{equation}
\begin{split}
\prod_{w \in \vleaf(T)} (d_w - 1)! \; \card{\mathfrak{L}(T, d)}
&= \prod_{v \in \vfork(T) \setminus \set{*}} \left( \sum_{w \in \mathcal{L}_T(v)} (d_L(w) - 2) + 2 \right)
\\
&\leq \prod_{v \in \vfork(T)} \left( \sum_{w \in \mathcal{L}_T(v)} (d_{L'}(w) - 2) + 2 \right)
\\
&= \prod_{v \in V(T) \setminus \set{*}} \card{\mathcal{L}_T(v)} \cdot F(l,T,1,L') \; .
\end{split}
\end{equation}
By Lemma~\ref{lem:bound-for-leaf-trees-with-extra-vertices}, $F(l,T,1,L') \le 1$, which completes the proof.  
\end{proof}

\section{Selected proofs for Section~\ref{subsec:determinants} and \ref{subsec:trimming-the-tree}}
\label{app:selected-proofs}

\begin{proof}[Proof of Lemma~\ref{lem:laplacians-and-determinants}]
Compute:
\begin{equation}
\begin{split}
e^{\Delta_\Mcov} \prod_{X \in D} \psi(X)
&= \sum_{k=0}^\infty \frac{1}{k!} \left(\fdv{}{\Psi},\Mcov\fdv{}{\Psi}\right)^k \prod_{X \in D} \psi(X) \\
&= \sum_{k=0}^\infty \sum_{\substack{A,A' \subseteq D \\ \card{A}=\card{A'}=k \\ A \cap A' = \emptyset}} \tilde{\varepsilon}_D^{A,A'} \left( \prod_{X \in D \setminus (A \cup A')} \psi(X) \right) \\
&\qquad \cdot \frac{1}{k!} \left(\fdv{}{\Psi^{(1)}},\Mcov\fdv{}{\Psi^{(2)}}\right)^k \prod_{X \in A} \psi^{(1)}(X) \prod_{X \in A'} \psi^{(2)}(X),
\end{split}
\end{equation}
where we have (using the notation $A=\set{a_1, \dots, a_k}$ and $A'=\set{a'_1, \dots, a'_k}$):\footnote{Note that all products in these equations are actually ordered products and that the ordering of $a_1, \dots, a_k$ and $a'_1, \dots, a'_k$ has to respect this order. This does not matter here though, because we are not interested in overall signs; these are absorbed into the $\varepsilon_D^{A,A'}$.}
\begin{equation}
\begin{split}
&\frac{1}{k!} \left(\fdv{}{\Psi^{(1)}},\Mcov\fdv{}{\Psi^{(2)}}\right)^k \prod_{X \in A} \psi^{(1)}(X) \prod_{X \in A'} \psi^{(2)}(X) \\
= &-\frac{1}{k!} \sum_{\substack{X_1, \dots, X_k \in A \\ X'_1, \dots, X'_k \in A'}} \left(\prod_{i=1}^k \Mcov_{X_iX'_i}\right) \\
&\qquad \cdot \left(\fdv{}{\psi(X_1)} \dots \fdv{}{\psi(X_k)} \prod_{X\in A} \psi(X) \right)
\left(\fdv{}{\psi(X'_1)} \dots \fdv{}{\psi(X'_k)} \prod_{X\in A'} \psi(X) \right) \\
= & -\frac{1}{k!} \sum_{\sigma,\tau \in S_k} \sgn(\sigma)\; \sgn(\tau)\; \Mcov_{a_{\sigma(1)}a'_{\tau(1)}} \dots \Mcov_{a_{\sigma(k)}a'_{\tau(k)}} \\
= & - \sum_{\sigma \in S_k} \sgn(\sigma)\; \Mcov_{a_1a'_{\sigma(1)}} \dots \Mcov_{a_ka'_{\sigma(k)}}
= - \det(\Mcov_{A,A'}).
\end{split}
\end{equation}
This proves the lemma.
\end{proof}

\begin{proof}[Proof of Lemma~\ref{lem:trimming-the-tree}]
Define
\begin{equation}
\begin{split}
&F\Big(T,(m_q)_{q \in V(T)}\Big) \\
&= \max_{(q',i') \in D} \sup_{\tilde X \in \X} \int \Bigg( \prod_{(q,i) \in D \setminus \set{(q',i')}} \dd{X_{(q,i)}} \Bigg) \Bigg( \prod_{q=1}^p \abs{v_{m_q}\Big(X_{(q,1)}, \dots, X_{(q,m_q)}\Big)} \Bigg) \\
&\mkern135mu \cdot \Bigg( \prod_{\set{q,q'} \in E(T)} \abs{C\Big(X_{(q,i_{T,q}(q'))}, X_{(q',i_{T,q'}(q))}\Big)} \Bigg) \Bigg|_{X_{(q',i')} = \tilde X}.
\end{split}
\end{equation}
We now prove the lemma by induction on $p$. For $p=1$, the statement is true trivially. Now let $p \geq 2$. Choose $(q',i') \in D$ to maximise the expression \eqref{eq:trimming-the-tree}. Let $v \in [p]$ be a leaf of $T$ that is not $q'$, and let $w$ be the unique neighbour of $v$ in $T$, i.e.\ $w \in N_T(v)$. Let $i \in [d_T(w)]$ such that $i_{T,w}(v) = i$. Then, by taking suprema under the integrals:
\begin{equation}
\begin{split}
F\Big(T, (m_q)_{q \in V(T)}\Big)
\leq& \Bigg( \sup_{X_{(v,1)} \in \X} \int \Bigg(\prod_{i=2}^{m_v} \dd{X_{(v,i)}}\Bigg) \abs{v_{m_v}\Big(X_{(v,1)}, \dots, X_{(v,m_v)}\Big)} \Bigg) \\
& \cdot \Bigg( \sup_{X_{(w,i)} \in \X} \int \dd{X_{(v,1)}}\; \abs{C\Big(X_{(v,1)}, X_{(w,i)}\Big)} \Bigg) \\
& \cdot F\Big(T-v, (m_q)_{q \in V(T-v)}\Big).
\end{split}
\end{equation}
The first line is bounded by $\abs{v_{m_v}}$, the second line by $\abs{C}$.
The statement now follows directly from the induction hypothesis.
\end{proof}

\noindent
{\bf Acknowledgement. } This work is supported by Deutsche Forschungsgemeinschaft (DFG, German Research Foundation) under Germany's Excellence Strategy  EXC-2181/1 - 390900948 (the Heidelberg STRUCTURES Cluster of Excellence). 

\newcommand{\reftitel}[1]{{\em #1}, }

\end{document}